\documentclass[11pt]{article}

\usepackage[margin=1in]{geometry}
\usepackage{bigints}
\usepackage{tikz}
\usepackage{bbm}
\usetikzlibrary{decorations.pathreplacing,angles,quotes}
\usetikzlibrary{arrows}
\usepackage{comment}
\usepackage{multirow}
\usepackage{enumitem}
\usepackage{float}
\usepackage{bm}
\usepackage{multirow}
\usepackage{graphicx}
\usepackage[noline,boxed,linesnumbered, noend]{algorithm2e}
\usepackage{latexsym}
\usepackage{amsmath,amsfonts,amssymb}
\usepackage{verbatim}
\usepackage{amsthm}
\usepackage[framemethod=tikz]{mdframed}

\makeatletter
\newtheorem*{rep@theorem}{\rep@title}
\newcommand{\newreptheorem}[2]{%
\newenvironment{rep#1}[1]{%
 \def\rep@title{#2 \ref{##1}}%
 \begin{rep@theorem}}%
 {\end{rep@theorem}}}
\makeatother

\SetKwInOut{Initialization}{Initialization}

\usetikzlibrary{patterns}


\newtheorem{definition}{Definition}
\newtheorem{theorem}{Theorem}
\newtheorem{lemma}[theorem]{Lemma}
\newtheorem{remark}{Remark}
\newtheorem{observation}{Observation}

\newtheorem{proposition}{Proposition}

\newreptheorem{proposition}{Proposition}
\allowdisplaybreaks[4]

\newreptheorem{lemma}{Lemma}

\usepackage[framemethod=tikz]{mdframed}
\newmdtheoremenv[backgroundcolor=red!10,outerlinecolor=black,innertopmargin = \topskip,splittopskip = \topskip,ntheorem = true,skipabove = \baselineskip,skipbelow = \baselineskip,roundcorner=4]{Que}{Question}

\newcommand{\eat}[1]{}

\newcommand{\optim}{{\rm OPT}}

\let\oldnl\nl
\newcommand{\nonl}{\renewcommand{\nl}{\let\nl\oldnl}}

\newcommand{\algalg}{{\rm ALG}}
\newcommand{\algsol}{{\rm SOL}}
\newcommand{\algsla}{{\rm SLA}}
\newcommand{\algma}{{\rm MA}}
\newcommand{\algbpsla}{{\rm BPSLA}}
\newcommand{\algbpma}{{\rm BPMA}}


\author{Yossi Azar\thanks{Tel Aviv University. Email: azar@tauex.tau.ac.il. Supported in part by the Israel Science Foundation (grant No. 2304/20 and grant No. 1506/16).}
\and Runtian Ren\thanks{Tel Aviv University. Email: runtianren@mail.tau.ac.il.}
\and Danny Vainstein\thanks{Tel Aviv University. Email: dannyvainstein@gmail.com}}

\date{}
\title{The Min-Cost Matching with Concave Delays Problem}

\begin{document}
\maketitle
\begin{abstract}
We consider the problem of online min-cost perfect matching with concave delays. We begin with the single location variant. Specifically, requests arrive in an online fashion at a single location. The algorithm must then choose between matching a pair of requests or delaying them to be matched later on. The cost is defined by a concave function on the delay. Given linear or even convex delay functions, matching any two available requests is trivially optimal. However, this does not extend to concave delays. We solve this by providing an $O(1)$-competitive algorithm that is defined through a series of delay counters.

Thereafter we consider the problem given an underlying $n$-points metric. The cost of a matching is then defined as the connection cost (as defined by the metric) plus the delay cost. Given linear delays, this problem was introduced by Emek et al. and dubbed the Min-cost perfect matching with linear delays (MPMD) problem. Liu et al. considered convex delays and subsequently asked whether there exists a solution with small competitive ratio given concave delays. We show this to be true by extending our single location algorithm and proving $O(\log n)$ competitiveness. Finally, we turn our focus to the bichromatic case, wherein requests have polarities and only opposite polarities may be matched. We show how to alter our former algorithms to again achieve $O(1)$ and $O(\log n)$ competitiveness for the single location and for the metric case.
\end{abstract}

\section{Introduction}
In recent years, many well-known commercial platforms for matching customers and suppliers have emerged, an example of such is the ride-sharing platform - Uber.  
The suppliers first register on the platforms for the purpose of providing services to customers. The platforms are then in charge of assigning customers to the suppliers. In this way, the customers, the suppliers and the platforms may all benefit. Thanks to the rapid internet development, many of these platforms match suppliers and customers in a real time fashion. Typically, once a customer demands a request of service, the platform will assign an available supplier to this customer immediately. However, this may not always be the case - it may happen that at the time of the customer's request, there is a shortage of suppliers that meet the customer's demands (e.g., in the case of Uber, this may happen during rush hour). This usually results in a delayed response from the platform. Unfortunately, as the delay increases, so does the customer's frustration. Analogously, a supplier may also become frustrated if they are forced to wait for an assignment from the platform. 
This all boils down to the following issue faced by the platform: how can the platform match customers with suppliers in a way that the total delay (both with respect to the customer and with respect to the supplier) is minimized?

A similar issue is also faced by the online gaming platforms, which provide matching services among players to form gaming sessions. On such platforms, players arrive one by one with the intention of participating in a gaming session with other players. Typically, a gaming session consists of two players (e.g., chess, go, etc). The platform is thus required to match the players into pairs. Generally, players are willing to accept a delay of some sort. However, this too, has its limits. This results in the same issue as before: how can the gaming platforms match players in a way that the total delay time is minimized?

In this paper, we define the following problems of single location online min-cost perfect matching with delays to model the above issues. Formally, the input is a set of requests arriving in an online fashion at a single location. In the monochromatic setting, any two requests can be matched into a pair (as in the gaming platforms); in the bichromatic setting, each request has a polarity (either positive or negative) and only requests of different polarities may be matched (as in supplier-customer platforms). At each point in time, the algorithm must choose between matching a pair of requests or delaying them to be matched at a later time. Let $D(t): t \to \mathbb{R}$ denote the delay function. The cost incurred by a matching is defined as the sum of delay times: $\sum_r D(m(r) - a(r))$ where $a(r)$ and $m(r)$ denote the request's arrival and matching times. The goal is then to minimize this value.


Clearly, if $D(\cdot)$ is convex (and in particularly linear), simply matching the requests greedily (i.e., delay only if you must) yields an optimal matching in both settings. However, if $D(\cdot)$ is concave, this algorithm leads to unbounded competitiveness.\footnote{Note that concave delays may indeed be witness in real life scenarios. For example, consider a ride sharing platform. When  passengers request a ride, they shall be eager get assigned a car immediately. Thus, their frustration may increase rather fast at the beginning. However, if they are not assigned a car for a while, they may be inclined to further wait without a large increase in their frustration (since they have already waited for a long time).} To see this, simply consider the case where $2m$ requests arrive at times $\{0\} \cup \{i, i+\varepsilon\}_{i = 1}^{2m-1} \cup \{2m\}$ and $D(t) = t$ when $t \leq 1$ and $D(t) = 1$ when $t > 1$ (in the bichromatic setting, all the requests with odd indexes are positive and the other requests are negative). While clearly the greedy matching results in a cost of $m$, the optimal solution may match the first request to the last and the rest greedily - resulting in a cost of $1 + (m-1)\varepsilon$. 
Thus, the greedy algorithm's competitive ratio is at least $m$ (by setting $\varepsilon$ small enough), which can be arbitrarily large. This suggests that in order to overcome this lower bound, one must sacrifice present losses in order to benefit in the future.

Next, we generalize the above problems by considering the metric case, where the requests may appear at different locations on an underlying $n$-point metric. If two requests located at different places are matched into a pair, then in addition to their delay costs, a connection cost (as defined by the metric) is also incurred. The goal is then to minimize the total delay cost plus the total connection cost of the matching. We denote these problems as Concave MPMD and Concave MBPMD (in the monochromatic/bichromatic settings respectively). The single location variants we denote simply as the Single Location Concave MPMD (or MBPMD).


In fact, the Concave MBPMD problem can be better used to capture realistic issues faced by ride-sharing platforms: once a customer and supplier (driver) are matched, the drive must first pick the customer up, thereby incurring a cost relative to the distance between the two. This is captured by the connection cost introduced by the metric. The Concave MPMD problem may also help to better capture realistic scenarios. Consider the game of Chess as an example. In Chess, every player has a rating (based on previous matches, among other features) and players usually tend to prefer playing matching against similarly rated players. Therefore, it is a goal of the platform to match players while taking their different features into account (which may be captured through a metric) while minimizing the total delay time.

The MPMD problem was first defined by Emek et al. \cite{Online_matching_haste_makes_waste} who considered the case where $D(\cdot)$ is linear. Later, Liu et al. \cite{Impatient_Online_Matching} considered the problem of Convex MPMD where $D(\cdot)$ is convex and further posed the question of whether there exists a solution with a small competitive ratio for the concave case.
In this paper, we provide an affirmative answer to this question by providing such algorithms for both the Concave MPMD and the Concave MBPMD problems.\\

\noindent \textbf{Our Contributions:} In this paper we provide the following results.
\begin{itemize}
    \item For both the monochromatic and bichromatic single-location problems, we present counter-based $O(1)$-competitive deterministic algorithms (one for each case).
    \item For both the monochromatic and bichromatic metric problems, we present an $O(\log n)$-competitive randomized algorithms (one for each case) that generalize our Single Location algorithm in order to handle the connection costs.
\end{itemize}

\noindent \textbf{Our Techniques:} As a first step in tackling any of the formerly discussed problems, we first reduce our delay function $D(\cdot)$ to a piece-wise linear function with exponentially decreasing slopes, $f(\cdot)$. Thereafter we consider each of the problems separately but incrementally (using the techniques introduced in the single-location sections towards the metric sections).
\begin{itemize}
    \item \textbf{Monochromatic}: For this problem we introduce an $O(1)$-competitive algorithm. Recall that the optimal strategy for linear delays is simply to match any two available requests immediately. Unfortunately, this fails when considering concave delays. However, inspired by this observation, the key idea of our algorithm is to categorize our requests (based on the linear pieces of $f(\cdot)$) and then match any two available requests of the same category immediately. In order to do so, we utilize a set of counters, each counter corresponding to a linear piece of $f(\cdot)$, and we let the requests traverse through these counters, delaying them at each counter for a preset time. We then analyze the algorithm's performance by charging its increase in counters to the delay incurred by each request by the optimal solution (separately for requests that incur a large or small delay compared to the size of the counter). \\
    We then introduce an underlying metric. As a first step we embed our metric into HSTs (formally defined later). Now, in contrast to many prior algorithms that directly solve the problem with respect to the metric defined by the HST, we create a novel object that mixes the metric and delay in the following way. We first create edge counters defined by the edges in the HST. We then create delay counters, as defined in the single location. Finally, we combine the two sets of counters and define a new rooted tree of counters. We then use this rooted tree object in order to dictate the way in which requests should traverse the counters. We note that although the resulting tree may be unbounded with respect to the size of the original metric, leveraging the techniques from the single location case, we show that the algorithm is in fact bounded with respect to $h$ - the height of the original HST. We then apply the techniques as seen in 
    \cite{Online_matching_haste_makes_waste, Polylogarithmic_Bounds_on_the_Competitiveness_of_Min_cost_Perfect_Matching_with_Delays, Min_Cost_Bipartite_Perfect_Matching_with_Delays} in order to prove that our algorithm is in fact $O(\log n)$-competitive with respect to general metrics.
    
    
    \item \textbf{Bichromatic}: For this problem we introduce an $O(1)$-competitive algorithm as well. Again, we define a set of counters and let the requests traverse through them. However, due to the constraint that only requests of opposite polarities may be matched, the algorithm splits the counters in this case to positive and negative counters. In contrast to the monochromatic case, the restriction here may cause a build up of same-polarity requests on a single counter. As it turns out, the ideal strategy in this case is to only let one request at a time, traverse to the next counter. Finally, the algorithm only matches requests if they appear on the corresponding opposite-polarity counters. To analyze the algorithm's performance we use the weighted discrepancy potential function (e.g., see \cite{Min_Cost_Bipartite_Perfect_Matching_with_Delays}) to show a guarantee on the momentary delay cost of the algorithm. Thereafter, we charge the different types of counter increase to the delay incurred by the different requests in the optimal solution.\\
    Next, we introduce an underlying metric. Again, we first embed our metric into HSTs. As in the monochromatic case, we make use of the same rooted tree of counters (both edge and delay) in order to guide our requests through the different counters. Once again, we show that the algorithm's competitive ratio is bounded with respect to the original HST metric. Finally, we follow the same steps as in the monochromatic case to convert this bound to a bound for general metrics, resulting in an $O(\log n)$-competitive algorithm for the bichromatic metric case.

\end{itemize}

We note that these problems seem to be individually relevant, in that neither the monochromatic metric problem is a generalization nor a special case of the bichromatic metric problem. However, both cases may be solved by using the techniques used in the corresponding single location problems. In the single location case, however, the monochromatic problem can be directly inferred from the bichromatic problem.

\noindent \textbf{Related Work:} The problem of MPMD with linear delays was first introduced by Emek et al. \cite{Online_matching_haste_makes_waste}. In their paper they presented a randomized algorithm that achieves a competitive ratio of $O(\log^2 n + \log \Delta)$, where $\Delta$ is the ratio between the maximum and minimum distances in the underlying metric. Later, Azar et al. \cite{Polylogarithmic_Bounds_on_the_Competitiveness_of_Min_cost_Perfect_Matching_with_Delays} improved the competitive ratio to $O(\log n)$ thereby removing the dependence of $\Delta$ in the competitive ratio. They further presented a lower bound of $\Omega(\sqrt{\log n})$ on the competitiveness of any randomized algorithm. 
This lower bound was later improved to $\Omega(\log n / \log \log n)$ by Ashlagi et al. \cite{Min_Cost_Bipartite_Perfect_Matching_with_Delays} thereby nearly closing the gap.

In later work, Liu et al.\cite{Impatient_Online_Matching} considered the MPMD problem with convex delays. They first showed a lower bound of $\Omega(n)$ on the competitive ratio of any deterministic algorithm. Specifically, the lower bound constituted of an $n$-point uniform metric and a delay function of the form $D(t) = t^{\alpha}$ for $\alpha > 1$. They then went on to present a deterministic algorithm that achieves a competitive ratio of $O(n)$ for any uniform metric space and any delay function of the form $D(t) = x^\alpha$. 

Another related line of work considers MBPMD, the bipartite version of MPMD with linear delays, where requests may only be matched if they are of opposite polarity. In this case, Ashlagi et al.\cite{Min_Cost_Bipartite_Perfect_Matching_with_Delays} presented a lower bound of $\Omega(\sqrt{\log n / \log \log n})$ on any randomized algorithm. They further presented two algorithms achieving a competitive ratio of $O(\log n)$ - the first is an adaptation of Emek et al.'s \cite{Online_matching_haste_makes_waste} algorithm to the bipartite case and the second is an adaptation of Azar et al.'s \cite{Polylogarithmic_Bounds_on_the_Competitiveness_of_Min_cost_Perfect_Matching_with_Delays} algorithm. 

The deterministic variants of MPMD and MBPMD with linear delays have also managed to spark great research interest. For further reading see \cite{A_Match_in_Time_Saves_Nine:_Deterministic_Online_Matching_with_Delays, A_Primal_Dual_Online_Deterministic_Algorithm_for_Matching_with_Delays, AzarF2018, EmekSW2019}. 

Finally, we note that although we consider the problem of matching with delays, many new online problems were also considered using the notion of delays. These problems have extensive applications in fields that range from operations management, to operating systems and supply chain management. Examples of such problems include the online services with delays problem \cite{AzarGGP2017, BienkowskiKS2018, azar2019general}, the multi-level aggregation problem  \cite{bienkowski2016online, buchbinder2017depth, azar2019general} and many more. For a more extensive list of such problems, see \cite{carrasco2018online, gupta2020caching, On_bin_packing_with_clustering_and_bin_packing_with_delays, The_Price_of_Clustering_in_Bin-Packing_with_Applications_to_Bin_Packing_with_Delays, azar2020setcoverd, azar2020beyond}.

\noindent \textbf{Paper Organization:} We start by introducing some useful notations and preliminaries in Section \ref{section-notation}. 
Then, we consider the problem in the monochromatic setting. 
We first present an $O(1)$-competitive deterministic algorithm for the single location case in Section \ref{section.single_location} and then generalize our idea to design an $O(\log n)$-competitive algorithm for the metric case in Section \ref{section.metric}. 
After that, we consider the problem in the bichromatic setting.
Similarly, in Sections \ref{section.bipartite.single_location} and \ref{section.bipartite.metric}, we present an $O(1)$-competitive deterministic algorithm and an $O(\log n)$-competitive algorithm in the single location case and the metric case respectively. 
We conclude by stating a few remarks and related open problems in Section \ref{section.conclusion}. 
\section{Notations and Preliminaries}
\label{section-notation}
We define our delay function $D(\cdot)$, such that $D(0) = 0$ and it is concave (i.e., $\forall x_1, x_2 \geq 0 \text{ and } \forall \alpha \in [0,1]: D((1-\alpha)x_1 + \alpha x_2) \geq (1-\alpha)D(x_1) + \alpha D(x_2)$). Furthermore we assume that $D(\cdot)$ is both continuous and monotonically increasing. Note that this guarantees that the one sided derivatives exist at any point. We further assume that all derivatives are positive and bounded. Finally, to ensure that the optimal matching does not keep requests unmatched indefinitely, we assume that $\lim_{t \rightarrow \infty} D(t) = \infty$.


We introduce several notations that will aid us throughout the paper. A function $f(\cdot)$ is said to be piece-wise linear if the x-axis may be partitioned such that in each interval $f(\cdot)$ is linear. The following lemma states that in fact it is enough to consider delay functions that are piece-wise linear with exponentially decreasing slopes (i.e., $\alpha_i \geq 2 \alpha_{i+1}$ where $\alpha_i$ and $\alpha_{i+1}$ denote the slopes of two consecutive linear pieces). 

\begin{lemma}
\label{lemma.approximate_concave_with_piece_wise_linear}
There exists a piece-wise linear function $f(\cdot)$ with exponentially decreasing slopes such that for any $x \geq 0$ we have $f(x) \leq D(x) \leq 2f(x)$.
\end{lemma}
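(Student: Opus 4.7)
My plan is to build $f$ by an inductive choice of breakpoints $0 = y_0 < y_1 < y_2 < \cdots$ at which $f$ will agree with $D$, and to define $f$ on each interval $[y_i, y_{i+1}]$ to be the chord of $D$ joining $(y_i, D(y_i))$ and $(y_{i+1}, D(y_{i+1}))$. Because $D$ is concave, every such chord lies below $D$, so the lower bound $f \leq D$ comes for free. All the work therefore goes into choosing the $y_i$'s so that (i) the successive chord slopes $\beta_i$ satisfy $\beta_i \geq 2\beta_{i+1}$, and (ii) the chord stays within a factor of two of $D$, i.e.\ $D \leq 2f$.

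Concretely, at stage $i$ I would let $\alpha_i := D'(y_i^+)$, which is finite and positive by the standing assumptions on $D$, and take $y_{i+1}$ to be the largest $y > y_i$ for which the chord slope $(D(y) - D(y_i))/(y - y_i)$ is at least $\alpha_i / 2$. Because chord slopes from a fixed point are non-increasing in $y$ for concave $D$, this set is an interval, and by continuity the chord slope on $[y_i, y_{i+1}]$ equals exactly $\beta_i = \alpha_i/2$. The mean value theorem then gives some $\xi_i \in (y_i, y_{i+1})$ with $D'(\xi_i) = \alpha_i/2$, and monotonicity of $D'$ forces $\alpha_{i+1} \leq \alpha_i / 2$, hence $\beta_{i+1} \leq \beta_i / 2$ as required.

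For the upper bound $D \leq 2f$ on each piece I would invoke the tangent-line inequality $D(x) \leq D(y_i) + \alpha_i(x - y_i)$ (valid everywhere because $D$ is concave with right-derivative $\alpha_i$ at $y_i$) and compare with $2f(x) = 2 D(y_i) + \alpha_i (x - y_i)$, which dominates the tangent bound since $D(y_i) \geq 0$. In the degenerate stage where the defining supremum is $+\infty$, meaning the chord slopes from $y_i$ never drop below $\alpha_i / 2$, I would simply extend $f$ linearly with slope $\alpha_i/2$ on $[y_i, \infty)$ and reuse the same tangent argument, terminating the construction.

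The main obstacle, though it is technical rather than conceptual, is verifying that the $y_i$'s eventually exhaust $[0,\infty)$. If the process neither terminates nor has $y_i \to \infty$, then $y_i \to L$ for some finite $L$ with $\alpha_i = \alpha_0 / 2^i \to 0$; monotonicity of $D'$ then forces $D'(L^-) = 0$, hence $D' \equiv 0$ on $[L, \infty)$, which contradicts $\lim_{t \to \infty} D(t) = \infty$. Handling one-sided derivatives, potential jumps in $D'$, and this covering argument cleanly is the only fiddly part of the write-up.
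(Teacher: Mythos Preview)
Your proposal is correct and is essentially the same construction as the paper's: both start at $(0,D(0))$, draw a line of slope $D'(0^+)/2$ until it re-meets $D$, and iterate, using concavity for $f\le D$ and the tangent-line bound for $D\le 2f$. The only substantive addition in your write-up is the covering argument (that the breakpoints cannot accumulate at a finite $L$), which the paper omits; one small slip is that you write $\alpha_i=\alpha_0/2^i$ where only $\alpha_i\le\alpha_0/2^i$ is guaranteed, but the conclusion $\alpha_i\to 0$ and the contradiction with positive derivatives go through unchanged.
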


Therefore, we will assume henceforth that our delay function is piece-wise linear (while incurring a multiplicative loss of 2). Formally, we define the function such that for any $i=0,1,2, \ldots,$ our function is linear during the interval $(x_i,x_{i+1})$ with slope $\alpha_i$. By Lemma \ref{lemma.approximate_concave_with_piece_wise_linear} we may assume $\alpha_i \geq 2 \alpha_{i+1}$ for all $i \in \mathbbm{N}$. We further denote $\ell_i = x_{i+1} - x_i$ and $y_i = f(x_{i+1}) - f(x_i)$. See Figure \ref{figure.piecewisedelay} for a pictorial example.

\begin{figure}[h]
\centering
\includegraphics[width= 7cm]{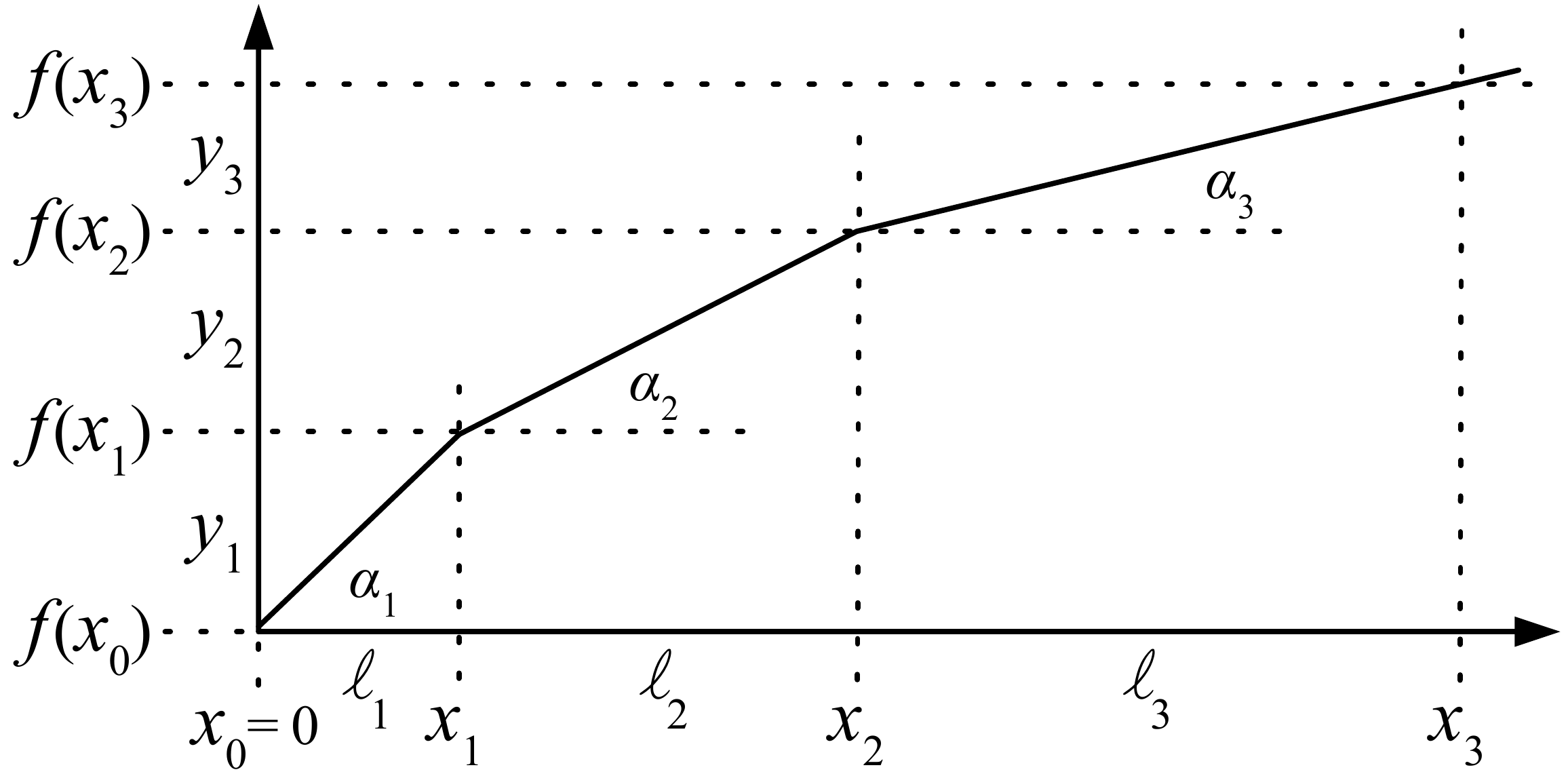}
\caption{A piece-wise linear delay function.}
\label{figure.piecewisedelay}
\end{figure}

We also make use of Hierarchical Separated Trees (HST's) in our paper. 

\begin{definition}
A weighted $\sigma$-HST is a tree metric $T = (V_T, E_T, w_T)$, defined with respect to some general metric $G=(V_G,E_G,w_G)$. The leaves of $T$ are $V_G$. Furthermore, $w_T$ defines a $\sigma$-hierarchical separation - formally, the weight between any node $v$ and any of its children is at most $1 / \sigma$ times the weight of the edge between $v$ and its parent. The distance between any two points in $V_G$ is then defined as the distance between these two points in $T$.
\end{definition}

Throughout our paper we denote the optimal matching as $\optim$. Furthermore, given an algorithm, we will abuse notation and denote by $\algalg$ both the matching produced by the algorithm and its cost.
\section{Matching on a Single Location}
\label{section.single_location}

In this section we consider the Single Location Concave MPMD problem. We begin with several notions that will be used to define our algorithms. Our algorithm makes use of counters to decide whether to delay or match requests. Specifically, we define a counter for every linear section in the delay function. We will refer to these counters as $z_1, z_2, \ldots$ Our algorithm continually moves requests across these counters; as such, we will say that ``a request is associated with a counter" and that ``a counter contains a request" at a specific time if the 
request is associated with that counter at that moment. We are now ready to define our algorithm.

First, we formally define the counter $z_k$ such that it has a capacity of $y_k$ (as defined by the delay function) and a slope of $\alpha_k$. Now, once a request arrives we add the request to $z_1$. Next, if at any point in time a counter contains 2 requests, match them and remove them from the counters. Furthermore, at any point in time, we consider all counters $z_k$ simultaneously and \textit{increase counter $z_k$ at a rate equal to $z_k$'s slope ($\alpha_k$) if and only if (1) there is a request that is associated with $z_k$ and (2) the number of requests associated with the counters $z_1, \ldots, z_{k-1}$ is even}. Finally, if at any moment any counter $z_k$ reaches its capacity, move its associated request to the next counter and reset $z_k$ to 0. Note that the moved request might move to a counter, $z_{k+1}$ that has been partially filled - in such a case the new request will continue to fill $z_{k+1}$ from that point. (This is indeed necessary since otherwise the algorithm fails on the "bad" example given in the introduction).

We note that each counter may have at most a single pending request associated with it and therefore a request may be pending if and only if it is associated with some counter. We denote this algorithm as the Single-Location-Algorithm ($\algsla$). See Algorithm \ref{algorithm.single_location} in the Appendix for a formal definition.

\begin{remark}
We remark that $\algsla$ will indeed match all requests since we remove requests from the counters only if they are paired and therefore if the counters contain a single request, another request must arrive in the future. Furthermore, the request associated with the lowest counter will always increase while the request associated with the second lowest counter will not increase. Therefore, the last 2 remaining requests will always match.
\end{remark}

\noindent We first state the main theorem of this section.

\begin{theorem}
\label{theorem.single_location_O(1)_competitive}
$\algsla$ is $O(1)$-competitive for any concave delay function.
\end{theorem}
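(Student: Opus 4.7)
My plan is to prove this via a charging argument that relates $\algsla$'s counter activity to the delay cost of $\optim$. First, I would bound $\algsla$'s total delay cost by $O(1)$ times the sum over all counters of their total increment. At any instant, let the pending requests sit on counters $z_{k_1}<z_{k_2}<\cdots<z_{k_m}$; by the algorithm's rule, only the odd-indexed ones currently increment, contributing total rate $R=\sum_{i\text{ odd}}\alpha_{k_i}$. Using the geometric decay $\alpha_{k+1}\le\alpha_k/2$ together with an amortization relating each request's actual elapsed time to the counters it has traversed, I would show that $\algsla$'s instantaneous delay rate is at most $O(1)\cdot R$. Integrating yields $\algsla=O(1)\cdot \sum_k(\text{total increment of }z_k)$.

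Next, I would charge each completed fill of $z_k$ (contributing $y_k$) to a portion of $\optim$'s cost. Let $r$ be the request sitting on $z_k$ during such a fill, and set $t_r=m_{\optim}(r)-a(r)$. If $t_r\ge x_{k+1}$ (\emph{large OPT delay}), then $\optim$'s cost on $r$ is at least $f(x_{k+1})=\sum_{j\le k}y_j\ge y_k$, so charging $y_k$ to $r$'s OPT cost blows it up by only $O(1)$. If $t_r<x_{k+1}$ (\emph{small OPT delay}), $\optim$ has matched $r$ to some partner $r'$ quickly whereas $\algsla$ has not; I would redirect the charge via the alternating-path structure of $M_{\algsla}\triangle M_{\optim}$ by walking along the alternating component containing $r$. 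The next $\optim$ edge along the path must have large delay (otherwise $\algsla$ and $\optim$ would have agreed on a locally cheaper matching), and that edge absorbs the charge.

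Finally, I would verify each piece of $\optim$'s delay absorbs only $O(1)$ total charges. Case A contributions are $O(1)$ per OPT cost by construction; Case B redirections along an alternating chain accumulate geometrically by the decay of $y_k$, so each OPT pair absorbs at most $O(1)$ cumulative chained charges. The main technical obstacle I anticipate is Case B of the charging step: formalizing the alternating-path redirection to ensure (i) a suitable large-delay OPT target always exists, (ii) successive redirections strictly decrease in scale (in terms of the relevant $y_k$), and (iii) the cumulative chained charges converge to $O(1)$. The remainder of the proof reduces to routine manipulation of the piecewise-linear structure of $f$ and standard geometric sums.
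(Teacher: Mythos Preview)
Your first step (bounding $\algsla$ by $O(1)$ times the total counter increment) is essentially the paper's Proposition~\ref{proposition.single_location.bound_algorithm_by_counters}, and your sketch is close enough: the paper also splits each request's actual delay into the part absorbed by counters it has already traversed plus a residual bounded by the current counter's slope (Lemma~\ref{lemma.single_location.real_delay_bounded_by_aux}), and then uses the geometric decay of slopes exactly as you suggest. So this part is fine, if vague.

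The genuine gap is in your Case~B. The parenthetical justification ``otherwise $\algsla$ and $\optim$ would have agreed on a locally cheaper matching'' does not hold: $\algsla$ is an online counter-based procedure with no local-optimality property relative to $\optim$, so there is no reason the next $\optim$ edge along the alternating path must have large delay. Concretely, during a single fill of $z_k$ from $0$ to $y_k$, several different requests may occupy $z_k$ in succession (each matched by an arrival from $z_{k-1}$, leaving the counter value intact), so ``the request $r$ sitting on $z_k$'' is not even well defined, and any of those requests may have arbitrarily small $\rho$ in $\optim$. Walking the alternating component of $M_{\algsla}\triangle M_{\optim}$ gives you no control over the scale of the $\optim$ edges you meet, so neither (i) existence of a large-delay target nor (ii) strict decrease of scale along redirections follows. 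Your own parenthetical acknowledgement that this is ``the main technical obstacle'' is well placed; as stated, the redirection scheme does not go through.

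The paper avoids alternating paths entirely. For each counter $z_k$ it partitions time into phases $I^k_i$ ending whenever $z_k$ resets, and observes (Lemma~\ref{lemma.single_location.interval_parity_of_SLA}) that $z_k$ only increases at moments lying in the \emph{odd-subinterval} of the current phase, i.e., when an odd number of requests have arrived since the phase began. A parity argument (Lemmas~\ref{lemma.single_location.last_interval_parity_of_OPT} and~\ref{lemma.single_location.middle_interval_parity_OPT}) then shows that for at least one interval of every consecutive pair, at every such moment $\optim$ must be \emph{live} with respect to the interval: some request in (or matched into) that interval is still unmatched by $\optim$. This replaces your Case~B entirely: either some request in the interval has $\optim$-delay level $\ge k$ (your Case~A), or $\optim$ is paying momentary delay at rate $\ge\alpha_{k+1}$ throughout the fill. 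The two cases together give the charging scheme $f_k(t)$ in the proof of Proposition~\ref{proposition.single_location.bound_counters_by_OPT}, and the $O(1)$ bound on total charge per request then follows from geometric sums exactly as you anticipated. The key structural insight you are missing is this interval/parity/liveness machinery; it is what makes the ``small OPT delay'' case work without any reference to the symmetric difference of the two matchings.
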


Throughout this section we abuse notation and denote by $\algsla$ both the matching produced by the algorithm and its cost. We do the same with respect to $\optim$, the optimal matching. In order to prove Theorem \ref{theorem.single_location_O(1)_competitive} we use two steps. We first upper bound our algorithm's cost by the actual increase in its counters throughout the input. We then lower bound $\optim$'s cost by the increase in the appropriate counters. To formally define the increase in counters we introduce the following definition.

\begin{definition}
For a counter $z_k$ define $z_k'(t)$ to be $\alpha_k$ if $z_k$ increases at time $t$ and 0 otherwise. Furthermore, for a request $r$ define $z_r'(t)$ to be $\alpha_k$ if $r$ is associated with $z_k$ at time $t$ and 0 if the request is not associated with any counters.
\end{definition}

\noindent Therefore, the total increase in counters is in fact $\sum_k \int_t z_k'(t)dt$.

\begin{remark}
We note that $z_k$ is zeroed whenever a request moves up from $z_k$ to $z_{k+1}$. Therefore, $\int_t z_k'(t)dt$ might be much larger than the counter's level at a single point in time.
\end{remark}

\subsection{Upper Bounding $\algsla$'s Cost}

In this subsection we would like to upper bound $\algsla$ by the overall increase in its counters. In order to do so we first introduce the following notations that will add us in our proof. Recall that $[x_{i-1}, x_i)$ denotes the $i$'th linear piece of the delay function and that given a request $r$, $a(r)$ denotes its arrival time.

\begin{definition}
\label{definition.single_location.delta_t(r)_k_t(r)}
Given a request $r$ and time $t$ when $r$ is unmatched by $\algsla$ at this moment, we define
\begin{itemize}
    \item $\delta_t(r)$: $r$'s delay level at time $t$, i.e., $t - a(r) \in [x_{\delta-1}, x_\delta)$.
    \item $k_t(r)$: the delay counter $r$ is associated with at time $t$.
\end{itemize}
\end{definition}

\begin{definition}
\label{definition.single_location.d'_r(t)_s_r(t)}
Given a request $r$ and time $t$, we define
\begin{itemize}
    \item $d'_r(t)$: the momentary delay incurred by $r$ at time $t$ with respect to $\algsla$, i.e., $d'_r(t) = \alpha_{\delta_t(r)}$ if $r$ is unmatched at time $t$ and $d'_r(t) = 0$ otherwise.
    \item $s_r(t)$: the slope of the counter that $r$ is associated with at time $t$, i.e., $s_r(t) = \alpha_{k_t(r)}$ if $r$ is unmatched at time $t$ and $s_r(t) = 0$ otherwise.
\end{itemize}
\end{definition}

Note that Definition \ref{definition.single_location.d'_r(t)_s_r(t)} is well defined in the sense that a request is unmatched by $\algsla$ if and only if it is associated with some counter. Recall that $m(r)$ denotes the time in which a request $r$ was matched by $\algsla$. Further note that seemingly $d'_r(t) \leq s_r(t)$ at any moment $t$ since the delay counters sometimes "freeze" whereas the request's delay does not. However, it may be the case that when $r$ arrives its counters are already partially filled - in such a case $r$ will immediately move to a higher counter and at that moment we would have $d'_r(t) > s_r(t)$. Thus, we introduce the following lemma.

\begin{lemma}
\label{lemma.single_location.real_delay_bounded_by_aux}
$\sum_r \int_t d_r'(t)dt \leq \sum_r  \sum_{k = 1}^{k_{m(r)}(r)-1} y_k  + \sum_r  \int_t s_r(t)dt$.
\end{lemma}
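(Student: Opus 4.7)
The plan is to establish the bound one request at a time; summing the per-request inequality over all $r$ then yields the lemma. Fix a request $r$ and set $K := k_{m(r)}(r)$ and $S := m(r) - a(r)$. Because $r$ only moves upward through counters---never returning to a lower one---its lifetime $[a(r), m(r)]$ partitions into consecutive intervals, one per counter visited; write $\tau_k$ for the total length of time $r$ is associated with $z_k$, so that $\sum_{k=1}^{K} \tau_k = S$. Directly from the definitions of $d_r'$ and $s_r$, I get the two identities $\int_t d_r'(t)\,dt = f(S)$ and $\int_t s_r(t)\,dt = \sum_{k=1}^{K} \alpha_k \tau_k$. Recognizing $\sum_{k=1}^{K-1} y_k = f(x_{K-1})$, the per-request goal reduces to
\begin{equation*}
f(S) \;\le\; f(x_{K-1}) + \sum_{k=1}^{K} \alpha_k \tau_k.
\end{equation*}

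I will split into two cases depending on whether $S$ passes the breakpoint $x_{K-1}$. If $S \le x_{K-1}$, monotonicity of $f$ immediately yields $f(S) \le f(x_{K-1})$, and the inequality holds because the remaining sum is non-negative. If $S > x_{K-1}$, I will invoke concavity of $f$ at $x_{K-1}$: its right derivative there equals the slope of the $K$-th piece, $\alpha_K$, and the supporting line gives $f(S) \le f(x_{K-1}) + \alpha_K(S - x_{K-1}) \le f(x_{K-1}) + \alpha_K S$. To absorb the trailing $\alpha_K S$ into $\int s_r\,dt$, I will use the algorithm's defining property that the slopes are non-increasing, so $\alpha_K \le \alpha_k$ for every $k \le K$; this yields $\alpha_K S = \sum_{k=1}^{K} \alpha_K \tau_k \le \sum_{k=1}^{K} \alpha_k \tau_k = \int_t s_r(t)\,dt$, closing the case.

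The extra $\sum_{k=1}^{K-1} y_k$ on the right-hand side is exactly the budget needed to pay for the ``skip-ahead'' phenomenon flagged just before the lemma: when $r$ walks into a counter that was partially filled by earlier requests, its counter index can temporarily outrun its delay level, and no pointwise bound $d_r'(t) \le s_r(t)$ is available. The main conceptual obstacle I expect is realizing that no per-counter bookkeeping of freezing times, partial-fill levels, or individual skip distances is necessary: once the clean decomposition $\int s_r\,dt = \sum_k \alpha_k \tau_k$ is in hand, the combination of concavity with the slope-monotonicity $\alpha_K \le \alpha_k$ for $k \le K$ charges every skip-ahead deficit to $f(x_{K-1})$ in aggregate, delivering the per-request inequality that sums to the lemma.
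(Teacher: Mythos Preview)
Your proof is correct and follows essentially the same approach as the paper: both argue per request, split on whether $S = m(r)-a(r)$ exceeds $x_{K-1}$ (equivalently, whether $k_{m(r)}(r) > \delta_{m(r)}(r)$), dispatch the small-$S$ case by monotonicity of $f$, and in the large-$S$ case bound the excess delay beyond $f(x_{K-1})$ by $\alpha_K$ times the full lifetime and then absorb this into $\int s_r$ using $\alpha_K \le \alpha_k$ for $k\le K$. Your presentation is slightly crisper in that you explicitly invoke concavity and the clean decomposition $\int s_r\,dt = \sum_k \alpha_k \tau_k$, whereas the paper phrases the same comparison as a ``moment-wise'' bound, but the underlying argument is identical.
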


\begin{proof}
We argue that in fact, for any request $r$, $\int_t d_r'(t)dt \leq \sum_{k = 1}^{k_{m(r)}(r)-1} y_k  + \int_t s_r(t)dt$.

If $k_{m(r)}(r) > \delta_{m(r)}(r)$ then clearly $\int_t d_r'(t)dt \leq \sum_{k = 1}^{k_{m(r)}(r)-1} y_k$. Otherwise, assume $k_{m(r)}(r) \leq \delta_{m(r)}(r)$. The value $\int_t d_r'(t)dt$ constitutes of delay accumulated up to and including the $k_{m(r)}(r)-1$ linear piece (in the delay function) and delay accumulated during the rest of the time. Hence, 
\[
\int_t d_r'(t)dt = 
\sum_{k = 1}^{k_{m(r)}(r)-1} y_k + \int_{x_{k_{m(r)}(r)}}^{m(r) - a(r)} d_r'(t)dt,
\]
where $x_{k_{m(r)}(r)}$ denotes the time that the $k_{m(r)}(r)$'th linear piece begins in the delay function (see Figure \ref{figure.piecewisedelay}).

The value $\int_{x_{k_{m(r)}(r)}}^{m(r) - a(r)} d_r'(t)dt$ is upper bounded moment-wise by $\int_t s_r(t)dt$, since the delay accumulated by $s_r(t)dt$ is at least $\alpha_{k_{m(r)}(r)}$ and the delay accumulated by $d_r'(t)dt$ during that time is at most $\alpha_{k_{m(r)}(r)}$. Thus, $\int_t d_r'(t)dt \leq \sum_{k = 1}^{k_{m(r)}(r)-1} y_k  + \int_t s_r(t)dt$. Summing over all requests,
\begin{align*}
\sum_r \int_t d_r'(t)dt \leq \sum_r  \sum_{k = 1}^{k_{m(r)}(r)-1} y_k  + \sum_r  \int_t s_r(t)dt.    
\end{align*}
\end{proof}

\begin{proposition}
\label{proposition.single_location.bound_algorithm_by_counters}
$\algsla \leq 3 \cdot \sum_k \int_t z_k'(t)dt$.
\end{proposition}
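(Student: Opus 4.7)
The plan is to start from $\algsla = \sum_r \int_t d_r'(t)\,dt$, which follows because $d_r'(t)$ is by definition the rate at which $r$ accrues delay cost under $\algsla$. Applying the preceding lemma reduces the problem to bounding each of $\sum_r \sum_{k=1}^{k_{m(r)}(r)-1} y_k$ and $\sum_r \int_t s_r(t)\,dt$ by $\sum_k \int_t z_k'(t)\,dt$, up to constants $1$ and $2$ respectively; adding them gives the factor $3$.

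For the first term I would swap the order of summation to get $\sum_r \sum_{k=1}^{k_{m(r)}(r)-1} y_k = \sum_k y_k \cdot N_k$, where $N_k = |\{r : k_{m(r)}(r) > k\}|$ is the number of requests that strictly pass beyond $z_k$. Each such request triggers exactly one completion of $z_k$ (it must fill $z_k$ to capacity before being promoted), and each completion contributes exactly $y_k$ to $\int_t z_k'(t)\,dt$ before $z_k$ is reset to $0$. Consequently $\int_t z_k'(t)\,dt \geq y_k \cdot N_k$, and summing over $k$ yields the first bound.

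For the second term I would argue pointwise in $t$. Fix a time $t$ and let $z_{k_1}, z_{k_2}, \ldots, z_{k_m}$, with $k_1 < k_2 < \cdots < k_m$, be the counters currently holding an unmatched request. By the algorithm's activation rule, $z_{k_i}$ is actively increasing iff the number of requests on strictly lower counters, namely $i-1$, is even, i.e.\ iff $i$ is odd. Hence $\sum_k z_k'(t) = \sum_{i \text{ odd}} \alpha_{k_i}$, whereas $\sum_r s_r(t) = \sum_{i=1}^m \alpha_{k_i}$. The decisive observation is the (exponentially) decreasing slope property $\alpha_j \geq 2\alpha_{j+1}$: since $k_i > k_{i-1}$ we have $\alpha_{k_i} \leq \alpha_{k_{i-1}}$ for every even $i$, so pairing each even-indexed slope with its predecessor gives $\sum_{i \text{ even}} \alpha_{k_i} \leq \sum_{i \text{ odd}} \alpha_{k_i}$. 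Therefore $\sum_r s_r(t) \leq 2 \sum_k z_k'(t)$, and integrating over $t$ completes the bound on the second term.

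Combining the two estimates via the previous lemma gives $\algsla \leq \sum_k \int_t z_k'(t)\,dt + 2 \sum_k \int_t z_k'(t)\,dt = 3 \sum_k \int_t z_k'(t)\,dt$. The main obstacle I expect is being careful with the first term's bookkeeping: although the algorithm permits a counter $z_k$ to carry a partial value across distinct requests (the "continue filling" rule), this never hurts the count because every completion of $z_k$ still corresponds to exactly one request moving to $z_{k+1}$ and contributes a full $y_k$ to $\int_t z_k'(t)\,dt$ before the reset; any leftover residual only strengthens the inequality.
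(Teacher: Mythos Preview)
Your proof is correct and follows essentially the same two-step decomposition as the paper: apply Lemma~\ref{lemma.single_location.real_delay_bounded_by_aux}, then bound $\sum_r\sum_{k=1}^{k_{m(r)}(r)-1} y_k$ and $\sum_r\int_t s_r(t)\,dt$ by $1\times$ and $2\times$ the total counter increase respectively. The only noteworthy difference is in the second bound: the paper simply observes that $\sum_r s_r(t)\le 2\alpha_{k_1}$ by the exponential slope decay and that the lowest occupied counter $z_{k_1}$ is always active, whereas you identify exactly which counters are active (the odd-indexed ones) and pair each inactive even-indexed slope with its predecessor; your pairing uses only monotonicity of the slopes rather than the full geometric decay, so it is marginally more general, but the overall structure is the same.
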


\begin{proof}
We first observe that due to the fact that once a request moves up a counter, the former counter's value is reset to 0, we have,
\begin{align}
\label{equation.single_location.depletion_is_bounded_by_real_counter_increase}
\sum_r  \sum_{k = 1}^{k_{m(r)}(r)-1} y_k \leq \sum_r \int_t z'_r(t)dt.    
\end{align}

Next, we argue that $\sum_r \int_t s_r(t)dt \leq 2 \sum_r \int_t z'_r(t)dt$. Due to the fact that $s_r(t) = 0$ if $r$ is unmatched by $\algsla$ at time $t$, by denoting $I(r)$ as all time points for which $r$ is unmatched (by the algorithm), we are guaranteed that, $\sum_r \int_t s_r(t)dt = \sum_r \int_{t \in I(r)} s_r(t)dt$.

Next, denote by $r^*_t$ the unmatched request associated with the lowest-indexed counter at time $t \in \cup_r I(r)$. Recall that there exists at most one unmatched request per counter and the counters’ slopes decrease exponentially. According to our algorithm $\algsla$, the counter containing $r^*_t$ is lowest and therefore it increased at time $t$, guaranteeing that $s_{r^*_t}(t) = z'_{r^*_t}(t)$. We thus have,
\begin{align}
\label{equation.single_location.counter_increase_bounded_by_real_counter_increase}
\sum_r \int_t s_r(t)dt &= 
\sum_r \int_{t \in I(r)} s_r(t)dt \leq
2 \int_{t \in \cup_r I(r)} s_{r^*_t}(t)dt \nonumber \\ &= 
2 \int_{t \in \cup_r I(r)} z'_{r^*_t}(t)dt \leq 
2\sum_r \int_t z'_r(t)dt.
\end{align}

\noindent Therefore, combining the above with Lemma \ref{lemma.single_location.real_delay_bounded_by_aux},
\begin{align*}
\algsla &= 
\sum_r \int_t d_r'(t)dt \leq 
\sum_r \sum_{k = 1}^{k_{m(r)}(r)-1} y_k  + \sum_r \int_t s_r(t)dt \\ &\leq
\sum_r \int_t z'_r(t)dt + \sum_r \int_t s_r(t)dt \leq
3 \sum_r \int_t z'_r(t)dt =
3 \sum_k \int_t z_k'(t)dt,
\end{align*}
where the first inequality follows from Lemma \ref{lemma.single_location.real_delay_bounded_by_aux}, the second follows from equation (\ref{equation.single_location.depletion_is_bounded_by_real_counter_increase}), the third follows from equation (\ref{equation.single_location.counter_increase_bounded_by_real_counter_increase}) and the last equality follows by rearranging the terms.
\end{proof}

\subsection{Lower Bounding $\optim$'s Cost}

In this subsection we upper bound the overall increase in counters by the optimal solution by charging the increase to different requests of the optimal solution. The following proposition states this formally.

\begin{proposition}
\label{proposition.single_location.bound_counters_by_OPT}
$\sum_k \int_t z_k'(t)dt \leq 12 \cdot \optim$.
\end{proposition}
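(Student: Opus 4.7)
The plan is to prove $\sum_k \int_t z_k'(t)\,dt \le 12\,\optim$ via a rate-based amortized charging argument. At any fixed time $t$, suppose the $\algsla$-unmatched requests sit on counters $z_{k_1}<z_{k_2}<\cdots<z_{k_m}$. By the algorithm's rule, only the odd-indexed counters $z_{k_1},z_{k_3},\ldots$ are currently increasing; since the slopes decrease geometrically and the $k_j$ are strictly increasing (so $\alpha_{k_{j+2}}\le\alpha_{k_j}/4$),
\[
\sum_k z_k'(t)\;=\;\sum_{j\text{ odd}}\alpha_{k_j}\;\le\;\alpha_{k_1(t)}\Bigl(1+\tfrac14+\tfrac1{16}+\cdots\Bigr)\;\le\;\tfrac{4}{3}\,\alpha_{k_1(t)}.
\]
So it suffices to show $\int_t \alpha_{k_1(t)}\,dt\;\le\;O(\optim)$, where $\alpha_{k_1(t)}$ is the slope of the lowest occupied counter at time $t$.

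To charge $\alpha_{k_1(t)}$ against $\optim$, let $\rho^{\optim}(t)=\sum_{r:\,a(r)\le t<m^{\optim}(r)} f'(t-a(r))$ be $\optim$'s instantaneous delay rate, so that $\int\rho^{\optim}(t)\,dt=\sum_r f(m^{\optim}(r)-a(r))\le \optim$. A pointwise bound $\alpha_{k_1(t)}\le C\rho^{\optim}(t)$ is false in general, since $\optim$ may already have matched the lowest $\algsla$-unmatched request $r_1$ to a past partner and pay nothing for it at time $t$. To amortize, I would introduce a non-negative potential
\[
\Phi(t)\;=\;\beta\!\!\sum_{r\,\in\, U^{\algsla}(t)\setminus U^{\optim}(t)}\!\!\! f\!\left(m^{\optim}(r)-a(r)\right),
\]
where $U^{\algsla}(t)$, $U^{\optim}(t)$ denote the unmatched sets of the two algorithms at time $t$ and $\beta$ is an absolute constant. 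Since every request is eventually matched, $\Phi(0)=\Phi(\infty)=0$, and the proposition follows from the amortized inequality
\[
\sum_k z_k'(t)+\Phi'(t)\;\le\;12\,\rho^{\optim}(t),
\]
where $\Phi'(t)$ accounts for the discrete jumps of $\Phi$ at match events.

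The amortized inequality would be established by a case analysis on $r_1$. Case~(a): $r_1\in U^{\optim}(t)$ with natural delay level $\delta_t(r_1)\le k_1(t)$. Then $\rho^{\optim}(t)$ already contains $f'(t-a(r_1))=\alpha_{\delta_t(r_1)}\ge\alpha_{k_1(t)}$ from $r_1$ alone, directly covering the charge. Case~(b): $r_1\notin U^{\optim}(t)$; then the $r_1$-term sits inside $\Phi$ and is released when $\algsla$ eventually matches $r_1$, so the counter work performed on $r_1$ now is paid by the future $\Phi$ decrement. The main obstacle I anticipate is Case~(c): $r_1\in U^{\optim}(t)$ but $\delta_t(r_1)>k_1(t)$, i.e., $r_1$ has been ``held back'' at a counter lower than its natural level because of earlier periods with an odd number of requests sitting below $r_1$ in $\algsla$. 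In those earlier held-back periods, the below-requests were either $\optim$-unmatched and contributed to $\rho^{\optim}$ then, or they were $\optim$-matched and inflated $\Phi$ then; the delicate task will be to track this reserve carefully so that it can be drawn upon now to cover $\alpha_{k_1(t)}$. Combining the $\tfrac43$ factor from the geometric sum, a factor of roughly $2$ from the counter-versus-natural-level slack (compare Lemma~\ref{lemma.single_location.real_delay_bounded_by_aux}), and the constants introduced by the potential discharge is what I expect will yield the overall constant $12$.
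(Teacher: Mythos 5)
Your opening reduction is correct and clean: since each counter holds at most one request and $z_{k}$ increases only when an even number of requests sit below it, exactly the odd-indexed occupied counters increase, and the geometric decay of slopes gives $\sum_k z_k'(t) \le \tfrac{4}{3}\,\alpha_{k_1(t)}$. This is a valid (and slightly sharper) reformulation of what the paper extracts in its own way. However, the heart of the proposition — bounding $\int_t \alpha_{k_1(t)}\,dt$ by $O(\optim)$ — is not established. You explicitly leave Case~(c) open, and that case is not a technicality: it is precisely the situation the paper's interval/parity machinery (Lemmas \ref{lemma.single_location.interval_parity_of_SLA}, \ref{lemma.single_location.last_interval_parity_of_OPT}, \ref{lemma.single_location.middle_interval_parity_OPT}) exists to handle. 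Worse, Case~(b) does not work as a pointwise amortized inequality: if $\optim$ matches $r_1$ immediately on arrival to an old partner, the term you place in $\Phi$ is $\beta f(m^{\optim}(r_1)-a(r_1)) = \beta f(0)=0$, so its eventual release pays nothing, while at a generic time $t$ with all previously arrived requests already $\optim$-matched you have $\Phi'(t)=0$ and $\rho^{\optim}(t)=0$ but $\alpha_{k_1(t)}>0$; the inequality $\sum_k z_k'(t)+\Phi'(t)\le 12\rho^{\optim}(t)$ is then simply false at that instant. A one-shot discrete potential drop can never cover a continuous accumulation of counter increase over an arbitrarily long period, so the potential as defined cannot close the argument.

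The paper's proof takes a genuinely different, global route that you would need to reconstruct: it partitions time into intervals per counter (delimited by the counter's resets), shows via parity arguments that the counter can only increase during ``odd-subintervals'' in which $\optim$ necessarily has a live (unmatched) request tied to that interval or its partner interval, and then charges each interval's fixed increase $y_k$ to a single well-chosen request $r$ — preferring a request whose $\optim$-delay level is at least $k$, and otherwise falling back on the live request. The per-request accounting then splits by whether $k\le L^*(r)$ (charges telescope to $\sum_{j\le L^*(r)} y_j \le \rho(r)$) or $k>L^*(r)$ (charges are dominated by $\optim$'s momentary delay on $r$, again summing to $O(\rho(r))$ by the geometric slopes). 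Your proposal would need either to import this interval/liveness structure or to design a substantially stronger potential (one that stores $\optim$'s \emph{past} delay payments attributable to partners of currently held-back requests); as written, the proof is incomplete.
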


In order to prove the proposition we consider the increase in each counter $z_k$ separately. We further split the increase in $z_k$ into phases as follows.

\begin{definition}
\label{definition.single_location.counter_intervals}
For a given counter $z_k$ let $0 < t^k_1, t^k_2, t^k_3, \ldots, t^k_{m_k-1}<\infty$ denote all points in time for which $z_k$ changes value from non-zero to zero (i.e., whenever a request moves from counter $z_k$ to $z_{k+1}$). We further denote $t^k_{m_k} = \infty$ and $t^k_0 = 0$. 
\end{definition}

Note that for counters $z_k \neq z_{k'}$ the points $t^k_i$ and $t^{k'}_i$ need not be aligned (unless $i = 0$). When it is clear from context that we are considering a specific counter $z_k$, we may denote $t^k_i$ simply by $t_i$. Before proving the proposition we first introduce several definitions and lemmas that will aid us in our proof.

Throughout the remainder of this section, given a counter $z_k$ and an interval $I_i^k$ defined with respect to $z_k$, we define $\bm{R(I_i^k)}$ to be the set of all requests that arrived during the time interval $I_i^k$.

Next, we introduce the notion of an odd-subinterval. Given an interval $I$ we would like to consider all time points $t\in I$ such that an odd number of requests from $R(I)$ have arrived. We denote the set of these points as $I_{odd}$ and refer to them as $I$'s odd-subinterval. The following definition defines this formally.

\begin{definition}
Given an interval $I^k_i = [t^k_i, t^k_{i+1})$ as defined by some counter $z_k$ we denote $|R(I_i^k)| = \gamma_k$. We further denote $R(I_i^k)$'s arrival times by $\{a_j\}_{j=1}^{\gamma_k}$. Given these notations, if $\gamma_k$ is odd then we define $I^{odd}_{i,k} = \cup_{j=1}^{\lfloor \frac{\gamma_k}{2} \rfloor}[a_{2j-1}, a_{2j}) \cup [a_{\gamma_k},t^k_{i+1})$. Otherwise, if $\gamma_k$ is even we define $I^{odd}_{i,k} = \cup_{j=1}^{\lfloor \frac{\gamma_k}{2} \rfloor}[a_{2j-1}, a_{2j})$.
\end{definition}


The following lemma (whose proof is deferred to the Appendix) states that any point for which $z_k$ increases must lie within the odd-subinterval of the corresponding interval.


\begin{lemma}
\label{lemma.single_location.interval_parity_of_SLA}
Consider some time interval $I^k_i = [t^k_i, t^k_{i+1})$ defined with respect to counter $z_k$. Let $t \in I^k_i$ denote some time for which $z_k$ increases. Therefore, $t \in I_{i,k}^{odd}$.
\end{lemma}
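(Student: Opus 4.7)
The plan is to track parities of two quantities inside the interval $I^k_i = [t^k_i, t^k_{i+1})$. Let $M_t$ denote the number of pending requests on counters $z_1, \ldots, z_{k-1}$ at time $t$, and let $A(t)$, $D(t)$ denote, respectively, the number of arrivals from $R(I^k_i)$ during $[t^k_i, t]$ and the number of times a request has moved from $z_{k-1}$ up to $z_k$ during $[t^k_i, t]$. By the rule defining $\algsla$, $z_k$ increases at time $t$ exactly when $z_k$ is non-empty \emph{and} $M_t$ is even, so my plan is to translate each of these two conditions into a parity constraint on $A(t)$.

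First I would establish that $M_{t^k_i+}$ is even. For $i=0$ this is immediate since $M_0 = 0$. For $i \geq 1$, since $z_k$ hit its capacity $y_k$ precisely at $t^k_i$, it must have been increasing throughout a left-neighborhood of $t^k_i$, so by the algorithm's rule $M_{t^k_i-}$ was even; and the only event at $t^k_i$, namely a request moving from $z_k$ up to $z_{k+1}$, does not touch $z_1, \ldots, z_{k-1}$, so $M_{t^k_i+}$ is also even. Next, I would enumerate how $M_t$ can change within $I^k_i$: an arrival adds $1$, a move from $z_{k-1}$ to $z_k$ subtracts $1$, a match inside $z_1, \ldots, z_{k-1}$ subtracts $2$, and moves strictly between counters below $z_{k-1}$ are internal and leave $M_t$ unchanged. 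Only the first two types flip parity, so $M_t \equiv A(t) + D(t) \pmod{2}$ throughout $I^k_i$.

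In parallel, I would argue that $z_k$ is empty just after $t^k_i$ and, by definition of $t^k_{i+1}$, releases no request during $I^k_i$; its only dynamics are the $D(t)$ arrivals from $z_{k-1}$, with any second arrival being matched and removed on contact. Hence the number of pending requests on $z_k$ at time $t$ equals $D(t) \bmod 2$, so $z_k$ is non-empty iff $D(t)$ is odd. Combining both ingredients, at any $t$ where $z_k$ increases we have $D(t)$ odd and $M_t$ even, hence $A(t) + D(t)$ even, forcing $A(t)$ to be odd, which is exactly the defining condition of $t \in I^{odd}_{i,k}$.

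The main obstacle, and really the only delicate point, is the careful parity bookkeeping across the different event types (arrivals, upward moves at every level, and internal matchings) together with the boundary argument that $M_{t^k_i+}$ is even; once these are nailed down the lemma reduces to a single line of parity arithmetic.
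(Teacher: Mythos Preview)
Your argument is correct and follows essentially the same parity-tracking idea as the paper's proof, with one small caveat. For $k=1$ your bookkeeping breaks down as written: there is no counter $z_0$, so $D(t)$ is undefined, and arrivals go directly to $z_1$ rather than into the (empty) range of $M_t$; thus your identity $M_t \equiv A(t)+D(t)\pmod 2$ becomes $0 \equiv A(t)+0$, which is false whenever $A(t)$ is odd. The fix is trivial (treat $k=1$ separately, or declare that arrivals play the role of $D(t)$ when $k=1$), but it should be stated.

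The paper's proof sidesteps this edge case and your auxiliary variable $D(t)$ altogether by tracking a single quantity: the number of pending requests on $z_1,\ldots,z_k$ \emph{including} $z_k$. This count is even just after $t^k_i$, can only change by $+1$ (arrival) or $-2$ (match) within the interval since no request leaves $z_k$ upward, and is odd whenever $z_k$ increases (one request on $z_k$, evenly many below); hence $A(t)$ is odd in one line. Your finer decomposition into $M_t$ and the occupancy of $z_k$ is equivalent but slightly more work.
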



\begin{definition}
\label{definition.single_location.interval_with_respect_to_opt}
Given an interval $I$ and some time $t \in I$ we say that $\optim$ is live with respect to $I$ at time $t$ if $\optim$'s matching contains an unmatched request $r$ at time $t$ such that either $r \in R(I)$ or $r$'s pair (with respect to $\optim$) belongs to $R(I)$.
\end{definition}

\noindent For a pictorial example of Definition \ref{definition.single_location.interval_with_respect_to_opt} see Figure \ref{figure.single_location.definition_opt_live}.

\begin{figure}[H]
\centering
\includegraphics[width= 10cm]{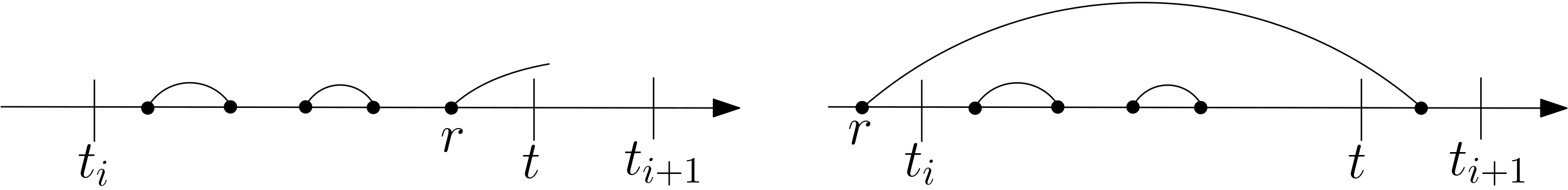}
\caption{Request $r$ causes $\optim$ to be live at time $t$ with respect to interval $I = [t_i, t_{i+1})$. In the left hand side it is since $r \in R(I)$ and the right hand side is because $r$'s pair belongs to $R(I)$.}
\label{figure.single_location.definition_opt_live}
\end{figure}

\noindent The following lemmas are combinatorial lemmas and therefore their proofs are deferred to the Appendix.

\begin{lemma}
\label{lemma.single_location.last_interval_is_even}
For any counter $z_k$, $|R(I^k_{m_k-1})|$ is even.
\end{lemma}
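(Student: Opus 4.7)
My plan is to reduce the parity of $|R(I^k_{m_k-1})|$ to parities of pending-request counts of $\algsla$, partitioned at counter $z_k$, at the instant just after $t^k_{m_k-1}$. The case $m_k = 1$ is trivial, since then $I^k_0 = [0, \infty)$ contains every request and the total is even because $\algsla$ eventually matches all requests in pairs. So I will assume $m_k \geq 2$ throughout.

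Let $A_1$ and $A_2$ be the numbers of requests arriving in $[0, t^k_{m_k-1})$ and in $I^k_{m_k-1}$ respectively, and let $L$ (resp.\ $U$) denote the number of requests of $\algsla$ pending on counters $z_1, \ldots, z_{k-1}$ (resp.\ on counters $z_k, z_{k+1}, \ldots$) just after $t^k_{m_k-1}$. Because every match removes two requests from the pending pool, $A_1 \equiv L + U \pmod{2}$; combining this with $A_1 + A_2$ being even (as $\algsla$ pairs up all requests) yields $A_2 \equiv L + U \pmod{2}$. Hence it suffices to show $L$ and $U$ are both even.

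I would show $L$ is even via the algorithm's increase rule. Just before $t^k_{m_k-1}$, counter $z_k$ is continuously rising to capacity (which is precisely why it resets at $t^k_{m_k-1}$), and by the algorithm's definition this can happen only when the number of pending requests on $z_1, \ldots, z_{k-1}$ is even. The event at $t^k_{m_k-1}$ only affects $z_k$ and $z_{k+1}$, so this count is unchanged just after $t^k_{m_k-1}$, yielding $L$ even.

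I would show $U$ is even via a sealing argument. By the definition of $t^k_{m_k-1}$ as the final reset of $z_k$, no request crosses from $z_k$ to $z_{k+1}$ after time $t^k_{m_k-1}$; and because requests arriving later in $I^k_{m_k-1}$ start at $z_1$ and move only upward, they too cannot contribute to any $z_j$ with $j \geq k+1$. The pool of requests counted by $U$ is therefore disjoint from all subsequent activity, and these requests can only be matched among themselves. Since $\algsla$ matches every request, $U$ must be even. The main obstacle will be justifying this sealing step cleanly, carefully handling what the transition at $t^k_{m_k-1}$ does to $z_{k+1}$ (whether the moving request $r^*$ matches an existing pending request on $z_{k+1}$ or becomes pending there); once this is in hand, both parities are clear and the lemma follows immediately.
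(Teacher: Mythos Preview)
Your proposal is correct and follows essentially the same approach as the paper: both argue that just after $t^k_{m_k-1}$ the number of pending requests on $z_1,\ldots,z_{k-1}$ is even (because $z_k$ was increasing) and the number on $z_{k+1},\ldots$ is even (the sealing argument, since no further request crosses from $z_k$ to $z_{k+1}$ and all are eventually matched), then combine with the evenness of the total request count. Your write-up is slightly more explicit in the parity bookkeeping and in handling the $m_k=1$ case, but the substance is identical.
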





\begin{lemma}
\label{lemma.single_location.last_interval_parity_of_OPT}
For any counter $z_k$, if $t \in I_{m_k-1,k}^{odd}$ then $\optim$ must be live with respect to $I^k_{m_k-1}$ at time $t$.
\end{lemma}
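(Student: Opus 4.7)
The plan is to argue by contradiction using a parity count on how $\optim$ pairs up the requests in $R(I^k_{m_k-1})$. Let $A$ denote the subset of $R(I^k_{m_k-1})$ that has arrived by time $t$, and let $B = R(I^k_{m_k-1}) \setminus A$ denote those that have not. Since $t \in I^{odd}_{m_k-1,k}$, we have $t \in [a_{2j-1}, a_{2j})$ for some $j$, and so $|A| = 2j-1$ is odd. Combined with Lemma \ref{lemma.single_location.last_interval_is_even}, which says $|R(I^k_{m_k-1})|$ is even, this forces $|B|$ to be odd; in particular $B$ is nonempty.

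Next I would suppose toward contradiction that $\optim$ is not live at $t$ with respect to $I^k_{m_k-1}$ and partition the $\optim$-pairs that touch $R(I^k_{m_k-1})$ into five types, classified by whether each endpoint lies in $A$, in $B$, or in the ``outside'' set $E$ of requests that arrive strictly before $t^k_{m_k-1}$. I then eliminate the two mixed types that involve $B$. In an $(A,B)$ pair, the $A$-endpoint is present at time $t$ while its mate has not yet arrived, so it is unmatched by $\optim$ at time $t$ and lies in $R(I^k_{m_k-1})$, witnessing liveness. In a $(B,E)$ pair, the $E$-endpoint arrived before $t^k_{m_k-1} < t$ but its $B$-mate has not, so the $E$-endpoint is unmatched at $t$ with mate in $R(I^k_{m_k-1})$, again witnessing liveness. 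Both contradict the non-liveness assumption.

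With the $(A,B)$ and $(B,E)$ types ruled out, every element of $B$ must be paired by $\optim$ with another element of $B$, so $|B|$ is twice the number of $(B,B)$ pairs and hence even, contradicting the parity established in the first paragraph. The only subtlety, and the one step I would verify most carefully, is confirming that the ``pending'' endpoints flagged in the $(A,B)$ and $(B,E)$ cases are genuinely unmatched at time $t$ in $\optim$; this holds because the matching time of any $\optim$-pair is at least the later of its two arrival times, and in both cases that later arrival lies in $B$ and so happens strictly after $t$. Note that no analogous parity obstruction arises for pairs of type $(A,E)$, which may indeed exist under non-liveness provided $\optim$ has matched them by time $t$, so the argument relies crucially on the $B$-side parity rather than the $A$-side.
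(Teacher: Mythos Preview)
Your proof is correct and is essentially the same parity argument as the paper's, though you frame it via the partition $A \cup B \cup E$ and a case analysis on $\optim$-pair types. The paper's version is terser: it observes that the total number of requests arrived by time $t$ is odd, so some request is unmatched at $t$, and since $I^k_{m_k-1}$ is the last interval its partner (arriving after $t$) must lie in $R(I^k_{m_k-1})$, witnessing liveness directly.
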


\begin{lemma}
\label{lemma.single_location.interval_is_odd}
For any counter $z_k$ and any interval $I_i^k$, $|R(I^k_i)|$ is odd.
\end{lemma}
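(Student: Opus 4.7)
The plan is to argue via a flow-conservation analysis across the counters $z_1,\ldots,z_k$. For an interval $I=[a,b)$ and each $j\geq 1$, let $f_j(I)$ count the transitions of requests from $z_j$ to $z_{j+1}$ during $I$ (set $f_0(I):=|R(I)|$ to count arrivals into $z_1$), let $\mu_j(I)$ count the matches performed on $z_j$ during $I$, and let $n_j(t)\in\{0,1\}$ denote the occupancy of $z_j$ at time $t$. Writing $n_j(s^-)$ for the left limit at $s$, the bookkeeping identity for $z_j$ is
\[
n_j(b^-)-n_j(a^-) \;=\; f_{j-1}(I) - f_j(I) - 2\mu_j(I),
\]
and summing over $j=1,\ldots,k$ telescopes so that, setting $N_k:=n_1+\cdots+n_k$, one obtains the congruence
\[
|R(I)| \;\equiv\; f_k(I) + N_k(a^-) + N_k(b^-) \pmod{2}.
\]

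I would then specialize to $I=I^k_i=[t^k_i,t^k_{i+1})$ for $i<m_k-1$ and evaluate each term. By Definition~\ref{definition.single_location.counter_intervals}, the interval contains exactly one reset of $z_k$ (the one at $t^k_i$) when $i\geq 1$ and no reset when $i=0$; hence $f_k(I^k_i)=1$ for $i\geq 1$ and $f_k(I^k_0)=0$. At time $0$ all counters are empty, so $N_k(0^-)=0$. At any reset time $t^k_j$ with $j\geq 1$, the counter $z_k$ has just reached its capacity $y_k$ while holding a request, which forces $n_k=1$ and, through the algorithm's update rule, forces $N_{k-1}$ to be even on a left-neighborhood; hence $N_k((t^k_j)^-)$ is odd. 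Substituting these values into the congruence yields $|R(I^k_i)|\equiv 1\pmod{2}$ both in the case $i=0$ (since $0+0+\mathrm{odd}\equiv 1$) and in the case $1\leq i\leq m_k-2$ (since $1+\mathrm{odd}+\mathrm{odd}\equiv 1$), which is the claim. Note that the analogous substitution with $b=\infty$ and $N_k(\infty^-)=0$ recovers Lemma~\ref{lemma.single_location.last_interval_is_even} as a sanity check.

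The delicate step is the assertion that $N_{k-1}$ is even immediately before each reset of $z_k$. A priori, $N_{k-1}$ could be odd while $z_k$ is frozen and then flip to even at the exact instant $z_k$ transitions. I would dispatch this by observing that $z_k$ grows continuously at rate $\alpha_k$ whenever it grows at all, so for $z_k$ to reach $y_k$ at $t^k_j$ it must have been strictly increasing on some positive-measure left-neighborhood of $t^k_j$; throughout that neighborhood the algorithm's rule forces $N_{k-1}$ to be even. Since the reset event at $t^k_j$ itself only alters $z_k$ (and possibly $z_{k+1}$), none of $n_1,\ldots,n_{k-1}$ changes at $t^k_j$, so $N_{k-1}((t^k_j)^-)$ remains even and the parity bookkeeping above closes as stated.
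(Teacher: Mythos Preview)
Your argument is correct and is essentially the same parity-counting argument as the paper's, though packaged more explicitly via a telescoping flow-conservation identity over $z_1,\ldots,z_k$ evaluated at the left-limits of the boundary times, rather than the paper's comparison of the full counter state $z_1,\ldots,z_d$ at two interior instants $a,b\in I^k_i$. The point you flag as delicate---that $N_{k-1}$ is even on a left-neighborhood of each reset time because $z_k$ must have been strictly increasing there---is exactly the paper's use of the update rule, and your congruence indeed specializes (with $b=\infty$, $f_k=0$ or $1$, $N_k(\infty^-)=0$) to recover Lemma~\ref{lemma.single_location.last_interval_is_even} as well.
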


For counter $z_k$, consider any two consecutive intervals $I^k_i = [t^k_i, t^k_{i+1})$ and $I^k_{i+1} = [t^k_{i+1}, t^k_{i+2})$ such that $t^k_{i+2} < t^k_{m_k}$ (if such intervals exist). Let $R(I^k_i \cup I^k_{i+1})$ denote the set of requests that arrive during the interval $I^k_i \cup I^k_{i+1}$ respectively. Lemma \ref{lemma.single_location.middle_interval_parity_OPT} (whose proof is deferred to the Appendix) will aid us in charging the increase in counters towards $\optim$'s delay.


\begin{lemma}
\label{lemma.single_location.middle_interval_parity_OPT}
For $I = I^k_i$ or $I = I^k_{i+1}$ the following condition holds: \\
For any $t \in I_{odd}$, $\optim$ must be live with respect to $I_i \cup I_{i+1}$ at time $t$.
\end{lemma}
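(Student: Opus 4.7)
I plan to argue by contradiction. Suppose the lemma fails: neither $I = I_i^k$ nor $I = I_{i+1}^k$ satisfies the condition. Then there exist times $t_1 \in (I_i^k)_{odd}$ and $t_2 \in (I_{i+1}^k)_{odd}$ at which $\optim$ is \emph{not} live with respect to $I_i^k \cup I_{i+1}^k$. My goal is to use a parity accounting across these two times to produce a matched pair in $\optim$ whose existence forces $\optim$ to actually be live at $t_1$, a contradiction.

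\textbf{Setup and parity.} Since $t^k_{i+2}<t^k_{m_k}$, neither of the intervals $I_i^k$, $I_{i+1}^k$ is the terminal one, so by Lemma~\ref{lemma.single_location.interval_is_odd} both $|R(I_i^k)|$ and $|R(I_{i+1}^k)|$ are odd. Let $A$ denote the requests in $R(I_i^k)$ that have arrived by time $t_1$, and let $A'$ denote the requests in $R(I_i^k \cup I_{i+1}^k)$ that have arrived by time $t_2$. The definition of the odd-subinterval gives $|A|$ odd and, since $A\subseteq A'$ with $A' \setminus A$ equal to the arrived-by-$t_2$ portion of $R(I_{i+1}^k)$, also $|A'\setminus A|$ odd. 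Call any request arriving strictly before $t_i^k$ an \emph{early} request.

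\textbf{Extracting information from non-liveness.} From the assumption that $\optim$ is not live at $t_1$, every $r\in A$ must already be matched by $\optim$ at $t_1$ (else $r\in R(I_i^k\cup I_{i+1}^k)$ would witness liveness), and its partner $r'$ must also have arrived by $t_1$. Furthermore, $r'$ cannot lie in $R(I_i^k\cup I_{i+1}^k)\setminus A$: any such $r'$ would be either an unarrived-by-$t_1$ request (contradicting its matching by $t_1$) or an arrived request in $R(I_i^k\cup I_{i+1}^k)$ whose partner is in $R(I_i^k\cup I_{i+1}^k)$, again witnessing liveness. Hence every partner of $A$ is either inside $A$ or is early. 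Applying the identical argument at $t_2$: every $r\in A'$ is matched by $\optim$ at $t_2$, with partner lying in $A'$ or early.

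\textbf{Counting and contradiction.} Let $p_4$ count $\optim$-matches between $A$ and early requests and $p_5$ count $\optim$-matches between $A'\setminus A$ and early requests. Requests in $A$ cannot be $\optim$-matched to requests in $A'\setminus A$, since the latter arrive after $t_1\ge$ the match time of any $r\in A$. Therefore $|A| = 2(\text{internal pairs in }A) + p_4$, and $|A|$ odd forces $p_4$ odd; symmetrically, $|A'\setminus A| = 2(\text{internal pairs in }A'\setminus A) + p_5$, so $p_5$ is odd and in particular $p_5\ge 1$. Pick any early request $r$ matched by $\optim$ to some $r'\in A'\setminus A\subseteq R(I_{i+1}^k)$. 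Then $a(r) < t_i^k \le t_1$, while $a(r') \ge t_{i+1}^k > t_1$, so $r$ is unmatched by $\optim$ at $t_1$ and its partner $r'$ lies in $R(I_i^k\cup I_{i+1}^k)$. This shows $\optim$ is live with respect to $I_i^k\cup I_{i+1}^k$ at $t_1$, contradicting the choice of $t_1$.

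The main obstacle will be the bookkeeping in the third step: carefully using the non-liveness assumption at each of the two times to rule out every undesirable partner location (unarrived, inside $A'\setminus A$ from $t_1$'s perspective, or future post-$t_2$) so that the parity count on matches to early requests is clean; once these exclusions are justified, the odd parity automatically produces the witness that contradicts non-liveness at $t_1$.
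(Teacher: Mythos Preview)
Your argument is essentially the paper's parity argument, recast as a simultaneous contradiction rather than ``if the condition fails for $I_i^k$ then it holds for $I_{i+1}^k$''; the core idea---force an \emph{early} request to be $\optim$-matched across $t_1$ into $R(I_i^k\cup I_{i+1}^k)$ by counting parities---is identical.

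One small correction in your bookkeeping: $A'\setminus A$ is not just the arrived-by-$t_2$ portion of $R(I_{i+1}^k)$; it also contains the requests of $R(I_i^k)$ that arrived \emph{after} $t_1$. Consequently your claims $A'\setminus A\subseteq R(I_{i+1}^k)$ and $a(r')\ge t_{i+1}^k$ are not literally correct. This does not break the proof: $|A'\setminus A|$ is still odd (it is $(|R(I_i^k)|-|A|)+|\{\text{arrived-by-}t_2\text{ in }R(I_{i+1}^k)\}|=\text{even}+\text{odd}$), and for any $r'\in A'\setminus A$ you still have $a(r')>t_1$ and $r'\in R(I_i^k\cup I_{i+1}^k)$, which is all the final contradiction needs.
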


\begin{lemma}
\label{lemma.single_location.the_number_of_intervals_is_odd}
For any counter $z_k$ the number of intervals defined with respect to that counter, $m_k$, is odd.
\end{lemma}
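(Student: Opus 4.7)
The plan is a clean parity argument that combines Lemmas \ref{lemma.single_location.interval_is_odd} and \ref{lemma.single_location.last_interval_is_even} with the basic fact that the total number of requests $N$ is even (otherwise no perfect matching would exist, and we already know $\algsla$ produces one by the remark after the algorithm).

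First I would note that by the very definition of the reset times, we have $0 = t^k_0 < t^k_1 < \cdots < t^k_{m_k-1} < t^k_{m_k} = \infty$, so the intervals $I^k_0, I^k_1, \ldots, I^k_{m_k-1}$ form a partition of $[0,\infty)$. Every request therefore belongs to $R(I^k_i)$ for exactly one index $i$, giving the identity
\[
\sum_{i=0}^{m_k-1} |R(I^k_i)| \;=\; N.
\]

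Next I would plug in the parity information from the two preceding lemmas. Lemma \ref{lemma.single_location.interval_is_odd} (understood as applying to the non-terminal intervals, i.e., those whose right endpoint is an actual reset time where a request is promoted from $z_k$ to $z_{k+1}$) says that $|R(I^k_i)|$ is odd for each $i \in \{0,1,\ldots,m_k-2\}$, while Lemma \ref{lemma.single_location.last_interval_is_even} says that $|R(I^k_{m_k-1})|$ is even. Rearranging the identity above,
\[
\sum_{i=0}^{m_k-2} |R(I^k_i)| \;=\; N - |R(I^k_{m_k-1})|.
\]
The right-hand side is the difference of two even integers and is therefore even. The left-hand side is a sum of $m_k-1$ odd integers, which is even precisely when $m_k-1$ is even. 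Hence $m_k$ is odd, as claimed.

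The main subtlety, rather than any genuine obstacle, is bookkeeping: one must be careful to treat $I^k_{m_k-1}$ separately from the earlier intervals because no promotion out of $z_k$ happens at its right endpoint, which is exactly why the parity flips from odd to even there. Once this distinction is in hand, the proof is a two-line counting argument; no additional structural properties of $\algsla$ or of $\optim$ are needed.
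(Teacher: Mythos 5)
Your proof is correct and matches the paper's approach: the paper derives the lemma directly from Lemmas \ref{lemma.single_location.last_interval_is_even} and \ref{lemma.single_location.interval_is_odd}, and your parity-summation argument (using that the total number of requests is even) is precisely the omitted bookkeeping, including the correct reading of Lemma \ref{lemma.single_location.interval_is_odd} as applying to the non-terminal intervals.
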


\begin{proof}
Follows from Lemmas \ref{lemma.single_location.last_interval_is_even} and \ref{lemma.single_location.interval_is_odd}.
\end{proof}

\noindent We are now ready to prove our proposition.

\begin{proof}[Proof of Proposition \ref{proposition.single_location.bound_counters_by_OPT}]
Before charging the increase in our counters (i.e.,  $\sum_{k=1}^d \int_t z_k'(t)dt$) to $\optim$ we first need several notations. Let $\rho(r)$ denote the delay incurred by a request $r$ or the delay incurred by the request that $r$ is matched to, both with respect to  $\optim$'s matching  (i.e., for matched requests $r$ and $r'$, $\rho(r) = \rho(r')$ is defined as the delay incurred by the earlier request to arrive). Therefore,
\begin{align}
\label{equation.single_location.charging_scheme_lemma.1}
\optim = \frac{1}{2} \sum_r \rho(r).
\end{align}

We say that $r$ is of level $k$ with respect to $\optim$ if $\rho(r) \in [\sum_{j=1}^k y_j, \sum_{j=1}^{k+1} y_j)$. 

By Lemma \ref{lemma.single_location.the_number_of_intervals_is_odd} we may partition our overall set of intervals (with respect to some $z_k$) into pairs $\{I^k_{2i}, I^k_{2i+1}\}$ for $i \in \{0,1,\ldots,\frac{m_k-3}{2}\}$ with the addition of the interval $I^k_{m_k-1}$. We define our charging scheme separately for charging counter increase during $I^k_{m_k-1}$ (for each counter $z_k$) and for charging counter increase during $\{I^k_{2i}, I^k_{2i+1}\}$. For any counter $z_k$ and time $t$ in which $z_k$ increases, we charge the momentary increase to some request $r$ and denote our charging scheme by $f_k(t) = r$.

For any counter $z_k$ and any pair of intervals $\{I^k_{2i}, I^k_{2i+1}\}$, the total increase in $z_k$ is exactly $2y_k$. Therefore, it is enough to only charge one of the intervals to $\optim$ (and incur a multiplicative factor of 2 later on). We will choose the interval to charge as follows. By Lemma \ref{lemma.single_location.middle_interval_parity_OPT} one of these intervals guarantees the condition as defined in the lemma. We assume it is the first (i.e., $I^k_{2i}$) and use that interval towards our charging scheme (if it were the second we would have used that interval and continued identically). Therefore we will define $f_k(t)$ only for $t \in I^k_{2i}$ for $i \in \{0,1,\ldots,\frac{m_k-3}{2}\}$, such that $z_k$ increases at time $t$. To ease notation in the following definition, let $I^k_{m_k} = \emptyset$.

\noindent \textbf{Defining} $\mathbf{f_k(t)}$: For any $i \in \{0,1,\ldots,\frac{m_k-1}{2}\}$ and any $t \in I^k_{2i}$ such that $z_k$ increases, we define $f_k(t)$ as follows. We set $f_k(t) = r$ to $r \in R(I^k_{2i} \cup I^k_{2i+1})$ such that $r$ is of level $\geq k$ with respect to $\optim$, if such a request exists. Furthermore we break ties by taking the earliest such request to arrive. If such a request does not exist, then we will charge the increase to the request $r$ that causes $\optim$ to be live with respect to $I^k_{2i} \cup I^k_{2i+1}$, at time $t$.

\begin{remark}
$f_k(t)$ is indeed well defined, i.e., there exists a request $r$ satisfying at least one of the predefined conditions. This is true for any $t \in I^k_{2i}$ due to the fact that if $z_k$ increases then $t \in I_{2i,k}^{odd}$ (Lemma \ref{lemma.single_location.interval_parity_of_SLA}) which in turn guarantees that $\optim$ is live with respect to $I^k_{2i} \cup I^k_{2i+1}$ (due to Lemma \ref{lemma.single_location.middle_interval_parity_OPT} and the fact that we chose the interval that guarantees the defined condition). This is similarly true for any $t \in I^k_{m_k-1}$ due to Lemmas \ref{lemma.single_location.interval_parity_of_SLA} and \ref{lemma.single_location.last_interval_parity_of_OPT}.
\end{remark}

For any counter $z_k$ let $D_k$ denote the set of requests within the image of $f_k(\cdot)$. Note that $f_k(\cdot)$ is defined only for a single interval from each pair, $I^k_{2i}$ and $I^k_{2i+1}$, and the increase in $z_k$ is exactly the same in each such interval (specifically, it is $y_k$). Therefore,
\begin{align}
\label{equation.single_location.charging_scheme_lemma.2}
\sum_{k=1}^d \int_t z_k'(t)dt \leq 2 \sum_{k=1}^d \sum_{r \in D_k} \int_{t \in f^{-1}_k(r)} z_k'(t)dt.
\end{align}

For a request $r$ let $L^*(r)$ denote its level with respect to $\optim$. Therefore, we may sum over all levels of $\optim$'s requests, $k^*$ and get,

\begin{align}
\label{equation.single_location.charging_scheme_lemma.3}
\sum_{k=1}^d \sum_{r \in D_k} \int_{t \in f^{-1}_k(r)} z_k'(t)dt &= 
\sum_{k^*=1}^d \sum_{r: L^*(r)=k^*} \sum_{k \in [d]: r \in D_k \land k \leq k^*}\int_{t \in f^{-1}_k(r)} z_k'(t)dt \nonumber \\ &+ 
\sum_{k^*=1}^d \sum_{r: L^*(r)=k^*} \sum_{k \in [d]: r \in D_k \land k > k^*}\int_{t \in f^{-1}_k(r)} z_k'(t)dt.
\end{align}

Due to our partition into intervals, for a given counter $z_k$, every request $r \in D_k$ is only charged by the increase in $z_k$ from within the interval $r$ belongs to. Due to the fact that $z_k$ may increase by at most $y_k$ within each interval, we have $\forall k \forall r \in D_k: \int_{t \in f^{-1}_k(r)} z_k'(t)dt \leq y_k$. Recall that if $L^*(r) = k^*$ then $\rho(r) \in [\sum_{k=1}^{k^*} y_k, \sum_{k=1}^{k^*+1} y_k)$. Thus, overall, for any request $r$ with $L^*(r) = k^*$,

\begin{align}
\label{equation.single_location.charging_scheme_lemma.4}
\sum_{k \in [d]: r \in D_k \land k \leq k^*} \int_{t \in f^{-1}_k(r)} z_k'(t)dt \leq \sum_{k \in [d]: r \in D_k \land k \leq k^*} y_k \leq \sum_{j = 1}^{k^*} y_j \leq \rho(r).
\end{align}

For brevity let $\cup_{k \in [d]: r \in D_k \land k > k^*} f_k^{-1}(r) = \cup f_k^{-1}(r)$. For any $t \in \cup f_k^{-1}(r)$ let $k_{\ell}(t)$ denote the lowest counter of $\{k \in [d]: r \in D_k \land k > i\}$. Since $\alpha_i$ decrease exponentially,
\begin{align}
\label{equation.single_location.charging_scheme_lemma.5}
\sum_{k \in [d]: r \in D_k \land k > k^*}\int_{t \in f^{-1}_k(r)} z_k'(t)dt \leq 2 \int_{t \in \cup f^{-1}_k(r)} z_{k_{\ell}(t)}'(t)dt.
\end{align}

By the definition of $f_k(\cdot)$ we know that for every $t \in \cup f_k^{-1}(r)$, $\optim$ must have paid a momentary delay towards $\rho(r)$ (this is due to the fact that $k > k^*$  and therefore the second condition in $f_k(\cdot)$ must be satisfied). Since $k > k^*$, $z_{k_{\ell}(t)}'(t) \leq \alpha_{k^*+1}$ for all $t \in \cup f_k^{-1}(r)$. On the other hand, delaying $r$ costs at least $\alpha_{k^*+1}$ moment-wise since its level is $k^*$. Therefore,
\begin{align}
\label{equation.single_location.charging_scheme_lemma.6}
\int_{t \in \cup f^{-1}_k(r)} z_{k_{\ell}(t)}'(t)dt \leq \int_{t \in \cup f^{-1}_k(r)} \alpha_{k^*+1} \leq \rho(r).
\end{align}

\noindent Combining all of the above yields,
\begin{align*}
\sum_{k=1}^d \int_t z_k'(t)dt &\leq
2 \sum_{k^*=1}^d \sum_{r: L^*(r)=k^*} \sum_{k \in [d]: r \in D_k \land k \leq k^*}\int_{t \in f^{-1}_k(r)} z_k'(t)dt \\ &+ 
2 \sum_{k^*=1}^d \sum_{r: L^*(r)=k^*} \sum_{k \in [d]: r \in D_k \land k > k^*}\int_{t \in f^{-1}_k(r)} z_k'(t)dt \\ &\leq
2 \sum_{k^*=1}^d \sum_{r: L^*(r)=k^*} \rho(r) + 4 \sum_{k^*=1}^d \sum_{r: L^*(r)=k^*} \rho(r) \\ &=
6 \sum_r \rho(r) = 12 \cdot \optim,
\end{align*}
where the first inequality is due to (\ref{equation.single_location.charging_scheme_lemma.2}) (\ref{equation.single_location.charging_scheme_lemma.3}), the second inequality is due to (\ref{equation.single_location.charging_scheme_lemma.4})  (\ref{equation.single_location.charging_scheme_lemma.5}) (\ref{equation.single_location.charging_scheme_lemma.6}) and the last equality is due to (\ref{equation.single_location.charging_scheme_lemma.1}).
\end{proof}

\noindent We are now ready to prove our main theorem - Theorem \ref{theorem.single_location_O(1)_competitive}.

\begin{proof}[Proof of Theorem \ref{theorem.single_location_O(1)_competitive}]
Since $\sum_r \int_t z_r'(t)dt = \sum_k \int_t z_k'(t)dt$, Propositions \ref{proposition.single_location.bound_algorithm_by_counters} and \ref{proposition.single_location.bound_counters_by_OPT} yield the theorem.
\end{proof}

\section{Matching on a Metric}
\label{section.metric}

In this section we consider the Concave MPMD problem. Recall that in this problem the cost of a matching is comprised of both a connection cost and a delay cost. We solve this problem by first reducing the general metric to metrics based on weighted $\sigma$-HSTs and then define an algorithm for such a case.

\subsection{Reduction to Weighted $\sigma$-HSTs}
\label{subsection.metric.reduction_to_HSTs}

In order to reduce our problem to metrics defined by weighted $\sigma$-HSTs we follow the works of Emek et al. \cite{Online_matching_haste_makes_waste} and Azar et al. \cite{Polylogarithmic_Bounds_on_the_Competitiveness_of_Min_cost_Perfect_Matching_with_Delays} . We first introduce the definition of an $(\beta, \gamma)$-competitive algorithm. 

\begin{definition}
Let $\sigma$ denote an instance of Concave MPMD, let $\algalg$ denote a randomized online algorithm solving the problem and let $\algsol$ denote an arbitrary solution to the problem. Further denote by $\algsol_c$ and $\algsol_d$ the connection and delay costs incurred by the solution, $\algsol$. We say that $\algalg$ is $(\beta, \gamma)$-competitive if $\mathbb{E}[\algalg(\sigma)] \leq \beta \cdot \algsol_c + \gamma \cdot \algsol_d$.
\end{definition}

Next we randomly embed our general metric into weighted $\sigma$-HSTs. First, we formally define the notion of embedding a metric into another metric.

\begin{definition}
Let $G = (V_G, E_G, w_G)$ be a finite metric space and let $D$ be a distribution over metrics on $V_G$. We say that $G$ embeds into $D$ if for any $x,y \in V_G$ and any metric space $H = (V_G, E_H, w_H)$ in the support of $D$, we have $w_G(x,y) \leq w_H(x,y)$. We define the embedding's distortion as
\[
\mu = \max_{x\neq y \in V(G)} \frac{\mathbb{E}_{H \sim D}[w_H(x,y)]}{w_G(x,y)}.
\]
\end{definition}

We use the following lemma (introduced by Bansal et al. \cite{A_Polylogarithmic_Competitive_Algorithm_for_the_k_Server_Problem}) in order to embed our metric $G$ into a distribution over weighted 2-HSTs with height $O(\log n)$ where is $n = |V_G|$.

\begin{lemma}
\label{lemma.embed_metric_to_hst}
Any $n$-point metric $G$ can be embedded, with distortion $O(\log n)$, into a distribution $D$ supported on metrics induced by weighted 2-HSTs with height $O(\log n)$.
\end{lemma}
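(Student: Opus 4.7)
The plan is to invoke the Fakcharoenphol--Rao--Talwar (FRT) probabilistic partitioning scheme and post-process the resulting hierarchical structure to bound its height by $O(\log n)$. Concretely, I would first rescale $G$ so that its minimum nonzero pairwise distance is $1$, and let $\Delta$ denote the resulting aspect ratio. Then I would draw a uniformly random permutation $\pi$ of $V_G$ together with an independent radius scalar $\beta \in [1,2]$ sampled with density proportional to $1/\beta$. For each integer $i = 0, 1, \ldots, \lceil \log_2 \Delta \rceil$, I would partition $V_G$ by assigning each point $v$ to the cluster indexed by the earliest $u$ in $\pi$ with $d_G(u,v) \leq \beta \cdot 2^i$. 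These partitions form a refinement hierarchy, and the associated rooted tree $T$ whose leaves are $V_G$ and in which the edge between each level-$i$ node and its level-$(i+1)$ parent carries weight $2^i$ is a $2$-HST by construction.

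Next I would verify the distortion bound. Since cluster radii at level $i$ are at most $\beta \cdot 2^i$, the tree dominates $G$ pointwise, so $w_T(u,v) \geq w_G(u,v)$. For the expected upper bound, the classical FRT analysis yields $\mathbb{E}[w_T(u,v)] = O(\log n) \cdot w_G(u,v)$: for each scale $i$ one bounds the probability that $u$ and $v$ are first separated at level $i$ using the joint randomness of $\pi$ and $\beta$, charges the contribution $O(2^i)$, and sums over the relevant levels. Enumerating candidate separator points in order of their distances to $\{u,v\}$ and integrating against the $1/\beta$ density produces a harmonic-type sum over the $n$ candidate centers, yielding the desired $O(\log n)$ factor.

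The main obstacle is reducing the height of $T$ from $O(\log \Delta)$ down to $O(\log n)$, independent of the aspect ratio. My plan is to suppress any internal node of $T$ that has exactly one child, merging the corresponding edge weights; since consecutive suppressed weights form a geometric series of ratio $1/2$, the merged weight remains within a constant factor of the largest, so the $2$-HST property is preserved up to absorbable constants (which can be compensated by a final relabeling of edge weights to exact powers of $2$, at the cost of a constant factor in distortion). After suppression, only ``refinement levels'' (at which at least one cluster actually splits) survive. The delicate step is to bound the number of surviving levels by $O(\log n)$ rather than the naive bound of $n-1$: this requires a charging argument tailored to the FRT random partitions that balances refinements across scales, following the technique of Bansal et al. Once this height bound is in hand, the resulting distribution $D$ over $2$-HSTs of height $O(\log n)$ with $O(\log n)$ distortion is exactly what the lemma demands.
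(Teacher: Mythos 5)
First, note that the paper does not prove this lemma at all: it is imported verbatim from Bansal et al.~\cite{A_Polylogarithmic_Competitive_Algorithm_for_the_k_Server_Problem}, so the relevant comparison is to their argument rather than to anything in this paper. Your reconstruction of the FRT scheme (random permutation, random radius scalar with density proportional to $1/\beta$, domination by construction, and the harmonic-sum analysis giving expected stretch $O(\log n)$ per pair) is the standard one and is fine, modulo the usual caveat that the level-$i$ partitions must be defined via the suffix of each point's center signature (or by intersecting with the parent cluster) to actually form a refinement hierarchy.

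The genuine gap is exactly at the step you flag as delicate, and the mechanism you propose for it does not work. Suppressing internal nodes with a single child bounds the total number of internal nodes by $n-1$, but it does not bound the \emph{depth} by $O(\log n)$: a single root-to-leaf path can still pass through $\Theta(n)$ branching nodes. Concretely, for the exponentially spaced line metric $\{2^0, 2^1, \dots, 2^n\}$, every FRT tree splits off roughly one point per scale, so after degree-one suppression the tree is a caterpillar of depth $\Theta(n)$ with probability $1$; no ``charging argument over the FRT randomness balancing refinements across scales'' can rescue this, because the depth is large for every tree in the support. The idea you are missing is Bansal et al.'s \emph{heavy-child contraction}: contract the edge from a node $v$ to any child containing more than half of $v$'s leaves. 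Then every surviving edge at least halves the leaf count, so the depth is at most $\log_2 n$; the result is still a weighted $2$-HST because all re-attached children carry edge weights at most half the merged node's parent-edge weight; and domination survives up to a constant factor (repairable by rescaling) because at most one child of any node is heavy, so for any two leaves at least one of the two top edges below their original LCA is not contracted. Without this (or an equivalent) contraction rule, the height bound --- which is the entire content of the lemma beyond plain FRT --- is not established.
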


Next we use the following lemma (introduced by Emek et al. \cite{Online_matching_haste_makes_waste}) that shows how to convert a matching's cost from a metric in the support of the embedding to a cost on the original metric. Note that Emek et al. proved this for linear delay functions, however their result holds for concave delay functions as well. The proof is identical and is therefore omitted.

\begin{lemma}
\label{lemma.transfer_competitiveness_from_hst_to_general_metric}
Suppose that a metric $G$ can be embedded into a distribution $D$ supported on metric spaces over $V_G$ with distortion $\mu$. Additionally suppose that for every metric space in the support of $D$ exists a deterministic algorithm that is $(\beta, \gamma)$-competitive for the Concave MPMD problem. Therefore, there exists a $(\mu \beta, \gamma)$-competitive algorithm for the Concave MPMD problem on the metric $G$.
\end{lemma}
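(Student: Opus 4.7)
The plan is to build a randomized algorithm $\algalg$ for $G$ from the hypothesized deterministic algorithms on the support of $D$ in the most direct way: at time zero (before any request has arrived) sample a metric $H \sim D$, and then simulate on the online input the deterministic $(\beta,\gamma)$-competitive algorithm $\algalg_H$ that is guaranteed to exist for $H$. Whenever $\algalg_H$ decides to match two requests $r, r'$ in $H$, the algorithm $\algalg$ matches the same pair in $G$ at the same time. Because $H$ is sampled before any request arrives, this preserves the online structure; in particular $\algalg_H$ sees exactly the same request sequence it would see when running on an input in $H$, so the competitiveness guarantee applies verbatim.

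Next I would compare the cost of $\algalg$ on $G$ to the cost of $\algalg_H$ on $H$ sample-by-sample. By the defining property of the embedding, $w_G(x,y) \le w_H(x,y)$ for every $H$ in the support and every pair $x,y \in V_G$, so each matched pair contributes no more to the connection cost in $G$ than in $H$. The delay cost of a matching depends only on arrival and matching times, not on the underlying metric, so the delay cost of $\algalg$ on $G$ equals the delay cost of $\algalg_H$ on $H$. Hence $\algalg(G) \le \algalg_H(H)$ deterministically for every sample of $H$. To introduce the benchmark, take an arbitrary solution $\algsol$ on the instance in $G$ with connection cost $\algsol_c$ and delay cost $\algsol_d$, and regard the same set of matched pairs (with the same matching times) as a feasible—though not necessarily optimal—solution $\algsol_H$ for the instance embedded in $H$. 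Its delay cost equals $\algsol_d$ deterministically, and by the distortion bound its expected connection cost satisfies $\mathbb{E}_{H \sim D}[\algsol_H\text{'s connection cost}] \le \mu \algsol_c$, obtained by summing the distortion inequality over the matched pairs.

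Combining these pieces, apply the $(\beta,\gamma)$-competitive guarantee of $\algalg_H$ against the benchmark $\algsol_H$ to get $\algalg_H(H) \le \beta \cdot (\text{connection cost of } \algsol_H \text{ in } H) + \gamma \algsol_d$ for each sample, then take expectation over $H \sim D$ and chain with $\algalg(G) \le \algalg_H(H)$ to conclude $\mathbb{E}[\algalg(G)] \le \mu\beta \algsol_c + \gamma \algsol_d$, which is exactly the $(\mu\beta,\gamma)$-competitiveness we need. The main conceptual point—and the only place where one must be slightly careful—is to notice that the $(\beta,\gamma)$-competitive bound in the hypothesis is stated against an arbitrary solution and not only against the optimum on $H$; this is what lets us plug in the lifted benchmark $\algsol_H$ rather than compare to $\optim_H$, whose relationship to $\algsol$ would otherwise require a separate argument. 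Beyond that, the proof is essentially bookkeeping: linearity of expectation over the matched pairs for connection costs, and pathwise equality for delay costs.
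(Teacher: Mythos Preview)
Your proof is correct and follows precisely the standard argument of Emek et al.\ \cite{Online_matching_haste_makes_waste} that the paper invokes (the paper omits the proof, noting it is identical to the linear-delay case). Your observation that the $(\beta,\gamma)$-competitive guarantee is stated against an arbitrary benchmark solution, allowing the lifted $\algsol_H$ to be plugged in directly, is exactly the point that makes the argument go through.
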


Therefore, it is enough to show that an algorithm is $(O(1), O(h))$-competitive on metrics defined by weighted 2-HSTs, in order to yield an algorithm that is $O(\log n)$-competitive for the Concave MPMD problem on general metrics. We will do so in the following subsection.

\subsection{Matching on Weighted $\sigma$-HSTs}
\label{subsection.metric.matching_on_weighted_HSTs}

Let $T = ( V(T), E(T), w)$ denote an arbitrary weighted $2$-HST defined with respect to a general metric $G = (V(G), E(G), w)$. We denote its height (i.e., the largest distance between $T$’s root and one of its leaves) by $h$. Recall that $h = O(\log n)$ where $n$ denotes the size of the original metric, $G$.

In order to define our algorithm in this case we again use the delay counters $z_1, \ldots, z_d$ as defined in the previous section. However, in this case we add edge counters as well. We denote these counters by $z_e$ for every edge $e \in E(T)$. As with $z_k$, every $z_e$ will have an associated capacity and slope. Its capacity is simply $e$’s weight, $w_e$. Its slope will be defined later on.

Throughout this section we will make heavy use of the set of counters that are descendants of some counter in $F$. Formally, given a counter $z$ let $F_z$ denote the set of all counters that are descendants of $z$ in $F_z$.

As in the single location case, in this case we will associate requests with counters and have requests continually move between counters. Also similar to the earlier case, our algorithm will match any two requests belonging to the same counter. Furthermore, at any point in time, we consider all counters $z$ and
increase them at a rate of $\alpha_z$ if and only if (1) there exists a request associated with $z$ and (2) $F_z$ contains an odd number of requests. Finally, as before, once a counter reaches its capacity, its (only) request is moved to the next counter and the counter is zeroed.

In order to define the way in which the algorithm moves requests between counters we define an Directed Acyclic Graph (DAG) of counters; the nodes represent the counters (both edge and delay counters) and every edge represents the next counter the corresponding request will be moved to once the original counter is filled. Denote by $F(T,D)$ the DAG defined by the metric $T$ and delay function $D$.

\textbf{Defining $F(T,D)$}: Recall that our HST is denoted by $T = (V_T, E_T, w)$. We first iteratively add our delay counters $\{z_k\}_k$ from lowest index to highest, to $T$ as follows. We iterate over every leaf to root path in $T$ and add an edge with weight $y_k$ (i.e., the capacity of $z_k$) to the path between edges $e$ and $e'$ if and only if $w(e) \leq \sum_{j=1}^k y_j < w(e')$ and the delay counter was not already added to this path. Furthermore, we add the new edge below the node connecting $e$ and $e'$ in $T$ (see Figure \ref{figure.metric.adding_delay_counter_to_graph}). Note that we also assume an edge of weight $\infty$ going up from $T$'s root (in order for the process to be well defined). We denote the resulting tree as $F$. Finally, we define the counters (edge or delay) as the bottom nodes of the corresponding edges of $F$.

\begin{figure}[H]
\centering
\includegraphics[width= 8cm]{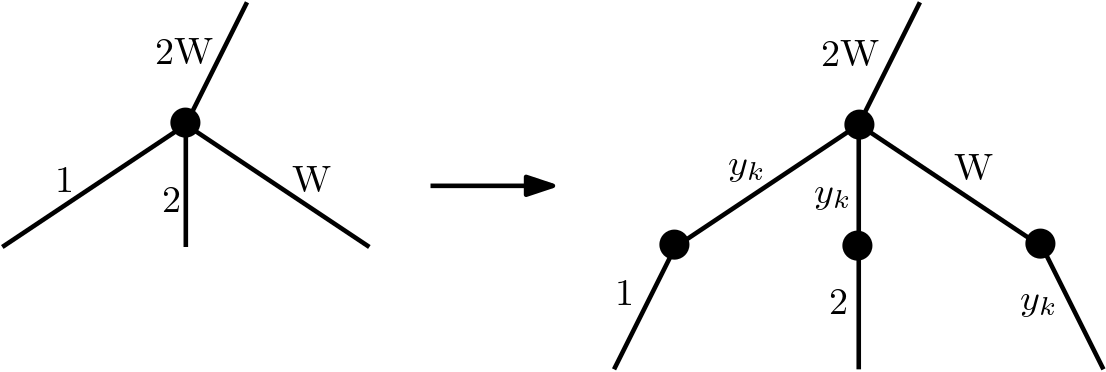}
\caption{Adding delay counter $z_k$ such that $\sum_{j=1}^k y_j = \sqrt{W}$ for some large $W$.}
\label{figure.metric.adding_delay_counter_to_graph}
\end{figure}

We note several properties of $F$. 

\begin{observation}\leavevmode
\label{observation.metric.properties_of_F}
\begin{itemize}
    \item $V_F$ is comprised of delay counters, edge counters (and maybe the root of $V_T$).
    \item If delay counter $z_i$ is a descendant of delay counter $z_j$ in $F$ then $i<j$.
    \item Any path that begins at a leaf and ends at $z_i$ must pass through $z_j$ for any $j < i$.
\end{itemize}
\end{observation}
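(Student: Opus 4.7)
The plan is to verify the three properties in order, with the third doing the main structural work and the first two following by inspection of the construction.

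The first property follows directly from how $F$ is built. Every non-root node of $T$ is, by definition, an edge counter (it is the bottom endpoint of some edge of $T$), and each iterative step inserts a new edge of weight $y_k$ whose bottom endpoint is declared to be the delay counter $z_k$. Hence every node of $F$ is either an edge counter inherited from $T$, a delay counter introduced by some iteration, or the root of $V_T$.

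For the third property, I would prove the following stronger invariant by induction on the iteration index: after inserting $z_1, \ldots, z_k$, every leaf-to-root path of the current tree contains $z_1, \ldots, z_k$ in order of index from leaf to root. Fix an arbitrary leaf-to-root path $P$ of $T$. By the insertion rule for $z_{k+1}$, one locates consecutive edges $e, e'$ on $P$ (in $T$) with $w(e) \le \sum_{j=1}^{k+1} y_j < w(e')$; such a pair exists because the weights along $P$ are weakly increasing toward the root and are capped by the fictitious $\infty$-weight edge above the root. The decisive observation is that for each earlier $z_\ell$ with $\ell \le k$, its triggering pair $(e_\ell, e_\ell')$ satisfies $w(e_\ell') > \sum_{j=1}^\ell y_j$, and the strict monotonicity of the partial sums in $\ell$ forces the insertion position of $z_{k+1}$ to lie weakly above every previously inserted $z_\ell$ on $P$; coupled with the rule that new counters stack above already-present ones at the same $T$-location, this preserves the ordering. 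The third property is then immediate: any leaf-to-$z_i$ path in $F$ is a prefix of a leaf-to-root path, and on that prefix each $z_j$ with $j < i$ sits strictly below $z_i$.

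The second property is a corollary of the ordering established for the third: if $z_i$ is a strict descendant of $z_j$ in $F$, then $z_i$ lies strictly below $z_j$ on some leaf-to-root path, and the leaf-to-root ordering of delay counters forces $i < j$. Alternatively one can argue directly from the construction: letting $(e_i, e_i')$ and $(e_j, e_j')$ be the triggering pairs, either the two triggering locations in $T$ are distinct, in which case $2$-HST weight monotonicity along the path gives $w(e_i') \le w(e_j)$ and hence $\sum_{k'=1}^i y_{k'} < w(e_i') \le w(e_j) \le \sum_{k'=1}^j y_{k'}$, or they coincide, in which case the stacking rule forces the earlier (smaller-indexed) insertion to lie below. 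The main obstacle I anticipate is making the construction precise at shared insertion locations and in the degenerate case when $\sum_{j=1}^{k+1} y_j$ is below the weight of the lowest edge on $P$ (the natural reading is to place $z_{k+1}$ at the very bottom of $P$); once these are pinned down, the monotonicity arguments above go through cleanly.
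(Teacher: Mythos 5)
Your proof is correct. The paper states this as an Observation with no proof at all, treating all three items as immediate from the construction of $F$; your argument is exactly the intended one, resting on the two monotonicity facts (edge weights increase from leaf to root in the HST, and the partial sums $\sum_{j\le k} y_j$ increase in $k$), so the insertion positions of $z_1, z_2, \ldots$ march weakly upward along every leaf-to-root path. You also correctly identify and resolve the two points the paper leaves ambiguous — the stacking order when several delay counters share a triggering pair $(e,e')$ (the rule ``add the new edge below the node connecting $e$ and $e'$ \emph{in $T$}'' places each new counter above the previously inserted ones), and the placement of $z_k$ at the very bottom of a path whose lowest edge already exceeds $\sum_{j\le k} y_j$ — and both readings are the ones required for the Observation (and for the algorithm, which starts requests at the leaves of $F$) to be consistent.
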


Now that we have defined $F(T,D)$ we define an edge counter $z_e$'s slope as the slope in the first delay counter encountered on the path from $z_e$ to the root of $F$. Note that since we assumed that $\lim_{t \rightarrow \infty} D(t) = \infty$, we are guaranteed that there always exists such a delay counter.

Recall that for any delay counter $z_k$, its capacity is defined as $y_k = \alpha_k \ell_k$. Therefore, we get the following remark.

\begin{remark}
\label{remark.metric.slope_of_edge_counter}
By the definition of $F$, if an edge counter $z_e$ has a slope of $\alpha_k$ then its capacity  is $y_e = w_e \in [\sum_{i = 1}^{k-1} y_i, \sum_{i = 1}^k y_i)$ if $k > 1$ and $y_e = w_e \in [0, y_1)$ otherwise. 
\end{remark}

\noindent In Algorithm \ref{algorithm.metric} we formally define our algorithm which we denote by Metric-Algorithm (MA) (deferred to the Appendix).

\begin{remark}
\label{remark.MA_terminates}
We note that the algorithm will match all requests since a request increases its counter if the subtree rooted at its counter does not contain any requests. Therefore, eventually all request will move up to $F$'s root (and match) if otherwise unmatched.
\end{remark}

\begin{theorem}
$\algma$ is $(O(1), O(h))$-competitive for any weighted 2-HST, $T$ with height $h$ and any concave delay function, $D$.
\end{theorem}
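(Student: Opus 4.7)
The plan is to mirror the two-step architecture of the proof of Theorem \ref{theorem.single_location_O(1)_competitive}, promoted from a single chain of counters to the tree $F(T,D)$. Decompose $\algma = \algma_c + \algma_d$ and $\optim = \optim_c + \optim_d$ into their connection and delay components. The target chain of inequalities is
\[
\algma \;\leq\; O(1)\cdot\sum_{z\in V_F}\int_t z'(t)\,dt \;\leq\; O(1)\cdot\optim_c + O(h)\cdot\optim_d.
\]

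For the first inequality, I would bound $\algma_c$ and $\algma_d$ separately. Each time an edge counter $z_e$ fills up, its single associated request traverses edge $e$ of $T$, contributing $w_e$ to the algorithm's connection cost; summing along the path between any two matched requests gives $\algma_c \leq \sum_e\int_t z_e'(t)\,dt$. For $\algma_d$ I would generalize Lemma \ref{lemma.single_location.real_delay_bounded_by_aux} and Proposition \ref{proposition.single_location.bound_algorithm_by_counters} with ``chain level $k$'' replaced by ``node $z\in V_F$'': the per-request overshoot telescopes into the counter increases along the leaf-to-root path in $F$ on which the request travels, and the $\sum_r\int_t s_r(t)\,dt$ term is at most twice the total counter increase by the invariant that within every subtree of $F$ the counter holding the current lowest-position unmatched request always rises, combined with the fact (inherited via Remark \ref{remark.metric.slope_of_edge_counter} and the construction of $F$) that slopes along every leaf-to-root path in $F$ still decrease geometrically.

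For the second inequality, I would rerun the interval / odd-subinterval charging scheme of Proposition \ref{proposition.single_location.bound_counters_by_OPT} separately at every counter $z\in V_F$, working in the subtree $F_z$ in place of the single chain. The parity lemmas (counterparts of Lemmas \ref{lemma.single_location.interval_parity_of_SLA}--\ref{lemma.single_location.middle_interval_parity_OPT}) are inherited unchanged: $z$ rises only when $F_z$ currently holds an odd number of unmatched algorithm requests, which forces OPT at that moment to either (a) leave an unmatched request whose geographic location lies in $F_z$'s leaf-support (or whose OPT partner does), in which case I charge the momentary rise to that request's OPT delay, or (b) when $z=z_e$ is an edge counter, have an OPT matching edge crossing $e$ in $T$, in which case I charge $w_e$ to $\optim_c$. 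Case (b) telescopes to $O(1)\cdot\optim_c$ since each OPT matching path crosses each HST ancestor edge at most once. Case (a) applied to delay counters contributes $O(1)\cdot\optim_d$ exactly as in Proposition \ref{proposition.single_location.bound_counters_by_OPT}.

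The hard part, and the source of the $O(h)$ factor, is case (a) applied to \emph{edge} counters. A single OPT-delayed request $r$ may be simultaneously charged by every edge counter on the $F$-path from $r$'s current location up to its OPT partner. By Remark \ref{remark.metric.slope_of_edge_counter} the slope of each such $z_e$ is at most the slope of the first delay counter above it in $F$, which in turn is comparable to OPT's instantaneous delay rate on $r$. A careful accounting then shows that the total rate of simultaneous edge-counter charges to $r$ is at most $O(h)$ times OPT's delay rate on $r$, the factor $h$ arising because the sum of $T$-weights along the charged path is bounded by the HST height. Integrating over time and combining the contributions of cases (a) and (b) over all counters yields the claimed $(O(1),O(h))$-competitive bound.
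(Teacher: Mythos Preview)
Your two-step architecture matches the paper's exactly, and the interval/odd-subinterval charging at every counter of $F$ is precisely what the paper does. Two technical points, however, are not correct as stated and would not go through as written.

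\medskip
\textbf{(1) The bound $\sum_r s_r(t)\le 2\sum_r z_r'(t)$.} Your justification---``the lowest-position request in every subtree rises, and slopes along every leaf-to-root path in $F$ decrease geometrically''---does not survive the move from a chain to a tree. Slopes along an $F$-path are only \emph{non-increasing}, not geometrically decreasing: all edge counters sitting between two consecutive delay counters $z_{k-1}$ and $z_k$ share the same slope $\alpha_k$. Moreover, in a tree there is no single ``lowest'' request that dominates all others. The paper's argument (Lemma~\ref{lemma.metric.non_real_counter_bounded_by_actual_counter}) is instead an \emph{injective} map: each frozen request $r$ is sent, by repeatedly descending into an odd-parity child, to a rising descendant request $f_t(r)$; since descendants have slope at least that of ancestors, $s_r(t)\le s_{f_t(r)}(t)=z'_{f_t(r)}(t)$, and injectivity gives the factor~$2$. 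No geometric decay is used.

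\medskip
\textbf{(2) The source of the $O(h)$.} Your sentence ``the factor $h$ arising because the sum of $T$-weights along the charged path is bounded by the HST height'' conflates a count with a weight and is not the mechanism. In the paper, the counters that simultaneously charge a single OPT-delayed request $r$ all lie on one root-path in $F$. Among these, the \emph{delay} counters have distinct, geometrically decreasing slopes and sum to at most twice the largest one; but there can be up to $h$ \emph{edge} counters on the path, possibly all with the same slope. Hence $\sum_{\hat z}\hat z'(t)\le 2h\cdot \bar z'(t)$, where $\bar z$ is the largest-slope charging counter. To then compare $\bar z'(t)$ with OPT's instantaneous delay rate on $r$ you must first have excluded the possibility that some request in the interval has delay-level $\ge k(\hat z)$; the paper handles this by a separate bucket $B_{\hat z}$ (the analogue of the ``high delay-level'' case in Proposition~\ref{proposition.single_location.bound_counters_by_OPT}), which you do not isolate. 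Finally, note that the ``crossing edge'' case (your case~(b)) also applies when $\hat z$ is a \emph{delay} counter, contributing another $O(1)\cdot\algsol_c$ via the placement rule for delay counters in $F$; this is a small but necessary piece you omitted.
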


In order to prove our theorem we follow the proof presented for the single location case: we first upper bound the algorithm's cost by the overall increase in its counters and then upper bound this by the connection and delay costs of any (arbitrary) solution. 

\begin{definition}
For a request $r$ let $z_r'(t)$ be defined as the slope of the counter $r$ belongs to at time $t$ and 0 if it does not belong to any counter.
\end{definition}

\subsubsection{Upper Bounding $\algma$'s Cost}
\label{subsection.metric.upper_bounding_MA_cost}

\begin{proposition}
\label{proposition.metric.bound_algorithm_by_counters}
$\algma \leq 4 \sum_r \int_t z_r'(t)dt$.
\end{proposition}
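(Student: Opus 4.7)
The plan is to follow the template of Proposition \ref{proposition.single_location.bound_algorithm_by_counters}, separating $\algma$'s total cost into its delay component and its connection component, and bounding each by counter increases in $F$.

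For the \emph{connection cost}, the key observation is that whenever two requests $r_1, r_2$ are matched at a common counter $z\in F$, the $T$-distance between their original leaves equals the sum of capacities of the edge counters lying on the $F$-path between the two leaves, since $F$ is obtained from $T$ by inserting delay counters (which do not contribute to the $T$-metric). Each traversal of an edge counter $z_e$ by a request fills $z_e$ completely to its capacity $w_e$, contributing exactly $w_e$ to $z_e$'s total increase. Summing over all matched pairs then gives
\[
\text{connection cost of }\algma \;\le\; \sum_{\text{edge counters } z_e}\int_t z_e'(t)\,dt.
\]

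For the \emph{delay cost}, I adapt Lemma \ref{lemma.single_location.real_delay_bounded_by_aux}. Reusing $d_r'(t)$ and $s_r(t)$ from Definition \ref{definition.single_location.d'_r(t)_s_r(t)}, where $s_r(t)$ is now taken to be the slope of whichever counter (delay or edge) $r$ occupies in $F$ at time $t$, the same case analysis gives, for every request $r$,
\[
\int_t d_r'(t)\,dt \;\le\; \sum_{k=1}^{K(r)-1} y_k \;+\; \int_t s_r(t)\,dt,
\]
where $K(r)$ denotes the largest delay-counter index that $r$ was ever associated with. Summing over $r$, the first term is bounded by the total increase in the delay counters of $F$, since $r$ fills each of $z_1,\dots,z_{K(r)-1}$ completely as it moves up and thus contributes $y_k$ to $z_k$'s total increase.

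The remaining step is to bound $\sum_r \int_t s_r(t)\,dt$ by a constant multiple of $\sum_r \int_t z_r'(t)\,dt$, the total counter increase in $F$. The structural facts exploited here (Observation \ref{observation.metric.properties_of_F} together with the definition of edge-counter slopes) are that slopes weakly decrease along any leaf-to-root path in $F$ and halve whenever two consecutive delay counters are crossed, and that the bottom-most counter occupied by an unmatched request on any leaf-to-root chain is automatically active since its $F$-subtree then contains exactly one unmatched request. By charging each request's slope to an active counter on its root-path chain and using the geometric decay to absorb the slopes of requests higher on the same chain, the desired constant-factor bound follows; combining with the two earlier bounds then yields $\algma \le 4\sum_r \int_t z_r'(t)\,dt$. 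The principal technical obstacle, compared to the single-location case, lies in this final step: $F$ may have several active counters simultaneously on different branches, and the charging scheme must be set up carefully so as not to double-count either the slopes of requests near branching points or the increases of counters above those points.
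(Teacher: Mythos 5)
Your decomposition into connection and delay costs, the bound on the connection cost via filled edge counters, and the adaptation of Lemma \ref{lemma.single_location.real_delay_bounded_by_aux} all match the paper's proof and are fine. The gap is in the one step you defer: bounding $\sum_r \int_t s_r(t)\,dt$ by $2\sum_r \int_t z_r'(t)\,dt$. The mechanism you invoke --- geometric decay of slopes along a leaf-to-root chain in $F$ --- does not work here, because slopes do \emph{not} decay geometrically along such a chain: every edge counter inherits the slope of the first delay counter above it, so all edge counters lying between two consecutive delay counters share the \emph{same} slope, and there can be up to $h$ of them. A chain of frozen requests sitting on such a run of equal-slope edge counters cannot be absorbed into a single active counter below at constant cost; the chain-based charging you describe only yields a factor of $O(h)$. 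That loss is not harmless: combined with Proposition \ref{proposition.metric.bound_counters_by_OPT} it gives an $(O(h),O(h^2))$-competitive guarantee on the HST and hence only $O(\log^2 n)$ overall, whereas the constant $4$ here is exactly what is needed for the final $O(\log n)$ bound. You correctly flag the double-counting danger at branching points, but you do not resolve it, and the single-location geometric-sum trick is precisely the part that fails to transfer.

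The paper's Lemma \ref{lemma.metric.non_real_counter_bounded_by_actual_counter} avoids geometric decay entirely. At each time $t$ it partitions the unmatched requests into the frozen ones $R_1$ (where $s_r(t)\neq z_r'(t)$) and the active ones $R_2$, and builds an \emph{injection} $f_t:R_1\to R_2$: from a frozen request's counter, repeatedly descend into a child whose subtree contains an odd number of requests until reaching an occupied counter, which is then necessarily active. Since $f_t(r)$ sits strictly below $r$ in $F$, its slope is at least $s_r(t)$, and injectivity means each active request absorbs the slope of at most one frozen request, giving $\sum_{r}s_r(t)\le 2\sum_r z_r'(t)$ pointwise in $t$ with no dependence on $h$ and no double counting across branches. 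You would need to replace your chain-charging by this (or an equivalent) one-to-one assignment to close the argument.
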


We define $\delta_t(r), k_t(r), d_r'(t)$ and $s_r(t)$ as in Definitions \ref{definition.single_location.delta_t(r)_k_t(r)} and \ref{definition.single_location.d'_r(t)_s_r(t)}. Note that in the metric case, however, if $r$ is associated with edge counter $z_e$ at time $t$ then $k_t(r)$ is defined as the index of the closest delay counter on the path from $z_e$ to the root. Therefore, $s_r(t) = \alpha_{z_e} = \alpha_{k_t(r)}$. In order to prove Proposition \ref{proposition.metric.bound_algorithm_by_counters} we introduce the following lemma.

\begin{lemma}
\label{lemma.metric.non_real_counter_bounded_by_actual_counter}
At any moment $t$, $\sum_r s_r(t) \leq 2\sum_r z_r'(t)$.
\end{lemma}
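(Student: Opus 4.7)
The plan is to prove a local, subtree-level strengthening of the inequality by structural induction on $F$ (bottom-up) and then evaluate it at the root. For each counter $z \in V_F$, let $p(z)$ denote the number of pending requests whose counters lie in $F_z$, and set $S(z) = \sum_{r : z_r \in F_z} s_r(t)$ and $Z(z) = \sum_{r : z_r \in F_z} z_r'(t)$. The claim I will establish, which yields the lemma immediately upon evaluation at the root of $F$, is the following three-case invariant: (A) if $p(z)$ is odd, then $S(z) \le 2 Z(z) - \alpha_z$; (B) if $p(z) \ge 2$ is even, then $S(z) \le 2 Z(z)$; (C) if $p(z) = 0$, then $S(z) = Z(z) = 0$.

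Two ingredients drive the induction. The first is the algorithmic rule governing when $z$ contributes to $Z$: the counter $z$ is increasing at rate $\alpha_z$ iff $z$ holds a pending request and $p(z)$ is odd. The second is the structural fact, readable off the construction of $F$, that counter slopes are non-increasing along root-ward paths; hence every child $c$ of $z$ in $F$ satisfies $\alpha_c \ge \alpha_z$, with a factor-of-two gap whenever $c$ is a delay counter of lower index than the nearest delay counter above $z$. The base case is immediate: a leaf with $p(z) = 1$ has its unique request increasing, so $S(z) = Z(z) = \alpha_z$ and (A) saturates, while $p(z) = 0$ is trivial. For the inductive step at an internal $z$ with children $c_1, \ldots, c_m$, summing the child invariants yields $\sum_i S(c_i) \le 2 \sum_i Z(c_i) - \sum_{i : p(c_i) \text{ odd}} \alpha_{c_i}$. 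I then split on whether $z$ carries a pending request and on the parity of $\sum_i p(c_i)$, four cases in all.

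The main obstacle is the delicate case in which $z$ itself carries a pending request and $p(z)$ is even $\ge 2$, so that $z$ is non-increasing yet adds an unmatched $\alpha_z$ to $S(z)$. To derive (B) in this case I need $\sum_{i : p(c_i) \text{ odd}} \alpha_{c_i} \ge \alpha_z$; this follows because the induced parity forces at least one odd child $c_{i^\ast}$, and $\alpha_{c_{i^\ast}} \ge \alpha_z$ by the structural fact. This is as tight as the induction ever becomes---saturated when the odd child is an edge counter sharing its slope with $z$---and it illustrates why both the $-\alpha_z$ slack in (A) and the factor $2$ in (B) are essential. The remaining three cases follow the same pattern, using the odd-child slack either to discharge $z$'s own $\alpha_z$ contribution or to produce the $-\alpha_z$ on the right-hand side of (A). Evaluating the invariant at the root of $F$, where $p(\text{root})$ equals the total number of pending requests, completes the proof.
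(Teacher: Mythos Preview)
Your argument is correct and genuinely different from the paper's. The paper proves the lemma by an explicit charging: it partitions the pending requests into $R_2$ (those whose counter is increasing) and $R_1$ (those whose counter is frozen), then builds an injection $f_t\colon R_1\to R_2$ by walking from each frozen request down through a chain of odd-parity children until it hits an increasing request; injectivity follows from the tree structure, and $s_r(t)\le s_{f_t(r)}(t)$ from the monotonicity of slopes. This immediately gives $\sum_{R_1} s_r(t)\le \sum_{R_2} z_r'(t)$ and hence the factor $2$. Your approach replaces this explicit injection with a strengthened local invariant (the $-\alpha_z$ slack in the odd case) propagated bottom-up; the four cases you outline check out, and the key observation that every child has slope at least $\alpha_z$ is exactly what both proofs need. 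The paper's argument is perhaps more transparent about \emph{why} the factor is $2$ (each increasing request pays once for itself and at most once as the image of a frozen request), while your inductive invariant is more self-contained and avoids the slightly delicate injectivity verification. One minor point: $F$ has no finite root, but your invariant can be read off at any counter high enough to contain all pending requests, which suffices.
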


\begin{proof}
Let $R$ denote all unmatched requests by our algorithm at time $t$. Recall that by the definition of our algorithm $z_r'(t) \neq s_r(t)$ if and only if $r$ belongs to a counter at time $t$ such that an odd number of requests exist on lower counters (i.e., descendants with respect to $F$). For time $t$, let $R_1$ denote all requests for which $z_r'(t) \neq s_r(t)$ and let $R_2$ denote all requests for which $z_r'(t) = s_r(t)$. Therefore, $\sum_r s_r(t) = \sum_{r \in R_1} s_r(t) + \sum_{r \in R_2} z_r'(t)$ and it is enough to show that $\sum_{r \in R_1} s_r(t) \leq \sum_{r \in R_2} z_r'(t)$.

We define a function $f_t(r) \in R_1 \rightarrow R_2$ using the following process. Let $v$ denote the node in $F$ that represents the counter $r$ belongs to at time $t$. Since $r \in R_1$ we are guaranteed that $v$ has a child $u_1$ (in $F$) that contains an odd number of requests contained within the subtree rooted at $u_1$ (if there is more than one such child, choose arbitrarily). Therefore either (1) $u_1$'s counter has a request belonging to it or (2) $u_1$ has a child $u_2$ that contains an odd number of requests contained within the subtree rooted at $u_2$. If (1) is the case then we define $f_r(t) = u_1$. Otherwise we continue the process iteratively ultimately defining $f_r(t) = u_i$ for some $u_i$ such that $u_i$ contains a request. Note that by the definition of the process $u_i$ will have an even number of requests on its descendant counters (as defined by $F$).

We first note that $f_t(\cdot)$ is clearly well defined. Next we argue that $f_t(\cdot)$ in injective. Assume towards contradiction that $f_t(r) = f_t(r') = u$ for $r \neq r'$. Since $u$ is a descendant of both the counters containing $r$ and $r'$ and since $F$ is a DAG, either $r$ is contained within the path $u \rightarrow r'$ or $r'$ is contained within the path $u \rightarrow r$. If the former is true then $r$ would have been encountered during the process of defining $f_t(r')$ before $u$ and therefore we must have $f_t(r') = u = r$. On the other hand this means that $r \in R_1 \cap R_2$ in contradiction to their definitions. The latter case similarly leads to a contradiction. 

Since $f_t(r)$ belongs to a counter which is a descendant of the counter that $r$ belongs to, we are guaranteed that $s_r(t) \leq s_{f_t(r)}(t)$. Therefore,
\[
\sum_{r \in R_1} s_r(t) \leq \sum_{r \in R_1} s_{f_t(r)}(t) \leq \sum_{r \in R_2} s_r(t) = \sum_{r \in R_2} z_r'(t),
\]
and overall,
\[
\sum_r s_r(t) \leq \sum_{r \in R_1} s_r(t) + \sum_{r \in R_2} z_r'(t) \leq 2\sum_{r \in R} z_r'(t).
\]
\end{proof}





\noindent We are now ready to prove Proposition \ref{proposition.metric.bound_algorithm_by_counters}.

\begin{proof}[Proof of Proposition \ref{proposition.metric.bound_algorithm_by_counters}]

Let $\algma_c$ and $\algma_d$ denote $\algma$'s connection and delay costs respectively. Furthermore let $n_e$ denote the number of times edge $e$ was bought by our algorithm. Therefore, $\algma_c = \sum_e n_e w_e$.

We first observe that due to the definition of our algorithm we connect a request through an edge if and only if the corresponding edge counter was filled (since the request must have passed through that counter). Furthermore, once the request moves to the next counter in the tree, the former counter is emptied. Therefore,
\begin{align}
\label{inequality.metric_connection_cost_bound}
\sum_e n_e w_e \leq \sum_r \int_t z_r'(t)dt.    
\end{align}

Observe that by the definition of $F$ and $k_{m(r)}(r)$, we are guaranteed that any request $r$ must have passed through delay counters $z_1, \ldots, z_{k_{m(r)}(r)-1}$ before being matched. Due to the fact that a counter must be filled in order to be emptied, we are guaranteed that,
\begin{align}
\label{inequality.metric_counter_emptying_bounded_by_actual_counter_increase}
\sum_r \sum_{j=1}^{k_{m(r)}(r) - 1} y_j \leq \sum_r \int_t z_r'(t)dt.
\end{align}

\noindent We note that Lemma \ref{lemma.single_location.real_delay_bounded_by_aux} clearly holds for $\algma$ as well. Therefore,
\begin{align*}
\algma &= \algma_d + \algma_c \leq 
\algma_d + \sum_r \int_t z_r'(t)dt \\ &\leq 
\sum_r \sum_{j=1}^{k_{m(r)}(r) - 1} y_j + \sum_r \int_t s_r(t)dt + \sum_r \int_t z_r'(t)dt \\ &\leq
\sum_r \int_t s_r(t)dt + 2 \sum_r \int_t z_r'(t)dt \leq
4 \sum_r \int_t z_r'(t)dt,
\end{align*}
where the first inequality is due to equation (\ref{inequality.metric_connection_cost_bound}), the second is due to Lemma \ref{lemma.single_location.real_delay_bounded_by_aux}, the third is due to equation (\ref{inequality.metric_counter_emptying_bounded_by_actual_counter_increase}) and the last is due to Lemma \ref{lemma.metric.non_real_counter_bounded_by_actual_counter}.
\end{proof}

\subsubsection{Lower Bounding An Arbitrary Solution's Cost}

For an arbitrary solution to the given instance, we denote by $\algsol_c$ and $\algsol_d$ its connection and delay costs respectively. In this section we will lower bound $\algsol_c$ and $\algsol_d$ by the overall increase in counters by charging this increase to different matchings performed by $\algsol$. The following proposition states this formally.

\begin{proposition}
\label{proposition.metric.bound_counters_by_OPT}
$\sum_r \int_t z_r'(t)dt \leq O(1)\cdot \algsol_c + O(h)\cdot \algsol_d.$
\end{proposition}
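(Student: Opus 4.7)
The plan is to generalize the interval-based charging scheme of Proposition~\ref{proposition.single_location.bound_counters_by_OPT} from a single line of counters to the tree $F=F(T,D)$. I decompose the total counter increase node-wise,
\[
\sum_r \int_t z_r'(t)\,dt \;=\; \sum_{v\in V_F} \int_t z_v'(t)\,dt,
\]
and handle every counter $z_v$ (whether an edge counter or a delay counter) by the same recipe, dispatching its increases either against $\algsol_c$ or against $\algsol_d$.

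For each counter $z_v$, I replicate Definition~\ref{definition.single_location.counter_intervals} and let $0=t^v_0<t^v_1<\cdots<t^v_{m_v}=\infty$ be its reset times. I redefine $R(I^v_i)$ as the set of requests whose leaves lie in the subtree of $F$ rooted at $v$ and that arrive during $I^v_i$. The tree analogues of Lemmas~\ref{lemma.single_location.interval_parity_of_SLA},~\ref{lemma.single_location.last_interval_parity_of_OPT},~\ref{lemma.single_location.middle_interval_parity_OPT} and~\ref{lemma.single_location.the_number_of_intervals_is_odd} then carry over: they are purely combinatorial once ``live with respect to the interval'' is restated as ``$\algsol$ has an unmatched request in the subtree of $F$ below $v$, or an unmatched request matched by $\algsol$ to one in that subtree.'' Their proofs go through unchanged because the parity rule of $\algma$ (condition (2) in its definition) is precisely the parity of requests in the subtree rooted at $v$.

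Equipped with this, I pair up intervals and charge, at each moment when $z_v$ increases, to a witness request $r$ of $\algsol$. Two cases arise. If $r$'s $\algsol$-partner lies outside the subtree below $v$, then that $\algsol$-pair must traverse the $T$-edge $e_v$ above $v$, and I charge the momentary increase of $z_v$ to that crossing; summing over $v$, each $\algsol$-pair is charged only at edges it actually crosses in $T$, so these charges total $O(1)\cdot\algsol_c$. Otherwise $\algsol$ is currently delaying $r$, and I charge the increase to $\rho(r)$ by exactly the level-based argument of Proposition~\ref{proposition.single_location.bound_counters_by_OPT}: at counters of index $\le L^*(r)$ the charges telescope into $\rho(r)$, while at counters of index $>L^*(r)$ the exponentially decreasing slopes yield another $O(1)\cdot\rho(r)$ per appearance. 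The factor $h$ enters only here, because a single delayed $\algsol$-request $r$ can simultaneously serve as the live witness at every ancestor of its leaf in $F$, and in particular at each of the at most $h$ edge counters on its root-to-leaf path in $T$, so its delay may be charged up to $h$ times in total, producing the $O(h)\cdot\algsol_d$ term.

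The main obstacle is making the combinatorial lemmas survive the tree generalization while correctly accounting for $\algsol$-pairs that cross the subtree boundary. In the single-location case, ``odd parity forces $\algsol$ to be live'' was immediate because every request belonged to the single interval; here parity is taken inside a subtree while $\algsol$ is free to match across that boundary. The cleanest resolution is to let the definition of ``live'' bifurcate into an internal case (delayed $\algsol$-request inside the subtree, charged to $\algsol_d$) and an external case (matched by $\algsol$ across $e_v$, charged to $\algsol_c$), then run the rest of the argument verbatim. Once this bookkeeping is nailed down, summing the two flavors of charges over all $v\in V_F$ and invoking the telescoping provided by Remark~\ref{remark.metric.slope_of_edge_counter} and the geometric decay of slopes yields the claimed $O(1)\cdot\algsol_c + O(h)\cdot\algsol_d$ bound.
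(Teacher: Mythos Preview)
Your proposal is correct and follows essentially the same route as the paper's proof: interval decomposition per counter in $F$, tree versions of Lemmas~\ref{lemma.single_location.interval_parity_of_SLA}--\ref{lemma.single_location.the_number_of_intervals_is_odd} (these appear in the paper as Lemmas~\ref{lemma.metric.interval_is_odd}--\ref{lemma.metric.the_number_of_intervals_is_odd}), pairing of intervals, and a three-way charging scheme into (A) pairs crossing an ancestor edge of $\hat z$ in $F$, (B) requests whose $\algsol$-delay level is at least $k(\hat z)$, and (C) the remaining ``live'' witnesses. The one place where the paper is more explicit than your sketch is that in each of cases (A), (B), (C) it separates the contribution of edge counters from that of delay counters: for (A) the delay-counter charges are bounded via Remark~\ref{remark.metric.slope_of_edge_counter} (equation~\eqref{equation.metric.bound_delay_counters_from_A_to_opt}), for (B) the edge-counter charges need the $2$-HST geometric decay (equation~\eqref{equation.metric.bound_edge_counters_from_B_to_opt}), and in (C) the factor $h$ arises precisely because the at most $h$ edge counters on a leaf-to-root path need not have distinct slopes, whereas the delay counters do---so your phrase ``charged only at edges it actually crosses in $T$'' and ``telescope into $\rho(r)$'' are accurate for one half of each case but require these extra one-line observations for the other half.
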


The rest of this section is dedicated towards the proof of Proposition \ref{proposition.metric.bound_counters_by_OPT}. Throughout this section, given a counter $z \in V(F)$, we will refer to the set of counters belonging to the tree rooted at $z$  in $F$ as $F_z$ (note that $z \in F_z$).

We charge the increase in counters to $\algsol$ using the same flavor as in the proof of Theorem \ref{theorem.single_location_O(1)_competitive}. Specifically, by splitting the increase in each counter into intervals - recall Definition \ref{definition.single_location.counter_intervals}. Note that in the metric case, an interval ends once the request that is associated with the counter moves to the parent of the counter, as defined by $F$.

We first give several definitions (note that they differ from the earlier case due to the fact that now during a given interval, requests may arrive that we want to ignore; specifically, requests that arrive at points in the metric that are unrelated to the requests we would like to consider).

Throughout the remainder of this section, given a counter $\hat{z}$ and an interval $I_i^{\hat{z}}$ defined with respect to $\hat{z}$, we define $\bm{R(I_i^{\hat{z}})}$ to be the set of all requests that arrived during the time interval $I_i^{\hat{z}}$ and that were given to a counter in $F_{\hat{z}}$ upon arrival.


\begin{definition}
Given an interval $I^{\hat{z}}_i = [t^{\hat{z}}_i, t^{\hat{z}}_{i+1})$ as defined by some counter $\hat{z}$ we denote $|R(I^{\hat{z}}_i)| = \gamma_{\hat{z}}$. We further denote their arrival times by $\{a_j\}_{j=1}^{\gamma_{\hat{z}}}$. Given these notations, if $\gamma_{\hat{z}}$ is odd then we define $I^{odd}_{i,{\hat{z}}} = \cup_{j=1}^{\lfloor \frac{\gamma_{\hat{z}}}{2} \rfloor}[a_{2j-1}, a_{2j}) \cup [a_{\gamma_{\hat{z}}},t_{i+1})$. Otherwise, if $\gamma_{\hat{z}}$ is even we define $I^{odd}_{i,{\hat{z}}} = \cup_{j=1}^{\lfloor \frac{\gamma_{\hat{z}}}{2} \rfloor}[a_{2j-1}, a_{2j})$.
\end{definition}

As in the former section, we refer to $I^{odd}_{i,{\hat{z}}}$ as the odd-subinterval of $I^{\hat{z}}_i$. The following lemma states that any point for which $\hat{z}$ increases must lie within the odd-subinterval of the corresponding interval.

\begin{lemma}
\label{lemma.metric.interval_is_odd}
Consider some time interval $I^{\hat{z}}_i = [t^{\hat{z}}_i, t^{\hat{z}}_{i+1})$ defined with respect to counter $\hat{z}$. Let $t^{\hat{z}} \in I^{\hat{z}}_i$ denote some time for which $\hat{z}$ increases. Therefore, $t \in I_{i, {\hat{z}}}^{odd}$.
\end{lemma}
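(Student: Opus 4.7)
My plan is to mimic the proof of Lemma \ref{lemma.single_location.interval_parity_of_SLA} (the single-location analog), but with the ``lower counters $z_1,\dots,z_{k-1}$'' replaced by the subtree $F_{\hat z}\setminus\{\hat z\}$, and more generally the full set ``counters containing a request at level $\le k$'' replaced by $F_{\hat z}$. The algorithm's rule in Section~\ref{subsection.metric.matching_on_weighted_HSTs} states that $\hat z$ increases at time $t$ if and only if (i) some request is associated with $\hat z$ and (ii) $F_{\hat z}$ contains an odd number of requests. So it suffices to show that for every $t\in I^{\hat z}_i$, the parity of the number of requests present in $F_{\hat z}$ at time $t$ equals the parity of $|\{r\in R(I^{\hat z}_i):a(r)\le t\}|$; granted this, condition (ii) holds precisely on $I^{odd}_{i,\hat z}$.

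First I would establish the base parity at $t=t^{\hat z}_i$: the number of requests in $F_{\hat z}$ just after $t^{\hat z}_i$ is even. For $i=0$ this is because $F_{\hat z}$ is empty at time $0$. For $i\ge 1$, by Definition \ref{definition.single_location.counter_intervals} the value of $\hat z$ transitions from nonzero to zero at $t^{\hat z}_i$, which by the algorithm can happen only when $\hat z$ fills up and its (unique) associated request is moved to the parent of $\hat z$ in $F$; immediately before this move $\hat z$ was being increased, so by rule (ii) the count in $F_{\hat z}$ was odd, and the move itself removes one request from $F_{\hat z}$ (the parent is outside $F_{\hat z}$), leaving an even count.

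Next I would track how the count in $F_{\hat z}$ evolves during the open portion $(t^{\hat z}_i,t^{\hat z}_{i+1})$ of the interval. Only four kinds of events can change the count: (a) a request arrives (arrivals land on leaf counters of $F$, and by definition the arrivals that land inside $F_{\hat z}$ are exactly $R(I^{\hat z}_i)$); (b) two requests in the same counter of $F_{\hat z}$ match and leave, reducing the count by $2$; (c) a request moves from a child to its parent within $F_{\hat z}$, which does not change the count; (d) a request moves from $\hat z$ itself to its parent in $F$, which by definition is the endpoint $t^{\hat z}_{i+1}$ of the interval and hence does not occur strictly inside. Crucially, no request can enter $F_{\hat z}$ from outside, because in $F$ requests only move upward (from children to parents), so the only way a request joins $F_{\hat z}$ mid-interval is via (a). Therefore the parity of the number of requests in $F_{\hat z}$ at any time $t\in I^{\hat z}_i$ equals the parity of $|\{r\in R(I^{\hat z}_i):a(r)\le t\}|$.

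Finally I would combine: if $\hat z$ increases at time $t\in I^{\hat z}_i$, then by rule (ii) the count in $F_{\hat z}$ at time $t$ is odd, hence an odd number of arrivals in $R(I^{\hat z}_i)$ have occurred by time $t$; reading off the definition of $I^{odd}_{i,\hat z}$, this places $t$ in one of the half-open pieces $[a_{2j-1},a_{2j})$, or (when $\gamma_{\hat z}$ is odd) in the tail $[a_{\gamma_{\hat z}},t^{\hat z}_{i+1})$, so $t\in I^{odd}_{i,\hat z}$. The main subtlety I expect is the bookkeeping in step three: one has to make sure that no route other than (a) introduces new requests into $F_{\hat z}$, and that exit through $\hat z$ is exactly what closes the interval; both follow from the upward-only movement rule and the very definition of $t^{\hat z}_{i+1}$, so no real difficulty should arise beyond writing these out carefully.
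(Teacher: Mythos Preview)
Your proposal is correct and follows essentially the same approach as the paper's proof: establish that the number of requests in $F_{\hat z}$ is even immediately after $t^{\hat z}_i$ (using that the departing request made the count drop from odd to even), then argue that during the interval only arrivals from $R(I^{\hat z}_i)$ change the parity of this count since matches remove pairs and no request can exit $F_{\hat z}$ before $t^{\hat z}_{i+1}$. Your enumeration of the four event types and the explicit observation that requests only move upward in $F$ is, if anything, a slightly more careful write-up of the same argument.
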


\noindent The proof is deferred to the Appendix.


\begin{definition}
Given an interval $I$ and a counter $\hat{z}$ we denote the set of requests that arrived during $I$ and were given to counters within $F_{\hat{z}}$ as $R(I, \hat{z})$. Given this notation we say that $\algsol$ is live with respect to $I$ and $\hat{z}$ at time $t\in I$ if $\algsol$'s matching contains an unmatched request $r$ at time $t$ such that either $r \in R(I, \hat{z})$ or $r$'s pair (with respect to $\algsol$) belongs to $R(I, \hat{z})$.
\end{definition}

\noindent The proofs of Lemmas \ref{lemma.metric.last_interval_is_even}, \ref{lemma.metric.parity_of_last_interval_OPT} and \ref{lemma.metric.middle_interval_is_odd} are all deferred to the appendix.

\begin{lemma}
\label{lemma.metric.last_interval_is_even}
For any counter $\hat{z}$, $|R(I^{\hat{z}}_{m_{\hat{z}}-1})|$ is even.
\end{lemma}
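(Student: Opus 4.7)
The plan is to prove the claim by a parity argument on the number of requests residing inside the subtree $F_{\hat{z}}$. Let $N(t)$ denote the number of requests associated with counters in $F_{\hat{z}}$ at time $t$. The only events that change $N$ are: an arrival of a new request placed at a counter in $F_{\hat{z}}$ (adds $1$), a matching of two requests inside $F_{\hat{z}}$ (subtracts $2$), and a request moving from $\hat{z}$ up to its parent in $F$ (subtracts $1$). Upward movements internal to $F_{\hat{z}}$ do not change $N$, and since movement in $F$ is strictly toward the root, requests placed outside $F_{\hat{z}}$ on arrival can never later enter $F_{\hat{z}}$. By Definition \ref{definition.single_location.counter_intervals}, the move-ups from $\hat{z}$ to its parent occur precisely at the times $t^{\hat{z}}_1, \ldots, t^{\hat{z}}_{m_{\hat{z}}-1}$, so none such event takes place strictly inside $I^{\hat{z}}_{m_{\hat{z}}-1}$.

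First I would establish that $N(t^{\hat{z}}_{m_{\hat{z}}-1})$ is even, evaluated right after the boundary move-up. Immediately before $t^{\hat{z}}_{m_{\hat{z}}-1}$, the counter $\hat{z}$ was being filled to its capacity; by the algorithm's increase rule, this requires $F_{\hat{z}}$ to contain an odd number of requests at that moment. The move-up then removes exactly one request from $F_{\hat{z}}$, so $N$ is even at the start of $I^{\hat{z}}_{m_{\hat{z}}-1}$.

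Next I would argue that $N$ eventually drops to $0$. By Remark \ref{remark.MA_terminates}, every request is ultimately matched by $\algma$. Since no request leaves $F_{\hat{z}}$ via $\hat{z}$ during $I^{\hat{z}}_{m_{\hat{z}}-1}$, every request present in $F_{\hat{z}}$ at any time in this interval must be paired off against another request also residing in $F_{\hat{z}}$, and hence they all vanish in pairs; thus $N$ terminates at $0$.

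Combining the two, the net change in $N$ across $I^{\hat{z}}_{m_{\hat{z}}-1}$ equals $-N(t^{\hat{z}}_{m_{\hat{z}}-1})$, which is even. On the other hand this net change equals $|R(I^{\hat{z}}_{m_{\hat{z}}-1})|$ minus twice the number of matches performed inside $F_{\hat{z}}$ during the interval, so $|R(I^{\hat{z}}_{m_{\hat{z}}-1})|$ itself must be even. The main subtlety I expect is in the bookkeeping for where arrivals land: a request arriving at a leaf outside $F_{\hat{z}}$ never affects $N$ and is correctly excluded by the refined definition of $R(\cdot)$ in the metric case, which restricts attention to requests placed into a counter in $F_{\hat{z}}$ upon arrival; verifying this point rigorously is what keeps the parity argument honest in the HST setting.
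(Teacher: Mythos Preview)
Your proof is correct and follows essentially the same approach as the paper's: you establish that the number of requests in $F_{\hat{z}}$ is even at the start of the last interval (odd before the move-up, then one request leaves), that no request exits $F_{\hat{z}}$ during this interval, and that all requests eventually match in pairs, forcing $|R(I^{\hat{z}}_{m_{\hat{z}}-1})|$ to be even. Your explicit bookkeeping via $N(t)$ is slightly more detailed than the paper's argument, but the logic is the same; the one edge case you should mention is $m_{\hat{z}}=1$, where $t^{\hat{z}}_0=0$ and $N(0)=0$ is trivially even without any move-up.
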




\begin{lemma}
\label{lemma.metric.parity_of_last_interval_OPT}
For any counter $\hat{z}$, one of the following conditions hold:
\begin{itemize}
    \item $\algsol$ matched a request from $R(I^{\hat{z}}_{m_{\hat{z}}-1})$ through an edge $e$ such that $z_e$ is an ancestor of $\hat{z}$ in $F$.
    \item If $t \in I_{m_{\hat{z}}-1, {\hat{z}}}^{odd}$ then $\algsol$ must be live with respect to $I^{\hat{z}}_{m_{\hat{z}}-1}$ and $\hat{z}$ at time $t$.
\end{itemize}
\end{lemma}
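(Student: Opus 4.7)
My plan is to follow the single-location analog (Lemma \ref{lemma.single_location.last_interval_parity_of_OPT}): assume the first condition fails and deduce the second via a parity contradiction on the odd-subinterval. Concretely, suppose that no request of $R(I^{\hat{z}}_{m_{\hat{z}}-1})$ is matched by $\algsol$ through an edge $e$ with $z_e$ an ancestor of $\hat{z}$ in $F$. I will show that $\algsol$ must then be live with respect to $I^{\hat{z}}_{m_{\hat{z}}-1}$ and $\hat{z}$ at every $t\in I^{odd}_{m_{\hat{z}}-1,\hat{z}}$.

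The first step is structural. In $F$ the leaves of the subtree rooted at $\hat{z}$ coincide with the leaves of $T$ underneath $\hat{z}$, so any path in $T$ from a leaf of $F_{\hat{z}}$ to a leaf outside $F_{\hat{z}}$ must traverse some edge whose $F$-counter is $\hat{z}$ or lies strictly above $\hat{z}$; the standing assumption forbids this. Thus every $\algsol$-partner of a request in $R(I^{\hat{z}}_{m_{\hat{z}}-1})$ also sits at a leaf of $F_{\hat{z}}$. I then split the $\algsol$-pairs touching $R(I^{\hat{z}}_{m_{\hat{z}}-1})$ into Type~$A$ (both endpoints in $R(I^{\hat{z}}_{m_{\hat{z}}-1})$) and Type~$B$ (exactly one endpoint in $R(I^{\hat{z}}_{m_{\hat{z}}-1})$), with counts $a$ and $b$. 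Then $|R(I^{\hat{z}}_{m_{\hat{z}}-1})| = 2a + b$ is even by Lemma \ref{lemma.metric.last_interval_is_even}, whence $b$ is even. The crucial observation is that each Type~$B$ partner $r'$ sits at a leaf of $F_{\hat{z}}$, so the algorithm assigns $r'$ upon arrival to some counter in $F_{\hat{z}}$; since $r'\notin R(I^{\hat{z}}_{m_{\hat{z}}-1})$ and $I^{\hat{z}}_{m_{\hat{z}}-1}$ extends to $\infty$, $r'$ must arrive strictly before $t^{\hat{z}}_{m_{\hat{z}}-1}$, hence before any $t$ we consider.

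Now fix $t\in I^{odd}_{m_{\hat{z}}-1,\hat{z}}$ and let $A(t)$ denote the number of requests of $R(I^{\hat{z}}_{m_{\hat{z}}-1})$ that have arrived by time $t$; $A(t)$ is odd by definition of the odd-subinterval. Suppose, for contradiction, that $\algsol$ is not live at $t$. Then every arrived request of $R(I^{\hat{z}}_{m_{\hat{z}}-1})$ is matched by $t$, and every Type~$B$ partner (all arrived strictly before $t$ by the previous step) is matched by $t$ as well. The latter forces every Type~$B$ pair to be realized by $t$, and in particular its $R(I^{\hat{z}}_{m_{\hat{z}}-1})$-endpoint to have arrived by $t$; together these contribute exactly $b$ to $A(t)$. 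A Type~$A$ pair with only one endpoint arrived by $t$ would leave that endpoint unmatched, violating not-live, so every Type~$A$ pair contributes $0$ or $2$ arrivals, totalling $2A_2$ for some $A_2\ge 0$. Thus $A(t) = 2A_2 + b$ is even, contradicting the oddness of $A(t)$, and the lemma follows.

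The step I expect to be the main obstacle is the structural reduction: being precise about the fact that failure of the first condition genuinely confines all $\algsol$-partners of $R(I^{\hat{z}}_{m_{\hat{z}}-1})$ to leaves of $F_{\hat{z}}$. This relies both on interpreting ``ancestor'' as capturing every counter one must cross to escape $F_{\hat{z}}$, and on the algorithm's assignment rule, which routes an arriving request to a counter determined by its leaf location. Once these two ingredients are secured, the only genuinely new piece of reasoning relative to Lemma \ref{lemma.single_location.last_interval_parity_of_OPT} is the fact that Type~$B$ partners must predate the final interval of $\hat{z}$, which is exactly what drives the parity contradiction.
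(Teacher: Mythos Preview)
Your proof is correct and follows essentially the same approach as the paper: assume the first condition fails, use that failure to confine all $\algsol$-partners of $R(I^{\hat{z}}_{m_{\hat{z}}-1})$ to $F_{\hat{z}}$, and then exploit the parity of $|R(I^{\hat{z}}_{m_{\hat{z}}-1})|$ (Lemma~\ref{lemma.metric.last_interval_is_even}) against the oddness at $t\in I^{odd}_{m_{\hat{z}}-1,\hat{z}}$. The only difference is packaging: the paper looks at the set $\mathcal{T}$ of requests arriving \emph{after} $t$, notes $|\mathcal{T}|$ is odd, and directly exhibits a witness (some $r\in\mathcal{T}$ whose $F_{\hat{z}}$-partner arrived before $t$ and is therefore unmatched), whereas you run a contradiction via the Type~$A$/Type~$B$ decomposition on the requests arriving \emph{before} $t$. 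Your structural step and your concern about the interpretation of ``ancestor'' are exactly the implicit content of the paper's one-line claim that failure of the first condition forces partners into $F_{\hat{z}}$; both arguments stand or fall on the same reading.
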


\begin{lemma}
\label{lemma.metric.middle_interval_is_odd}
For any counter $\hat{z}$, $|R(I^{\hat{z}}_i)|$ is odd.
\end{lemma}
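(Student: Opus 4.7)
The plan is to mirror the parity argument used for Lemma \ref{lemma.single_location.interval_is_odd} in the single location case, working with the quantity $P(t)$ defined as the number of unmatched requests lying in some counter of $F_{\hat z}$ at time $t$. The intended statement (given that Lemma \ref{lemma.metric.last_interval_is_even} already handles $i = m_{\hat z} - 1$) is for non-terminal intervals $i < m_{\hat z} - 1$, where a genuine move-up out of $\hat z$ occurs at $t_{i+1}^{\hat z}$. I will track $P$ across $I_i^{\hat z}$ and read off the parity of $|R(I_i^{\hat z})|$ from the change at the two endpoints.

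First, I would enumerate the events that can change $P$. An arrival of a request directly into some counter of $F_{\hat z}$ contributes $+1$; a match between two requests both lying in $F_{\hat z}$ contributes $-2$; a request moving from a descendant of $\hat z$ to another counter still in $F_{\hat z}$ contributes $0$; and a move-up of the request at $\hat z$ to its parent contributes $-1$. No request can enter $F_{\hat z}$ in any other way, since the algorithm only ever moves requests upward in $F$. Consequently, only arrivals and move-ups out of $\hat z$ affect the parity of $P$.

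Next I would pin down the parity of $P$ just after $t_i^{\hat z}$ and just before $t_{i+1}^{\hat z}$. For $i \geq 1$, the breakpoint $t_i^{\hat z}$ is by definition an instant at which $\hat z$ is emptied by a move-up; for this to happen, $\hat z$ must have been increasing on the immediately preceding interval, which by the algorithm's activation condition forces $|F_{\hat z}|$ to be odd at $t_i^{\hat z\,-}$. The move-up removes exactly one request from $F_{\hat z}$, so $P$ is even just after $t_i^{\hat z}$. For $i = 0$ we simply have $P(0) = 0$, also even. Symmetrically, because $i < m_{\hat z} - 1$ a genuine move-up occurs at $t_{i+1}^{\hat z}$, and the same reasoning shows that $P$ is odd just before $t_{i+1}^{\hat z}$.

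Finally, I would combine these endpoints. By the definition of the breakpoints $\{t_j^{\hat z}\}$, no move-up out of $\hat z$ occurs strictly inside $(t_i^{\hat z}, t_{i+1}^{\hat z})$, so the net parity change of $P$ over the interval is congruent to $|R(I_i^{\hat z})| \pmod 2$. Since this net change equals $1 - 0 \equiv 1 \pmod 2$, we conclude that $|R(I_i^{\hat z})|$ is odd. The one place I would be slightly careful is justifying that $\hat z$ is in fact increasing at the very instant $t_i^{\hat z\,-}$ (and analogously at $t_{i+1}^{\hat z\,-}$); this follows from continuity of the counter's value together with the fact that $\hat z$ must strictly attain its positive capacity $y_{\hat z}$ before firing, ruling out any degenerate scenario in which $\hat z$ fires without having been increasing immediately beforehand.
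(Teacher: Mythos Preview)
Your proof is correct and follows essentially the same parity-tracking argument as the paper: both monitor the number of unmatched requests in $F_{\hat z}$, observe it is even immediately after $t_i^{\hat z}$ and odd immediately before $t_{i+1}^{\hat z}$, and conclude that the number of arrivals into $F_{\hat z}$ during the interval is odd since matches and internal moves preserve parity and no move-up out of $\hat z$ occurs strictly inside the interval. Your treatment is in fact slightly more careful, since you explicitly handle the base case $i=0$ (where $P(0)=0$) and spell out the event enumeration, whereas the paper's proof begins with ``at time $t_i$ a request moved up from $\hat z$,'' which tacitly assumes $i\ge 1$.
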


Consider any two consecutive intervals $I_i^{\hat{z}} = [t^{\hat{z}}_i, t^{\hat{z}}_{i+1})$ and $I_{i+1}^{\hat{z}} = [t^{\hat{z}}_{i+1}, t^{\hat{z}}_{i+2})$ such that $t^{\hat{z}}_{i+2} < t^{\hat{z}}_m$ (if such intervals exist), as defined with respect to $\hat{z}$. Let $R(I^{\hat{z}}_i \cup I^{\hat{z}}_{i+1})$ denote the set of requests that arrived during the interval $I^{\hat{z}}_i \cup I^{\hat{z}}_{i+1}$ and that were given to $F_{\hat{z}}$ upon arrival. Lemma \ref{lemma.metric.middle_interval_parity_OPT} will aid us in charging the increase in counters towards $\algsol$ (the proof is deferred to the Appendix).

\begin{lemma}
\label{lemma.metric.middle_interval_parity_OPT}
For $I^{\hat{z}} = I^{\hat{z}}_i$ or $I^{\hat{z}} = I^{\hat{z}}_{i+1}$ one of the following conditions hold:
\begin{itemize}
    \item $\algsol$ matches a request from $R(I^{\hat{z}}_i \cup I^{\hat{z}}_{i+1})$ through an edge $e$ such that $z_e$ is an ancestor of $\hat{z}$ in $F$.
    \item For any $t \in I^{odd}_{\hat{z}}$, $\algsol$ must be live with respect to $I^{\hat{z}}_i \cup I^{\hat{z}}_{i+1}$ and $\hat{z}$ at time $t$.
\end{itemize}
\end{lemma}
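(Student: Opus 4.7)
The plan is to proceed by contradiction, reducing to a parity obstruction in the spirit of the single-location analog (Lemma \ref{lemma.single_location.middle_interval_parity_OPT}), with the first bullet of the statement confining $\algsol$'s partners to the subtree $F_{\hat{z}}$. Suppose for contradiction that neither bullet holds for either interval. Since the first bullet depends only on $R(I^{\hat{z}}_i \cup I^{\hat{z}}_{i+1})$ and not on the choice between the two intervals, this yields three assumptions: (a) no request in $R(I^{\hat{z}}_i \cup I^{\hat{z}}_{i+1})$ is matched by $\algsol$ through an edge $e$ with $z_e$ an ancestor of $\hat{z}$ in $F$; (b.1) some $t_i^* \in I^{odd}_{i, \hat{z}}$ exists at which $\algsol$ is not live; and (b.2) some $t_{i+1}^* \in I^{odd}_{i+1, \hat{z}}$ exists at which $\algsol$ is not live (both with respect to $I^{\hat{z}}_i \cup I^{\hat{z}}_{i+1}$ and $\hat{z}$). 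Note that $t_i^* < t_{i+1}^*$.

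The next step is to count, at time $t_i^*$, the arrived $R$-requests and classify their $\algsol$-partners. By the odd-subinterval definition, an odd number $b_i$ of requests from $R(I^{\hat{z}}_i \cup I^{\hat{z}}_{i+1})$ have arrived by $t_i^*$ (all from $R(I^{\hat{z}}_i)$). Not-liveness at $t_i^*$ forces each of them to already be matched by $\algsol$, and also forces every one of their $\algsol$-partners to have arrived by $t_i^*$ (otherwise the $R$-request itself would be the desired unmatched witness). Condition (a) then pins every such partner to lie physically inside $F_{\hat{z}}$, so each partner is either another already-arrived $R$-request or what we will call an \emph{external-before} request: a request that arrived before $t^{\hat{z}}_i$ inside $F_{\hat{z}}$. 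Letting $c_i$ count external-before partners, the fact that in-$R$ matchings consume $R$-requests in pairs forces $c_i \equiv b_i \equiv 1 \pmod{2}$.

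Running the same count at $t_{i+1}^*$, the arrived total $b_{i+1}$ equals $|R(I^{\hat{z}}_i)|$ (odd by Lemma \ref{lemma.metric.middle_interval_is_odd}) plus an odd number from $R(I^{\hat{z}}_{i+1})$, hence is even, forcing the corresponding external-before count $c_{i+1}$ to be even. Since $\algsol$'s matchings are permanent, every external-before matching counted at $t_i^*$ persists at $t_{i+1}^*$, yielding $c_{i+1} \geq c_i$; the differing parities then give $c_{i+1} \geq c_i + 1$. Hence at least one external-before matching $(r, r')$, with $r$ external-before and $r' \in R(I^{\hat{z}}_i \cup I^{\hat{z}}_{i+1})$, is present at $t_{i+1}^*$ but not at $t_i^*$.

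The contradiction is then immediate. For $(r, r')$ to be missing from the $t_i^*$-count, not-liveness at $t_i^*$ forces $r'$ to have arrived strictly after $t_i^*$; but $r$ arrived before $t^{\hat{z}}_i \leq t_i^*$, so at $t_i^*$ the request $r$ has arrived while its $\algsol$-partner $r'$ has not. Hence $r$ is unmatched at $t_i^*$ with partner $r' \in R(I^{\hat{z}}_i \cup I^{\hat{z}}_{i+1})$, so $\algsol$ is live at $t_i^*$, contradicting (b.1). The main obstacle compared to the single-location version is precisely that $\algsol$-partners of $R$-requests could a priori live anywhere in the metric; condition (a) is exactly the hypothesis that pins them inside $F_{\hat{z}}$ and rescues the parity bookkeeping.
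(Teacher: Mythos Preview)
Your proof is correct and follows essentially the same approach as the paper's: both arguments negate the first bullet to pin all $\algsol$-partners of $R(I^{\hat z}_i\cup I^{\hat z}_{i+1})$ inside $F_{\hat z}$, then use the parity of $|R(I^{\hat z}_i)|$ (Lemma~\ref{lemma.metric.middle_interval_is_odd}) together with non-liveness at a point of $I^{odd}_{i,\hat z}$ to produce a pre-$t_i^*$ request in $F_{\hat z}$ whose $\algsol$-partner is an $R$-request arriving after $t_i^*$, contradicting non-liveness. The only difference is packaging: the paper extracts this directly as the condition ``no request in $F_{\hat z}$ arriving before $t'$ is matched to a future $R$-request'' and then violates it by an odd count on the requests arriving between $t'$ and $t$, whereas you reach the same contradiction by comparing the parities of the external-before counts $c_i$ and $c_{i+1}$.
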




\begin{lemma}
\label{lemma.metric.the_number_of_intervals_is_odd}
For any counter $\hat{z}$ the number of intervals defined with respect to that counter, $m_{\hat{z}}$, is odd.
\end{lemma}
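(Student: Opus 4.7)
The plan is to derive the lemma by combining the two preceding lemmas, exactly mirroring the one-line proof of the single-location analog (Lemma \ref{lemma.single_location.the_number_of_intervals_is_odd}). Concretely, I would count $N := \sum_{i=0}^{m_{\hat{z}}-1} |R(I^{\hat{z}}_i)|$, the total number of requests ever assigned to a counter in $F_{\hat{z}}$, in two different ways, and then read off the parity of $m_{\hat{z}}$ by comparing them.

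By Lemma \ref{lemma.metric.middle_interval_is_odd} each of the first $m_{\hat{z}} - 1$ intervals contributes an odd value to $N$, while by Lemma \ref{lemma.metric.last_interval_is_even} the last interval contributes an even value, so $N \equiv m_{\hat{z}} - 1 \pmod{2}$. Independently, every request counted by $N$ either gets matched with another request inside $F_{\hat{z}}$ (consumed in a pair) or eventually exits $F_{\hat{z}}$ through a move-up from $\hat{z}$ to its parent at one of the $m_{\hat{z}} - 1$ boundaries $t_i^{\hat{z}}$; hence $N = 2M + (m_{\hat{z}} - 1)$ for some integer $M \geq 0$. To finish, I would argue --- using the activation rule that $\hat{z}$ can only increase while $|F_{\hat{z}}|$ is odd (so that each move-up immediately restores $|F_{\hat{z}}|$ from odd to even), together with the fact that $F_{\hat{z}}$ starts and ends empty --- that $N$ must itself be even. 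Combined with the first count, this forces $m_{\hat{z}} - 1$ to be even, so $m_{\hat{z}}$ is odd.

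The main obstacle is this last step. The naive pairing $N = 2M + (m_{\hat{z}}-1)$ alone only recovers $N \equiv m_{\hat{z}} - 1 \pmod 2$, which is already the content of the two preceding lemmas, so the two ``independent'' counts are in fact tautologically consistent. Pinning this common parity down to $0$ --- equivalently, showing that the total number of move-ups out of $F_{\hat{z}}$ is even --- requires more delicate bookkeeping on the parity of $|F_{\hat{z}}|$ at each boundary $t_i^{\hat{z}}$, leveraging the invariant (implicit in the proof of Lemma \ref{lemma.metric.last_interval_is_even}) that the activation rule synchronizes a move-up from $\hat{z}$ only with an odd-to-even transition of $|F_{\hat{z}}|$. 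Once this invariant is made explicit and combined with the fact that $|F_{\hat{z}}|$ returns to $0$ at time $\infty$, the gap closes and the odd parity of $m_{\hat{z}}$ follows.
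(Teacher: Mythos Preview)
Your approach mirrors the paper's one-line proof (which simply cites Lemmas~\ref{lemma.metric.last_interval_is_even} and~\ref{lemma.metric.middle_interval_is_odd}), and you go further by correctly diagnosing that combining those two lemmas via a parity count on $N = \sum_i |R(I^{\hat z}_i)|$ is, on its own, circular: both the lemmas and the bookkeeping $N = 2M + (m_{\hat z}-1)$ yield only $N \equiv m_{\hat z}-1 \pmod 2$, with no independent handle on either side.

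Your proposed fix, however, does not close the gap. The invariant you invoke --- that each move-up from $\hat z$ takes $|F_{\hat z}|$ from odd to even, together with $|F_{\hat z}|$ starting and ending at $0$ --- is precisely what is already encoded in the proofs of Lemmas~\ref{lemma.metric.last_interval_is_even} and~\ref{lemma.metric.middle_interval_is_odd}; re-invoking it cannot produce new parity information. Concretely, consider a counter $\hat z$ such that exactly one request ever arrives at a leaf of $F_{\hat z}$ and is eventually matched outside $F_{\hat z}$. Then $|F_{\hat z}|$ evolves as $0 \to 1 \to 0$ with a single move-up; every invariant you cite holds, yet $N=1$ is odd and $m_{\hat z}=2$ is even. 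The crucial difference from the single-location analog is that there $N$ coincides with the total number of requests in the entire instance (hence even by the perfect-matching assumption), whereas here $N$ counts only requests landing in $F_{\hat z}$, and nothing local to $F_{\hat z}$ forces that count to be even. Any argument that actually pins down the parity must therefore import information from outside $F_{\hat z}$ --- something neither your bookkeeping nor the paper's one-line proof supplies.
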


\begin{proof}
Follows from Lemmas \ref{lemma.metric.last_interval_is_even} and \ref{lemma.metric.interval_is_odd}.
\end{proof}

\noindent We are now ready to prove Proposition \ref{proposition.metric.bound_counters_by_OPT}.

\begin{proof}[Proof of Proposition \ref{proposition.metric.bound_counters_by_OPT}]
Let $\algsol$ denote an arbitrary matching and let $\algsol_d$ and $\algsol_c$ denote its delay and connection costs. Recall that we aim to prove that $\sum_r \int_t z_r'(t)dt \leq O(1)\cdot \algsol_c + O(h)\cdot \algsol_d$. For a counter $z$ let $z'(t)$ denote the counter's slope if the counter increases at time $t$ and otherwise we define it as 0. Therefore, $\sum_r \int_t z_r'(t)dt = \sum_{\hat{z}} \int_t \hat{z}'(t)dt$.

Recall the definition of $\rho(r)$ (as in the proof of Proposition \ref{proposition.single_location.bound_counters_by_OPT}). Therefore,
\begin{align}
\label{equation.metric.opt_delay_is_counted_twice}
\algsol_d = \frac{1}{2} \sum_r \rho(r).
\end{align}

\noindent Given a request $r$ we denote its connection cost with respect to $\algsol$ as $\kappa(r)$. Therefore,
\begin{align}
\label{equation.metric.opt_connection_is_counted_twice}
\algsol_c = \frac{1}{2} \sum_r \kappa(r).
\end{align}

We say that $r$ is of delay-level (resp. connection-level) $k$ with respect to $\algsol$ if $\rho(r) \in [\sum_{j=1}^k y_j, \sum_{j=1}^{k+1}y_k)$ (resp. $\kappa(r) \in [\sum_{j=1}^k y_j, \sum_{j=1}^{k+1}y_k)$). 

By Lemma \ref{lemma.metric.the_number_of_intervals_is_odd} we may partition our overall set of intervals (defined with respect to $\hat{z}$) into pairs $\{I^{\hat{z}}_{2i}, I^{\hat{z}}_{2i+1} \}$ for $i \in \{0, \ldots, \frac{m_{\hat{z}}-3}{2}\}$ with the addition of $I^{\hat{z}}_{m_{\hat{z}}-1}$ (we abuse notation and let $I_{-1} = \emptyset$ for the case that $m_{\hat{z}}=1$). We charge the increase in our counters to $\algsol$ as follows. We charge the increase in each counter separately and denote our charging scheme by $g_{\hat{z}}(t) = r$ such that $g_{\hat{z}}(\cdot)$ is defined only for points $t$ for which $\hat{z}$ increases. As before we consider two cases: either $t \in I^{\hat{z}}_{m_{\hat{z}}-1}$ or there exists a pair of intervals such that $t \in I^{\hat{z}}_{2i} \cup I^{\hat{z}}_{2i+1}$. 



For any counter $\hat{z}$ and any pair of intervals $\{I^{\hat{z}}_{2i}, I^{\hat{z}}_{2i+1}\}$, we will choose only one of the intervals to charge to $\algsol$ as follows. By Lemma \ref{lemma.metric.middle_interval_parity_OPT} one of these intervals guarantees the condition as defined in them lemma. We assume it is the first (i.e., $I^{\hat{z}}_{2i}$) and use that interval towards our charging scheme (if it were the second, we would have used that interval and continued identically).

We note that for each interval in a pair of intervals $I^{\hat{z}}_{2i} \cup I^{\hat{z}}_{2i+1}$ the increase in $\hat{z}$ is exactly its capacity. Therefore, we may charge the overall increase in the second interval to the first and thereby lose a factor of 2. Therefore, if we let $I^{\hat{z}}_{even} = \cup_{i=0}^{\frac{m_{\hat{z}}-3}{2}}I^{\hat{z}}_{2i} \cup I^{\hat{z}}_{m_{\hat{z}}-1}$ then,
\begin{align}
\label{equation.metric.lose_factor_of_2_by_considering_first_interval_of_pair}
\sum_{\hat{z}} \int_t \hat{z}'(t)dt \leq 2 \sum_{\hat{z}} \int_{t \in I^{\hat{z}}_{even}} \hat{z}'(t)dt.
\end{align}
Therefore, we will define $g_{\hat{z}}(t)$ only if $t \in I^{\hat{z}}_{2i}$ for $i \in [\frac{m_{\hat{z}}-1}{2}]$.

Given the counter $\hat{z}$ we define $k(\hat{z}) \in \mathbbm{N}$ such that $\alpha_{k(\hat{z})}$ denotes the slope of $\hat{z}$. To ease notation in the following definition, let $I_{m_{\hat{z}}}^{\hat{z}} = \emptyset$. We are now ready to formally define our charging scheme.

\noindent \textbf{Defining} $\bm{g_{\hat{z}}(t)}$: For any $i \in \{0,1,\ldots,\frac{m_{\hat{z}}-1}{2}\}$ and any $t \in I^{\hat{z}}_{2i}$ such that $\hat{z}$ increases, we define $g_{\hat{z}}(t)$ as follows. We set $g_{\hat{z}}(t) = r$ such that $r \in R(I^{\hat{z}}_{2i} \cup I^{\hat{z}}_{2i+1})$ and $\algsol$ matches $r$ through an edge $e$ such that $z_e$ is an ancestor of $\hat{z}$ in $F$, if such a request exists. Otherwise, we define $g_{\hat{z}}(t) = r$ such that $r \in R(I^{\hat{z}}_{2i} \cup I^{\hat{z}}_{2i+1})$ and $r$ has a delay level of $\geq k(\hat{z})$ with respect to $\algsol$, if such a request exists. Otherwise, if such a request does not exist, then we will charge the increase to the request $r$ that causes $\algsol$ to be live with respect to $I^{\hat{z}}_{2i} \cup I^{\hat{z}}_{2i+1}$ and $\hat{z}$ at time $t$. \\
Finally, we break ties by always taking the earliest request to arrive (note that any tie breaking that is consistent will suffice).

\begin{remark}
Our charging scheme $g_{\hat{z}}(\cdot)$ is well defined. This is true for any $t \in I^{\hat{z}}_{2i}$ due to the fact that if $\hat{z}$ increases then $t \in I_{2i, {\hat{z}}}^{odd}$ (Lemma \ref{lemma.metric.interval_is_odd}) which in turn guarantees that $\algsol$ is live with respect to $I^{\hat{z}}_{2i}\cup I^{\hat{z}}_{2i+1}$ and $\hat{z}$ at time $t$ (due to Lemma \ref{lemma.metric.middle_interval_parity_OPT} and the fact that we chose the interval that guarantees the defined condition). This is similarly true for any $t \in I^{\hat{z}}_{m_{\hat{z}}-1}$ due to Lemmas \ref{lemma.metric.interval_is_odd} and \ref{lemma.metric.parity_of_last_interval_OPT}.
\end{remark}

Let $D_{\hat{z}}$ denote the set of requests within the image of $g_{\hat{z}}(\cdot)$. Let $A_{\hat{z}}$ denote the set of requests from $D_{\hat{z}}$  that were charged to because they were matched through an edge $e$ such that $z_e$ is an ancestor of $\hat{z}$ in $F$. Let $B_{\hat{z}}$ denote the set of requests from $D_{\hat{z}}$ that were charged to because they had a delay-level that is $\geq k(\hat{z})$. Finally, let $C_{\hat{z}} = D_{\hat{z}} \setminus (A_{\hat{z}} \cup B_{\hat{z}})$.

By changing the summation order we get,
\begin{align}
\label{equation.metric.resumming_over_the_image_of_g}
\sum_{\hat{z}} \int_{t \in I_{even}} \hat{z}'(t)dt &=
\sum_{\hat{z}} \sum_{r \in D_{\hat{z}}} \int_{t \in g^{-1}_{\hat{z}}(r)} \hat{z}'(t)dt =
\sum_r \sum_{\hat{z} : r \in A_{\hat{z}}} \int_{t \in g^{-1}_{\hat{z}}(r)} \hat{z}'(t)dt \nonumber \\&+
\sum_r \sum_{\hat{z} : r \in B_{\hat{z}}} \int_{t \in g^{-1}_{\hat{z}}(r)} \hat{z}'(t)dt +
\sum_r \sum_{\hat{z} : r \in C_{\hat{z}}} \int_{t \in g^{-1}_{\hat{z}}(r)} \hat{z}'(t)dt.
\end{align}

A request $r$ can only belong to a single interval with respect to $\hat{z}$ and therefore since the increase in $\hat{z}$ is at most $y_{\hat{z}}$ in each interval, $r$ may be charged by at most $y_{\hat{z}}$ (exactly $y_{\hat{z}}$ if the interval is not $I_{m-1}$). Therefore, for any request $r$,
\begin{align}
\label{equation.metric.bound_total_charge_in_interval}
\int_{t \in g^{-1}_{\hat{z}}(r)} \hat{z}'(t)dt \leq y_{\hat{z}}.
\end{align}

Consider $\sum_r \sum_{\hat{z}: r \in A_{\hat{z}}} y_{\hat{z}}$. We consider the delay counters and edge counters that satisfy $r \in A_{\hat{z}}$ separately. We first consider the case that $\hat{z} = z_e$ for some edge $e$. Recall that for all edge counters $z_e$, we have that $y_{z_e} = w_e$. If $r \in A_{z_e}$ then $r$ was given to $F_{z_e}$ upon arrival and was matched by $\algsol$ through an edge that is an ancestor of $z_e$ in $F$. Therefore, if $r \in A_{z_e}$ then $r$ was matched through $e$ by $\algsol$. Therefore,
\begin{align}
\label{equation.metric.bound_edge_counters_from_A_to_opt}
\sum_{z_e: r \in A_{z_e}} y_{z_e} = \sum_{e: r \in A_{z_e}} w_e \leq \kappa(r).
\end{align}

Next we consider the case that $\hat{z} = z_k$ (i.e., the delay counters). Let $\bar{k}$ denote the largest $k$ for which $r \in A_{z_k}$ and let $\bar{e}$ denote the largest weighted edge used by $\algsol$ to connect $r$. By the definition of $A_{z_k}$ we are guaranteed that $z_{\bar{e}}$ is an ancestor of $z_{\bar{k}}$ in $F$. Therefore, by the definition of $F$, $\sum_{j=1}^{\bar{k}}y_{z_j} \leq w_{\bar{e}}$, and thus,

By the definition of $F$ we are therefore guaranteed that
\begin{align}
\label{equation.metric.bound_delay_counters_from_A_to_opt}
\sum_{z_k: r \in A_{z_k}} y_{z_k} \leq
\sum_{j=1}^{\bar{k}}y_{z_j} \leq
w_{\bar{e}} \leq
\kappa(r).
\end{align}

Next we consider $\sum_r \sum_{\hat{z}: r \in B_{\hat{z}}} y_{\hat{z}}$ and again consider the delay and edge counters separately. We first consider the case that $\hat{z} = z_e$. Therefore, $y_{z_e} = w_e$. If $r \in B_{z_e}$ then $r$ must have been given to $F_{z_e}$ upon arrival. Therefore, if both $z_e$ and $z_{e'}$ are such that $r \in B_{z_e}$ and $r \in B_{z_{e'}}$ then one must be the ancestor of the other in $F$ and therefore also in $T$. Since $T$ is a 2-HST we are guaranteed that $\sum_{z_e: r \in B_{z_e}}w_e \leq 2 w_{\bar{e}}$ where $\bar{e}$ denotes such an edge that is closest to the root of $F$ (equivalently $T$). On the other hand, due to the fact that $r \in B_{z_{\bar{e}}}$ we are guaranteed by the construction of $F$ that $w_{\bar{e}} \leq \sum_{j=1}^{k(z_e)}y_{z_j} \leq \rho(r)$. Therefore, 
\begin{align}
\label{equation.metric.bound_edge_counters_from_B_to_opt}
\sum_{z_e: r \in B_{z_e}} y_{z_e} = 
\sum_{e: r \in B_{z_e}} w_e \leq 
2w_{\bar{e}} \leq 
2\sum_{j=1}^{k(z_e)}y_{z_j} \leq
2\rho(r).
\end{align}

We now consider the case that $\hat{z} = z_k$ (i.e., the delay counters). By the definition of $B_{z_k}$ we are guaranteed that $r$ has delay-level $\geq \bar{k}$ where $\bar{k}$ denotes the closest delay counter to the root from all delay counters such that $r \in B_{z_k}$. Therefore, $\sum_{j=1}^{\bar{k}} y_{z_j} \leq
\rho(r)$. Again, if $r \in B_{z_k}$ then $r$ must have been given to $F_{z_k}$ upon arrival. Therefore, if $z_k$ and $z_{k'}$ are both such counters, then one must be the ancestor of the other in $F$, resulting in $\sum_{z_k: r \in B_{z_k}} y_{z_k} \leq
\sum_{j=1}^{\bar{k}} y_{z_j}$. Therefore, overall,
\begin{align}
\label{equation.metric.bound_delay_counters_from_B_to_opt}
\sum_{z_k: r \in B_{z_k}} y_{z_k} \leq
\sum_{j=1}^{\bar{k}} y_{z_j} \leq
\rho(r).
\end{align}

Finally we consider $\sum_r \sum_{\hat{z}: r \in C_{\hat{z}}} \int_{t \in g^{-1}_{\hat{z}}(r)} \hat{z}'(t)dt$. Again, if $r \in C_{\hat{z}}$ then $r$ must have been given to $F_{\hat{z}}$ upon arrival. Therefore, if $\hat{z}$ and $\hat{z}'$ are both such counters, then one must be the ancestor of the other in $F$. Furthermore, if $\hat{z}$ and $\hat{z}'$ are both delay counters then their slopes must be different. Thus, due to the fact that the delay counters' slopes decrease exponentially and there are at most $h$ edge counters on a leaf to root path in $F$ we are guaranteed that for any time $t$, $\sum_{\hat{z}: r \in C_{\hat{z}}} \hat{z}'(t) \leq 2h (\bar{z}(t))'(t)$, where $\bar{z}(t)$ denotes the counter with the largest slope taken from the set of all counters satisfying $r \in C_{\hat{z}}$. Therefore,


\begin{align}
\label{equation.metric.bound_by_largest_slope_counter}
\sum_{\hat{z}: r \in C_{\hat{z}}} \int_{t \in g^{-1}_{\hat{z}}(r)} \hat{z}'(t)dt \leq 2h \int_{t \in \cup_{\hat{z}: r \in C_{\hat{z}}} g^{-1}_{\hat{z}}(r)}(\bar{z}(t))'(t)dt.
\end{align}

By the definition of $g_{\hat{z}}(\cdot)$ we are guaranteed that either $r$ is unmatched at time $t$ or there exists an unmatched request that will be matched to $r$ in the future. Furthermore, by the definition of $C_{\hat{z}}$ we are guaratneed that $r$'s momentary delay as incurred by $\algsol$ is at least $\alpha_{k(\hat{z})}$ for any $\hat{z} \in C_{\hat{z}}$ (and in particular $\bar{z}(t)$). Therefore,  
\begin{align}
\label{equation.metric.opt_pays_a_momentary_cost_that_is_atleast_alg}
\int_{t \in \cup_{\hat{z}: r \in C_{\hat{z}}} g^{-1}_{\hat{z}}(r)} (\bar{z}(t))'(t)dt = \int_{t \in \cup_{\hat{z}: r \in C_{\hat{z}}} g^{-1}_{\hat{z}}(r)} \alpha_{k(\bar{z}(t))} \leq \rho(r).
\end{align}

\noindent Combining equations (\ref{equation.metric.bound_by_largest_slope_counter}) and (\ref{equation.metric.opt_pays_a_momentary_cost_that_is_atleast_alg}),
\begin{align}
\label{equation.metric.bound_counters_from_C_to_opt}
\sum_{\hat{z}: r \in C_{\hat{z}}} \int_{t \in g^{-1}_{\hat{z}}(r)} \hat{z}'(t)dt \leq 2h \cdot \rho(r).
\end{align}

\noindent Therefore, overall we get,
\begin{align*}
\sum_r \int_t z_r'(t)dt  &= 
\sum_{\hat{z}} \int_t \hat{z}'(t)dt \leq 
2 \sum_{\hat{z}} \int_{t \in I_{even}} \hat{z}'(t)dt \leq 
\sum_{\hat{z}} \sum_{r \in D_{\hat{z}}} \int_{t \in g^{-1}_{\hat{z}}(r)} \hat{z}'(t)dt \\&=
\sum_r \sum_{\hat{z} : r \in A_{\hat{z}}} \int_{t \in g^{-1}_{\hat{z}}(r)} \hat{z}'(t)dt \nonumber +
\sum_r \sum_{\hat{z} : r \in B_{\hat{z}}} \int_{t \in g^{-1}_{\hat{z}}(r)} \hat{z}'(t)dt \nonumber +
\sum_r \sum_{\hat{z} : r \in C_{\hat{z}}} \int_{t \in g^{-1}_{\hat{z}}(r)} \hat{z}'(t)dt \\ &\leq
\sum_r \sum_{\hat{z} : r \in A_{\hat{z}}} y_{\hat{z}} \nonumber +
\sum_r \sum_{\hat{z} : r \in B_{\hat{z}}} y_{\hat{z}} \nonumber +
\sum_r \sum_{\hat{z} : r \in C_{\hat{z}}} \int_{t \in g^{-1}_{\hat{z}}(r)} \hat{z}'(t)dt \\ &\leq
\sum_r 2\kappa(r) + \sum_r \sum_{\hat{z} : r \in B_{\hat{z}}} y_{\hat{z}}  + \sum_r \sum_{\hat{z} : r \in C_{\hat{z}}} \int_{t \in g^{-1}_{\hat{z}}(r)} \hat{z}'(t)dt \\ &\leq 
\sum_r 2\kappa(r) + \sum_r 3 \rho(r)  + \sum_r \sum_{\hat{z} : r \in C_{\hat{z}}} \int_{t \in g^{-1}_{\hat{z}}(r)} \hat{z}'(t)dt \\ &\leq 
\sum_r 2\kappa(r) + \sum_r 3 \rho(r) + \sum_r 2h \cdot \rho(r) \\ &\leq
4 \cdot \algsol_c + (4h + 6) \cdot \algsol_d,
\end{align*}
where the equalities are simply through a change of summation order, the first inequality is due to equation (\ref{equation.metric.lose_factor_of_2_by_considering_first_interval_of_pair}), the second inequality is due to equation (\ref{equation.metric.bound_total_charge_in_interval}), the third inequality is due to equations (\ref{equation.metric.bound_edge_counters_from_A_to_opt}) and (\ref{equation.metric.bound_delay_counters_from_A_to_opt}), the fourth inequality is due to equations (\ref{equation.metric.bound_edge_counters_from_B_to_opt}) and (\ref{equation.metric.bound_delay_counters_from_B_to_opt}), the fifth inequality is due to equation (\ref{equation.metric.bound_counters_from_C_to_opt}) and the last inequality is due to equations (\ref{equation.metric.opt_delay_is_counted_twice}) and (\ref{equation.metric.opt_connection_is_counted_twice}).
\end{proof}

\noindent Combining Propositions \ref{proposition.metric.bound_algorithm_by_counters} and \ref{proposition.metric.bound_counters_by_OPT} yields the following theorem.

\begin{theorem}
\label{theorem.MA_competitive_on_HST}
$\algma$ is $(O(1), O(h))$-competitive for any HST with height $h$ and any concave delay function.
\end{theorem}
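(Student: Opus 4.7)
The plan is to combine the two propositions that have just been established in order to obtain the $(O(1), O(h))$ bound. Specifically, Proposition~\ref{proposition.metric.bound_algorithm_by_counters} gives a multiplicative $4$ upper bound on $\algma$'s total cost in terms of the aggregate counter increase $\sum_r \int_t z_r'(t)dt$, while Proposition~\ref{proposition.metric.bound_counters_by_OPT} provides the complementary lower bound that relates this aggregate counter increase to the connection and delay costs of an arbitrary solution $\algsol$, specifically $\sum_r \int_t z_r'(t)dt \leq O(1)\cdot \algsol_c + O(h)\cdot \algsol_d$.

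The argument would then be a one-line chaining: starting from $\algma \leq 4 \sum_r \int_t z_r'(t)dt$, we substitute the bound from Proposition~\ref{proposition.metric.bound_counters_by_OPT} to obtain $\algma \leq O(1)\cdot \algsol_c + O(h)\cdot \algsol_d$. Since $\algsol$ was arbitrary, this exactly matches the definition of $(O(1), O(h))$-competitiveness as introduced in Section~\ref{subsection.metric.reduction_to_HSTs}.

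Since both propositions are already proven, there is really no remaining obstacle here; the theorem is essentially a restatement packaging the two bounds together. The only thing to verify is that the constants propagate correctly through the chain, which is straightforward given the explicit constants $4$ from Proposition~\ref{proposition.metric.bound_algorithm_by_counters} and $4$, $4h+6$ from the final calculation of Proposition~\ref{proposition.metric.bound_counters_by_OPT}, yielding overall constants of $16$ and $16h+24$ on the connection and delay costs respectively. I would present this as a short two or three line proof, noting that via Lemma~\ref{lemma.embed_metric_to_hst} and Lemma~\ref{lemma.transfer_competitiveness_from_hst_to_general_metric} this $(O(1), O(h))$ guarantee on 2-HSTs of height $O(\log n)$ immediately yields an $O(\log n)$-competitive randomized algorithm for Concave MPMD on general $n$-point metrics, which is the ultimate goal of the section.
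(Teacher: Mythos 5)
Your proposal is correct and is exactly the paper's argument: the theorem is obtained by chaining Proposition~\ref{proposition.metric.bound_algorithm_by_counters} with Proposition~\ref{proposition.metric.bound_counters_by_OPT}, and your constant bookkeeping ($16$ and $16h+24$) is consistent with the explicit constants in those propositions.
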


\noindent Finally, combining Theorem \ref{theorem.MA_competitive_on_HST} with Lemmas \ref{lemma.embed_metric_to_hst} and \ref{lemma.transfer_competitiveness_from_hst_to_general_metric}, yields the following theorem.

\begin{theorem}
$\algma$ is $O(\log n)$-competitive for any metric and any concave delay function.
\end{theorem}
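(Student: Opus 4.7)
The plan is to derive the final theorem directly by chaining together the three ingredients already established in the excerpt: Theorem~\ref{theorem.MA_competitive_on_HST} on weighted $2$-HSTs, the probabilistic embedding Lemma~\ref{lemma.embed_metric_to_hst}, and the transfer Lemma~\ref{lemma.transfer_competitiveness_from_hst_to_general_metric}. No new analysis of $\algma$ is needed; the work is entirely in tracking the two parameters of $(\beta,\gamma)$-competitiveness through the embedding.

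First I would fix an arbitrary instance of Concave MPMD on a general $n$-point metric $G$ and invoke Lemma~\ref{lemma.embed_metric_to_hst} to obtain a distribution $\mathcal D$ over metrics induced by weighted $2$-HSTs of height $h = O(\log n)$ with distortion $\mu = O(\log n)$. Next, for every HST $T$ in the support of $\mathcal D$, Theorem~\ref{theorem.MA_competitive_on_HST} guarantees that $\algma$ run on $T$ is $(O(1), O(h)) = (O(1), O(\log n))$-competitive against any solution on $T$. Feeding this into Lemma~\ref{lemma.transfer_competitiveness_from_hst_to_general_metric} with $\beta = O(1)$ and $\gamma = O(\log n)$ yields an online algorithm on $G$ that is $(\mu\beta,\gamma) = (O(\log n), O(\log n))$-competitive; that is,
\[
\ex[\algma(\sigma)] \leq O(\log n)\cdot \optim_c + O(\log n)\cdot \optim_d \leq O(\log n)\cdot \optim,
\]
since the connection and delay parts of any solution sum to its total cost. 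This is exactly the claimed $O(\log n)$-competitiveness.

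There is essentially no obstacle here, so the only subtlety worth flagging is that Lemma~\ref{lemma.transfer_competitiveness_from_hst_to_general_metric} is stated for deterministic algorithms on each HST, and we need the resulting algorithm on $G$ to be randomized (drawing $T\sim\mathcal D$ once and running $\algma$ on $T$). This is handled exactly as in \cite{Online_matching_haste_makes_waste, Polylogarithmic_Bounds_on_the_Competitiveness_of_Min_cost_Perfect_Matching_with_Delays}: the embedding is dominating (distances only stretch, never contract), so the connection cost paid on $T$ upper bounds the connection cost on $G$, and the delay cost is metric-independent, which is what allows $\gamma$ to pass through the embedding without picking up the distortion factor $\mu$. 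Putting the pieces together produces the bound and completes the proof.
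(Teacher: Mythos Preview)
Your proposal is correct and matches the paper's own proof, which is a single sentence: ``Finally, combining Theorem~\ref{theorem.MA_competitive_on_HST} with Lemmas~\ref{lemma.embed_metric_to_hst} and~\ref{lemma.transfer_competitiveness_from_hst_to_general_metric}, yields the following theorem.'' You have in fact spelled out more of the bookkeeping (tracking $(\beta,\gamma)$ through the embedding and noting why $\gamma$ does not pick up the distortion factor) than the paper does.
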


\section{Single Location Bipartite Matching}
\label{section.bipartite.single_location}
In this section we consider the Single Location Concave MBPMD problem. We will ultimately show an $O(1)$-competitive algorithm for this problem. Our algorithm will make use of counters for each linear piece in the delay function denoted by $z_1^+, z_1^-, \ldots, z_d^+, z_d^-$. Every counter will have a slope and a capacity which is defined to be $y_k$ and $\alpha_k$ for counters $z_k^+$ and $z_k^-$. Before formally defining our algorithm we need the following definition.

\begin{definition}
Given counters $z_k^+$ and $z_k^-$ we denote by $P_k(t)$ (resp. $N_k(t)$) the number of positive (resp. negative) requests associated with counter $z_k^+$ (resp. $z_k^-$) at time $t$. Finally, we define the surplus of the prefix of counters as $sur_k(t) = \sum_{i=1}^k (P_i(t) - N_i(t))$.
\end{definition}

Our algorithm is defined as follows. We will associate positive requests with positive counters and negative requests with negative counters. Once a request arrives, we associate it with $z_1$ of the corresponding polarity. If there are 2 requests of opposite polarity on the same indexed counters, match them. Otherwise, at any point in time, we consider all counters $z_k$ simultaneously and increase counter $z_k^+$ (resp. $z_k^-$) at a rate of $\alpha_k|sur_k(t)|$ if and only if there is at least one request associated with the counter and $sur_k(t) > 0$ (resp. $sur_k(t) < 0$). Finally, if any counter $z_k^+$ (resp. $z_k^-$) reaches its capacity, move a single request that is associated with it to $z_{k+1}^+$ (resp. $z_{k+1}^-$) and reset both $z_k^+$ and $z_k^-$ to 0. The algorithm is formally defined in Algorithm \ref{algorithm.bipartite.single_location.BPSLA} (deferred to the Appendix).


\begin{theorem}
\label{theorem.bipartite.single_location.BSLA_is_constant_competitive}
$\algbpsla \leq O(1) \cdot \optim$.
\end{theorem}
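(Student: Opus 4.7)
The plan is to mirror the two-step structure of the monochromatic single-location proof (Theorem \ref{theorem.single_location_O(1)_competitive}), adapted to handle the polarity constraint. Specifically, I would first upper bound $\algbpsla$'s cost by the total increase of all counters $z_k^+$ and $z_k^-$ over time, and then lower bound $\optim$'s cost by that same total counter increase, yielding an $O(1)$ competitive ratio.

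For the first step, let $z_k'(t)$ denote the momentary increase rate of whichever of $z_k^+$ or $z_k^-$ is active at time $t$ (at most one is, according to the sign of $sur_k(t)$). Unlike the monochromatic case, the rate here is $\alpha_k |sur_k(t)|$, so the counter increase already sums the momentary delay contributions of all majority-polarity pending requests at level $k$. Following the strategy outlined in the introduction, I would define a weighted discrepancy potential of the form $\Phi(t) = \sum_k \alpha_k |\Delta_k(t)|$, where $\Delta_k(t)$ captures the imbalance between $\algbpsla$'s and $\optim$'s pending-request profiles at or above level $k$ (analogous to the potential used in \cite{Min_Cost_Bipartite_Perfect_Matching_with_Delays}), and use it to argue that the algorithm's momentary delay cost is dominated by the momentary counter increase. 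Analogues of Lemma \ref{lemma.single_location.real_delay_bounded_by_aux} and Proposition \ref{proposition.single_location.bound_algorithm_by_counters}, accounting for the fact that a request may arrive to a partially-filled prefix of counters and "catch up" quickly, then yield $\algbpsla \leq O(1) \cdot \sum_k \int_t z_k'(t)dt$.

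For the second step, I would adapt the interval-based charging scheme of Proposition \ref{proposition.single_location.bound_counters_by_OPT}. For each counter $z_k^+$ (and symmetrically $z_k^-$), partition its lifetime into intervals between successive moments at which a request moves up from level $k$ to level $k+1$. Analogues of Lemmas \ref{lemma.single_location.last_interval_is_even}--\ref{lemma.single_location.middle_interval_parity_OPT} should hold for each polarity separately: within any pair of consecutive intervals on $z_k^+$, either $\optim$ matches a request arriving in the pair to one arriving outside of it, or $\optim$ must be delaying an unmatched request whose delay level is $\geq k$. The key new ingredient is that the restriction to opposite-polarity matchings forces $\optim$ to carry at least $|sur_k(t)|$ unmatched majority-polarity requests at level $k$, so the rate at which $\optim$ accrues delay toward these requests matches the rate $\alpha_k|sur_k(t)|$ at which the counter increases. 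Charging each unit of counter increase either to an $\optim$-matching or to an $\optim$-request being delayed then yields the desired $O(1)$ bound.

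The main obstacle I anticipate is precisely reformulating the parity and charging lemmas from the monochromatic case, which relied crucially on the fact that any two pending requests there could be matched immediately. In the bichromatic setting I expect to need separate arguments for positive-surplus and negative-surplus intervals, combined with a careful bookkeeping that prevents double-charging the same $\optim$-request across the $+$ and $-$ copies of a level. The weighted discrepancy potential is what lets us make this argument quantitative: the $|sur_k(t)|$ factor built into the counter increase equals, up to a constant, the momentary $\algbpsla$-vs-$\optim$ discrepancy at level $k$, so the momentary charging should integrate to the desired global inequality even when the combinatorial interval structure is more delicate than in the monochromatic proof.
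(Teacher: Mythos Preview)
Your two-step structure matches the paper's, but you have inverted where the two key tools go.

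For the first step (bounding $\algbpsla$ by the total counter increase), the paper uses \emph{no} potential involving $\optim$; since this step is purely internal to the algorithm there is no reason to bring $\optim$ in. Lemma~\ref{lemma.single_location.real_delay_bounded_by_aux} is reused verbatim, and the analogue of Proposition~\ref{proposition.single_location.bound_algorithm_by_counters} is proved by a direct momentary charging: at each time $t$ one builds an injection $f_t$ from pending requests on a counter $z_k$ to opposite-polarity unmarked requests on strictly lower counters, and observes that the requests on $z_k$ \emph{not} in the domain of $f_t$ number at most $|sur_k(t)|$. Since $s_r(t)\le s_{f_t(r)}(t)$, this gives $\sum_r s_r(t)\le 2\sum_k \alpha_k|sur_k(t)| = 2\sum_k z_k'(t)$ and hence $\algbpsla \le 3\sum_k\int_t z_k'(t)\,dt$.

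For the second step (bounding the counter increase by $\optim$), the paper does \emph{not} adapt the parity/odd-subinterval charging of Proposition~\ref{proposition.single_location.bound_counters_by_OPT}. The obstacle you anticipate is real: the counter increases at rate $\alpha_k|sur_k(t)|$ rather than $\alpha_k$, both $z_k^+$ and $z_k^-$ are reset together whenever either polarity moves up, and the parity of $|R(I)|$ no longer governs whether $\optim$ is live. Moreover, your claim that $\optim$ must carry at least $|sur_k(t)|$ unmatched requests at level $k$ is not correct as stated: $sur_k(t)$ is the algorithm's surplus by \emph{counter} level, while $\optim$'s unmatched requests are naturally stratified by \emph{delay} level, and these can differ. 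The paper handles exactly this mismatch by putting the weighted-discrepancy potential \emph{here}: with $\phi_k(t)=2y_k\,|sur_k(t)-sur^*_k(t)|$ (where $sur^*_k(t)$ is $\optim$'s surplus among requests of delay level $\le k$), a phase-by-phase analysis yields
\[
\int_{I_i^k} z_k'(t)\,dt + \Delta_i(\phi_k) \;\le\; 4\alpha_k\int_{I_i^k} o_k(t)\,dt + 2y_k\,\mathcal{E}^k_{I_i^k},
\]
where $o_k(t)$ counts $\optim$'s unmatched requests at delay level $\le k$ and $\mathcal{E}^k_{I_i^k}$ counts requests whose $\optim$-delay crosses level $k$ during the phase. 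Summing over phases (the potential telescopes to zero) and then over $k$, the geometric decay of the $\alpha_k$ gives $\sum_k\int_t z_k'(t)\,dt\le 12\cdot\optim$. No interval pairing, no odd-subintervals, and no separate $+/-$ bookkeeping are needed.
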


In order to prove our theorem, we first introduce the following definition and then bound our algorithm's cost: we first bound the algorithm's cost by the increase in its counters and then bound the increase in its counters by the cost of the optimal solution.

\begin{definition}
Given counters $z_k^+$ and $z_k^-$ define $z_k'(t)$ to be $\alpha_k |sur_k(t)|$ if there exists a positive (resp. negative) request on $z_k^+$ (resp. $z_k^-$) and $sur_k(t) > 0$ (resp. $sur_k(t) < 0$). Otherwise, $z_k'(t) = 0$.
\end{definition}

\subsection{Upper Bounding $\algbpsla$'s Cost}

In this section we would like to upper bound $\algbpsla$ by the overall increase in its counters. In order to do so recall the definitions of $\delta_t(r), k_t(r), d'_r(t)$ and $s_r(t)$ as defined in Definitions \ref{definition.single_location.delta_t(r)_k_t(r)} and \ref{definition.single_location.d'_r(t)_s_r(t)}.



\begin{proposition}
\label{proposition.bipartite.single_location.bound_BPSLA_by_counters}
$\algbpsla \leq 3 \sum_k \int_t z_k'(t)dt$.
\end{proposition}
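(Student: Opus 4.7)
My plan mirrors the three-step structure of the monochromatic Proposition~1, summing a ``1'' contribution and a ``2'' contribution for a total factor of $3$.

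First, I would note that Lemma~1 carries over to the bichromatic setting: its proof relies only on the definitions of $d_r'(t)$ and $s_r(t)$ and on the monotonicity of a request's counter level in time; since requests in $\algbpsla$ also only move upward, the lemma holds unchanged, yielding
\[
\algbpsla \leq \sum_r \sum_{k=1}^{k_{m(r)}(r)-1} y_k + \sum_r \int_t s_r(t)\,dt.
\]

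Second, I would establish $\sum_r \sum_{k=1}^{k_{m(r)}(r)-1} y_k \leq \sum_k \int_t z_k'(t)\,dt$. Each time either $z_k^+$ or $z_k^-$ reaches capacity, exactly one request advances from level $k$ to level $k+1$, and the integrated rate $\int_t z_k'(t)\,dt$ over that filling cycle is exactly $y_k$ (both $z_k^\pm$ are then reset, but the integral records the increase). Hence the number of requests ever passing level $k$ is at most $(\int_t z_k'(t)\,dt)/y_k$, and the inequality follows by multiplying by $y_k$ and summing over $k$.

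Third, and the main technical step, I would show the moment-wise bound $\sum_r s_r(t) \leq 2\sum_k z_k'(t)$. Writing $q_k = P_k - N_k$ and using $P_k N_k = 0$, this becomes $\sum_k \alpha_k |q_k| \leq 2\sum_{k \in \mathcal{A}} \alpha_k |sur_k|$, where $\mathcal{A} = \{k : \mathrm{sign}(q_k) = \mathrm{sign}(sur_k) \neq 0\}$ is the set of active levels. The structural observation driving the bound is: at an inactive level $k$ with $q_k \neq 0$, the jump $q_k$ must oppose $sur_k$, forcing $|sur_{k-1}| > |sur_k|$; equivalently, $|sur|$ is non-increasing along any maximal inactive run $I_j$ immediately following an active level $j$. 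This yields the telescoping $\sum_{k \in I_j} |q_k| \leq |sur_j|$, which combined with $\alpha_k \leq \alpha_j/2$ for $k>j$ gives $\sum_{k \in I_j} \alpha_k |q_k| \leq \tfrac{\alpha_j}{2}|sur_j|$. For the active level $j$ itself, the triangle inequality gives $|q_j| \leq |sur_j| + |sur_{j-1}|$, and $|sur_{j-1}|$ is bounded by $|sur|$ at the previous active level $\mathrm{prev}(j)$ (again by non-increasing on inactive runs); since $\alpha_j \leq \tfrac12 \alpha_{\mathrm{prev}(j)}$, the extra term telescopes and is absorbed. Summing all contributions over active $j$ gives exactly the factor $2$.

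The main obstacle is Step~3. Unlike the monochromatic case, where each counter holds at most one pending request and the lowest-counter request pairs trivially with its own counter's increase, in the bichromatic setting many same-polarity requests can stack on a single counter, the counter rate $\alpha_k |sur_k|$ scales with the prefix surplus, and whether a counter is ``active'' depends on the sign of $sur_k$ matching the polarity of requests present on it. Extracting a tight factor $2$ (rather than a larger constant) requires both the structural monotonicity of $|sur|$ along inactive runs and careful use of the exponential decrease of the $\alpha_k$'s when stepping between consecutive active levels.
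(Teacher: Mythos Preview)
Your overall structure and Steps~1 and~2 match the paper's proof essentially verbatim (with one harmless slip: the integral of $z_k'$ over a filling cycle is \emph{at least} $y_k$, not exactly $y_k$, since the opposite-polarity counter may also have partially filled before being reset; the inequality you need still holds).

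Your Step~3, however, takes a genuinely different route from the paper. The paper proves the moment-wise bound $\sum_r s_r(t)\le 2\sum_k z_k'(t)$ by a combinatorial charging argument: it iterates over counters bottom-up and builds an injection $f_t$ from a subset $\mathcal D_t$ of requests into opposite-polarity requests at strictly lower counters, then shows that (i) the requests outside $\mathcal D_t$ contribute at most $\sum_k \alpha_k|sur_k(t)|$ directly, and (ii) the requests inside $\mathcal D_t$ contribute no more than their images do, hence also at most $\sum_k \alpha_k|sur_k(t)|$. You instead give a purely algebraic argument on the sequence $(q_k,sur_k)$: monotonicity of $|sur|$ along inactive runs yields $\sum_{k\in I_j}|q_k|\le |sur_j|$ by telescoping, the triangle inequality handles the active levels, and the exponential decay of the $\alpha_k$'s absorbs the cross terms between consecutive active levels. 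Both arguments yield the same factor~$2$; the paper's injection is perhaps more conceptually transparent (it literally pairs off ``frozen'' requests with opposite-polarity witnesses below) and generalizes verbatim to the tree setting of Section~\ref{section.bipartite.metric}, whereas your sequence argument is more self-contained and avoids constructing any explicit map.
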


\noindent Before proving our propositions we introduce the following lemmas.

\begin{lemma}
\label{lemma.bipartite.single_location.bound_srt_by_increase_in_counters}
$\sum_r \int_t s_r(t)dt \leq 2 \sum_k \int_t z_k'(t)dt.$
\end{lemma}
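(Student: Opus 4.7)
The plan is to pass to a pointwise-in-$t$ statement and then integrate. Fix a time $t$ and let $k_1 < k_2 < \cdots < k_m$ be the indices of the counters currently holding at least one request. Because the algorithm matches oppositely polarized requests sitting on the same index immediately, exactly one of $P_{k_j}(t), N_{k_j}(t)$ is nonzero; write $n_j$ for this nonzero value and $\sigma_j \in \{+1,-1\}$ for its polarity. Then $s_j := sur_{k_j}(t) = \sum_{i \le j} \sigma_i n_i$, and the algorithm's rules give
\[
\sum_r s_r(t) \;=\; \sum_{j=1}^m \alpha_{k_j}\, n_j, \qquad
z_{k_j}'(t) \;=\; \begin{cases} \alpha_{k_j}\, |s_j| & \text{if } \operatorname{sign}(s_j)=\sigma_j, \\ 0 & \text{otherwise.} \end{cases}
\]
Hence the lemma will follow at once from the pointwise bound $\sum_{j=1}^m \alpha_{k_j} n_j \le 2 \sum_{j=1}^m z_{k_j}'(t)$ combined with integration in $t$.

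The core step is to prove, by induction on $j$, the strengthened invariant
\[
2 \sum_{i \le j} z_{k_i}'(t) \;\ge\; \sum_{i \le j} \alpha_{k_i} n_i \;+\; \alpha_{k_j}\, |s_j| \tag{$\star$}
\]
for every $j = 1,\ldots,m$. Dropping the nonnegative $\alpha_{k_m}|s_m|$ term at $j=m$ yields exactly the pointwise bound above. The base case $j=1$ is immediate, since $|s_1|=n_1$ and $\operatorname{sign}(s_1)=\sigma_1$, so $(\star)$ holds with equality. For the inductive step, subtracting $(\star)$ at $j-1$ reduces the claim to
\[
2 z_{k_j}'(t) \;\ge\; \alpha_{k_j}\bigl(n_j + |s_j|\bigr) \;-\; \alpha_{k_{j-1}}\,|s_{j-1}|,
\]
which I would verify by a short case analysis according to whether $\operatorname{sign}(s_j) = \sigma_j$ and, if so, whether $\operatorname{sign}(s_{j-1}) = \sigma_j$ as well. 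In the ``aligned'' subcase $|s_j| = |s_{j-1}| + n_j$ the inequality is obvious; in the ``flipped'' subcase $|s_j| = n_j - |s_{j-1}|$ the right-hand side becomes $(\alpha_{k_j}-\alpha_{k_{j-1}})|s_{j-1}| \le 0$ while the left-hand side is still $2\alpha_{k_j}|s_j|\ge 0$.

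The main obstacle is the ``bad'' case in which $n_j > 0$ but $\operatorname{sign}(s_j) \ne \sigma_j$ (or $s_j=0$), so that $z_{k_j}(t)$ does not increase at time $t$ and contributes $0$ to the right-hand side while still adding $\alpha_{k_j} n_j$ to the left. Here two ingredients rescue us: first, $|s_{j-1}| \ge n_j$ (otherwise the step $\sigma_j n_j$ would have flipped the sign to $\sigma_j$), which gives $n_j + |s_j| = |s_{j-1}|$; second, the exponentially decaying slopes $\alpha_{k_{j-1}} \ge 2\alpha_{k_j}$ (from $k_{j-1} < k_j$ and $\alpha_k \ge 2\alpha_{k+1}$). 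Together these yield $\alpha_{k_j}(n_j + |s_j|) - \alpha_{k_{j-1}}|s_{j-1}| = (\alpha_{k_j} - \alpha_{k_{j-1}})|s_{j-1}| \le 0$, so $(\star)$ propagates. The extra $\alpha_{k_j}|s_j|$ term in $(\star)$ is precisely the ``credit'' carried from one level to the next that absorbs these sign-mismatch defects.
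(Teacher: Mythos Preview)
Your argument is correct and takes a genuinely different route from the paper. The paper proves the same pointwise inequality via a combinatorial charging scheme: it builds, level by level, an injection $f_t$ from certain requests into opposite-polarity requests at strictly lower counters, splits all pending requests into $\mathcal{D}_t$ (the domain of $f_t$), $f_t(\mathcal{D}_t)$, and the rest, and then shows separately that $\sum_{r\notin\mathcal{D}_t} s_r(t)\le \sum_k \alpha_k|sur_k(t)|$ and $\sum_{r\in\mathcal{D}_t} s_r(t)\le \sum_{r\notin\mathcal{D}_t} s_r(t)$. Your approach instead runs a direct induction on the occupied levels with the strengthened invariant $(\star)$, carrying the extra term $\alpha_{k_j}|s_j|$ as credit from one level to the next. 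This is slicker: it avoids constructing the matching $f_t$ and makes the role of the prefix surplus completely transparent. It also shows that the factor $2$ only requires $\alpha_{k_{j-1}}\ge \alpha_{k_j}$ (monotone slopes), not the exponential decay $\alpha_{k_{j-1}}\ge 2\alpha_{k_j}$ that you invoke in the ``bad'' case; you might want to note that explicitly. One small slip: in the ``flipped'' subcase with $|s_j|=n_j-|s_{j-1}|$, the right-hand side is $\alpha_{k_j}(2n_j-|s_{j-1}|)-\alpha_{k_{j-1}}|s_{j-1}|$, not $(\alpha_{k_j}-\alpha_{k_{j-1}})|s_{j-1}|$; what equals $(\alpha_{k_{j-1}}-\alpha_{k_j})|s_{j-1}|\ge 0$ is the difference $\text{LHS}-\text{RHS}$, so the inequality still holds and the proof goes through.
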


\begin{proof}
To prove our lemma we will charge the value $\sum_r s_r(t)$ for a given time $t$ using a charging scheme $f_t$. Note that the scheme is defined for a given time $t$ and may change over time.

$f_t$ is defined as follows. We iterate over the counters containing requests from lowest to highest. We begin such that all requests are unmarked and we will mark them as we iterate over the counters. 

Consider the iteration in which the counter $z_{k}$ was encountered. Let $A_{k}$ denote the requests belonging to counter $z_{k}$ (as we will see, we only mark requests from lower counters and since we are iterating over the counters, lowest to highest, we are guaranteed that $A_{k}$ are unmarked). W.l.o.g. assume that they are positive. Let $B_{k}$ denote the set of unmarked negative requests associated with counters $z_1, \ldots, z_{k-1}$ (note that $B_{k}$ might be empty). Consider an arbitrary subset $A \subset A_{k}$ of size $\min \{|A_{k}|, |B_{k}|\}$. Define $f_t$ on $A$ as a (arbitrary) 1-1 mapping to $B_{k}$. Mark all requests in $B_{k}$ and $A$.

The process partitions all the requests into three sets: the domain of $f_t$, denoted by $\mathcal{D}_t$, the image of $f_t$, denoted by $f_t(\mathcal{D}_t)$ and all the rest. We note that both $\mathcal{D}_t$ and $f_t(\mathcal{D}_t)$ may both contain positive and negative requests simultaneously.

By the definition of $f_t$, $f_t: \mathcal{D}_t \rightarrow f(\mathcal{D}_t)$ is 1-1 and always maps requests to requests with opposite polarity. Recall the definitions of $P_k(t) = P_k$ and $N_k(t) = N_k$. Note that since our algorithm matches requests of opposite polarity that are associated with the same counter, $\min \{P_k, N_k\} = 0$. We say that $r \in z_k$ if $r$ is associated with $z_k$ at time $t$. 

We first consider any counter $z_k$ with $z_k'(t) \neq 0$. Assume w.l.o.g. that the requests are positive. Therefore, by the definition of our algorithm $sur_k(t) > 0$. Observe that by the definition of $f_t$, $|\{r: r \in \mathcal{D}_t \land r \in z_k\}| = \min \{P_k, \sum_{j=1}^{k-1} N_j\}$. By the definition of $P_k$, $|\{r: r \in \mathcal{D}_t \land r \in z_k\}| + |\{r: r \not \in \mathcal{D}_t \land r \in z_k\}| = P_k$. Therefore, if $P_k \geq \sum_{j=1}^{k-1} N_j$ then, 
\begin{align*}
|\{r: r \not \in \mathcal{D}_t \land r \in z_k\}| &= 
P_k - \min \{P_k, \sum_{j=1}^{k-1} N_j \} = 
P_k - \sum_{j=1}^{k-1} N_j \leq 
\sum_{j=1}^{k} P_j - \sum_{j=1}^{k-1} N_j = 
sur_k(t).    
\end{align*}
On the other hand, if $P_k \leq \sum_{j=1}^{k-1} N_j$ then,
\begin{align*}
|\{r : r \not \in \mathcal{D}_t \land r \in z_k\}| = 
0 \leq 
sur_k(t).
\end{align*}
Therefore, in any case, by summing over all counters we are guaranteed that,
\begin{align}
\label{equation.bipartite.single_location.bound_srt_by_increase_in_counters_1}
\sum_{r \not \in \mathcal{D}_t} s_r(t) &=
\sum_{r \not \in \mathcal{D}_t \land \forall k: r \not \in z_k} s_r(t) + \sum_{k} \sum_{r \not \in \mathcal{D}_t \land r \in z_k}s_r(t) \nonumber \\ &=
\sum_{k} \sum_{r \not \in \mathcal{D}_t \land r \in z_k}s_r(t) =
\sum_{k} \alpha_k |\{r: r \not \in \mathcal{D}_t \land r \in z_k\}|   \leq
\sum_k \alpha_k |sur_k(t)|,
\end{align}
where the second equality is due to the fact that $s_r(t) = 0$ for requests that are not associated with any counter at time $t$, the second equality follows from the definition of $s_r(t)$ and the second inequality follows from our earlier discussion.

Observe that due to the fact that $f_t$ maps requests to requests associated with counters of strictly lower levels, we are guaranteed that $s_r(t) \leq s_{f_t(r)}(t)$ for all $r \in \mathcal{D}_t$. Therefore, 
\begin{align}
\label{equation.bipartite.single_location.bound_srt_by_increase_in_counters_2}
\sum_{r \in \mathcal{D}_t} s_r(t) \leq
\sum_{r \in \mathcal{D}_t} s_{f_t(r)}(t)  =
\sum_{r \in f_t(\mathcal{D}_t)} s_r(t) \leq
\sum_{r \not \in \mathcal{D}_t} s_r(t),
\end{align}
where the equality is due to the fact that $f_t$ is 1-1 and the second inequality is due to the fact that $\mathcal{D}_t \cap f_t(\mathcal{D}_t) = \emptyset$.

\noindent Combining the above and summing over all points in time,
\begin{align*}
\int_t \sum_r s_r(t)dt &= 
\int_t \sum_{r \in \mathcal{D}_t} s_r(t)dt + \int_t \sum_{r \not \in \mathcal{D}_t} s_r(t)dt \\&\leq
2 \int_t \sum_{r \not \in \mathcal{D}_t} s_r(t)dt \leq
2 \int_t \sum_k \alpha_k |sur_k(t)|dt =
2 \int_t \sum_k z_k'(t) dt.
\end{align*}
where the first inequality is due to equation (\ref{equation.bipartite.single_location.bound_srt_by_increase_in_counters_2}) and the second is due to equation (\ref{equation.bipartite.single_location.bound_srt_by_increase_in_counters_1}).
\end{proof}

\begin{proof}[Proof of Proposition \ref{proposition.bipartite.single_location.bound_BPSLA_by_counters}]
We observe that due to the fact that once a request moves up a counter, the former counter's value is reset to 0, we have,
\begin{align}
\label{equation.single_location.bound_BPSLA_by_counters_2}
\sum_r  \sum_{k = 1}^{k_{m(r)}(r)-1} y_k \leq \sum_k \int_t z'_k(t)dt,
\end{align}
since the capacity of every counter is $y_k$. We note that Lemma \ref{lemma.single_location.real_delay_bounded_by_aux} clearly holds for $\algbpsla$ as well. Therefore,
\begin{align*}
\algbpsla &= 
\sum_r \int_t d_r'(t)dt \leq
\sum_r  \sum_{k = 1}^{k_{m(r)}(r)-1} y_k  + \sum_r  \int_t s_r(t)dt \nonumber \\&\leq
\sum_k \int_t z'_k(t)dt + \sum_r  \int_t s_r(t)dt \leq
3 \sum_k \int_t z_k'(t)dt,
\end{align*}
where the first inequality follows from Lemma \ref{lemma.single_location.real_delay_bounded_by_aux}, the second inequality follows from equation (\ref{equation.single_location.bound_BPSLA_by_counters_2}) and the third inequality follows from Lemma \ref{lemma.bipartite.single_location.bound_srt_by_increase_in_counters}.
\end{proof}

\subsection{Lower Bounding $\optim$'s Cost}

In this section we upper bound the overall increase in counters by the optimal solution.

\begin{proposition}
\label{proposition.bipartite.single_location.bound_counters_by_opt}
$\sum_k \int_t z_k'(t)dt \leq 12\cdot \optim$.
\end{proposition}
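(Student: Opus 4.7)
The plan is to mirror the charging strategy from Proposition \ref{proposition.single_location.bound_counters_by_OPT} (the monochromatic single-location case), replacing the "single request per counter" structure with a per-surplus-unit account, and replacing the combinatorial liveness lemmas with bipartite analogs that respect polarity.

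First I would set up the per-counter interval decomposition exactly as in Definition \ref{definition.single_location.counter_intervals}: for each counter $z_k$ (i.e., the pair $z_k^+, z_k^-$), let $0=t_0^k<t_1^k<\dots<t_{m_k-1}^k<t_{m_k}^k=\infty$ be the reset times, i.e.\ moments when one side reaches capacity and a request is pushed to $z_{k+1}^{+/-}$. On each interval $I_i^k$ the total increase of $z_k^++z_k^-$ is exactly $y_k$ (because the dominant-polarity side fills from $0$ to $y_k$ and the opposite side only resets). Note that although the rate $\alpha_k|sur_k(t)|$ can be large, the \emph{total} increase per interval is still $y_k$, so the cardinality $m_k$ of intervals is the right thing to charge.

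Second, I would establish the bipartite analogs of Lemmas \ref{lemma.single_location.interval_parity_of_SLA}, \ref{lemma.single_location.last_interval_is_even}, \ref{lemma.single_location.last_interval_parity_of_OPT}, \ref{lemma.single_location.interval_is_odd} and \ref{lemma.single_location.middle_interval_parity_OPT}, now accounting for polarity: using that $\algbpsla$ matches opposite-polarity requests on equal-index counters, show that (i) $z_k$ can only increase at moments that lie in an odd-subinterval of $I_i^k$ with respect to the polarity-weighted arrivals in $R(I_i^k)$, (ii) the last interval $I_{m_k-1}^k$ contains an even number of requests (with equal numbers of each polarity, else the algorithm would push again), (iii) all other intervals contain an odd polarity-imbalance, and (iv) for each pair $\{I_{2i}^k,I_{2i+1}^k\}$, one of the two intervals guarantees that, on its odd-subinterval, $\optim$ is live with respect to $I_{2i}^k\cup I_{2i+1}^k$ --- meaning $\optim$ either holds an unmatched request from this set or matches some request of this set to a partner outside the window. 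The polarity constraint is actually helpful here: a surplus of, say, positives in $R(I_i^k)$ cannot be erased by $\optim$ internally, so $\optim$ must be live on those requests.

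Third, I would define the charging map $f_k(t)=r$ exactly as in the monochromatic proof: for $t\in I_{2i}^k$ (the chosen interval of the pair) on which $z_k$ increases, charge to a request of $R(I_{2i}^k\cup I_{2i+1}^k)$ that has $\optim$-delay-level $\geq k$ if one exists, otherwise to a request witnessing $\optim$'s liveness at time $t$; for $t\in I_{m_k-1}^k$ use Lemma analog of \ref{lemma.single_location.last_interval_parity_of_OPT}. Because $\int_{t\in I_i^k} z_k'(t)\,dt \le y_k$ per interval, the "high-level" contribution to any request $r$ of $\optim$-level $k^*$ telescopes as $\sum_{k\le k^*} y_k\le\rho(r)$, and the "liveness" contribution telescopes via the exponential decay of $\alpha_k$ to at most $2\alpha_{k^*+1}\cdot(\text{time }\optim\text{ delays }r)\le 2\rho(r)$ per request, exactly as in equations (\ref{equation.single_location.charging_scheme_lemma.4})--(\ref{equation.single_location.charging_scheme_lemma.6}).

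The main obstacle, and the only genuinely new issue compared to Proposition \ref{proposition.single_location.bound_counters_by_OPT}, is the rate $\alpha_k|sur_k(t)|$ rather than $\alpha_k$: a single moment of increase can correspond to many surplus requests simultaneously. I would handle this by charging per surplus unit --- equivalently, running the entire argument above for each of the $|sur_k(t)|$ pending requests on the prefix $z_1^{+/-},\dots,z_k^{+/-}$ whose polarity is dominant. Each such surplus request is, during that moment, unmatched by $\optim$ to any partner of the same prefix (since $\optim$ must eventually pair it with an opposite-polarity request of $R(I_i^k\cup I_{i+1}^k)$ or outside the window), so it individually witnesses one unit of liveness or one unit of delay-level charge. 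Summing the per-request factor $6$ bound over all surplus units (and doubling for the two-interval-per-pair loss and the factor $\tfrac12$ from $\optim=\tfrac12\sum_r\rho(r)$) yields $\sum_k\int_t z_k'(t)\,dt\le 12\cdot\optim$, completing the proof.
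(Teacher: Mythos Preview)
Your plan has a genuine gap: the interval/parity machinery from the monochromatic case does not port to $\algbpsla$ in the way you describe. In the monochromatic algorithm, at the start of each interval $I_i^k$ the prefix $z_1,\ldots,z_k$ is emptied to even parity, so the parity at any $t\in I_i^k$ is determined solely by the arrivals $R(I_i^k)$ up to $t$; this is what makes Lemma~\ref{lemma.single_location.interval_parity_of_SLA} work. In $\algbpsla$, when a request moves from $z_k$ to $z_{k+1}$ the surplus $sur_k$ drops by one but need not become zero, and the opposite-polarity counter $z_k^{\mp}$ may still be partially filled. Consequently $sur_k(t)$ for $t\in I_i^k$ is not a function of the ``polarity-weighted arrivals in $R(I_i^k)$'' alone --- it carries over state from earlier intervals --- so your claimed analogs of Lemmas~\ref{lemma.single_location.interval_parity_of_SLA}--\ref{lemma.single_location.middle_interval_parity_OPT} are not available. (Also, the total increase per interval is at most $2y_k$, not exactly $y_k$: both $z_k^+$ and $z_k^-$ can accrue within one phase.)

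Your ``per surplus unit'' fix does not close this gap either. The surplus units are requests sitting on the \emph{algorithm's} counters; $\optim$ may already have matched any of them to an opposite-polarity request that arrived earlier (and incurred small delay), so such a unit witnesses neither liveness nor a delay-level~$\ge k$ charge at time $t$. The paper handles exactly this discrepancy by abandoning the liveness/charging scheme and using the weighted-discrepancy potential $\phi_k(t)=2y_k\,|sur_k(t)-sur^*_k(t)|$ (following \cite{Min_Cost_Bipartite_Perfect_Matching_with_Delays}); Lemma~\ref{lemma.bipartite.single_location.yossis_lemma} shows that on each phase $\int z_k'(t)\,dt+\Delta_i\phi_k\le 4\alpha_k\int o_k(t)\,dt+2y_k\mathcal{E}^k_{I_i^k}$, and the proposition follows by telescoping $\phi_k$ and bounding $\sum_k\alpha_k\int o_k(t)\,dt\le 2\,\optim$ and $\sum_k y_k\mathcal{E}_k\le 2\,\optim$. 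The potential function is the missing ingredient.
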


\noindent We introduce several definitions to aid us in our proof.


\noindent Recall that $[x_{i-1}, x_{i})$ is defined as the $i$'s linear step in our delay cost function.

\begin{definition}
\label{definition.bipartite.single_location.delta^*_t(r)}
Given a request $r$ with arrival time $a(r)$ and some time $t$ such that $r$ is unmatched at time $t$ with respect to $\optim$, define $\delta^*_t(r)$ such that $t - a(r) \in [x_{\delta^*_t(r)-1}, x_{\delta^*_t(r)})$.
\end{definition}

\begin{definition}
Let $o_k(t)$ denote the number of unmatched requests $r$ at time $t$ with respect to $\optim$ such that $\delta^*_t(r) \leq k$. Further, let $sur^*_k(t)$ denote the positive surplus these requests.
\end{definition}

\noindent Note that for every pair of matched requests, with respect to $\optim$'s matching, only one incurs delay. 

\begin{definition}
\label{definition.bipartite.single_location.E_^k_I}
For a given time interval $I = [a,b)$, let $\mathcal{E}^k_{I}$ denote the number of requests $r$, such that either (1) $\delta^*_t(r)$ changed from $k$ to $k+1$ at time $t \in I$ or (2) $r$ is matched (upon arrival) at time $t\in I$ by $\optim$ to a request $r'$ such that $\delta^*_t(r') \geq k$.
\end{definition}

\noindent For a pictorial example of Definition \ref{definition.bipartite.single_location.E_^k_I} see Figure \ref{figure.bipartite.single_location.E^k_I}.

\begin{figure}[H]
\centering
\includegraphics[width= 11cm]{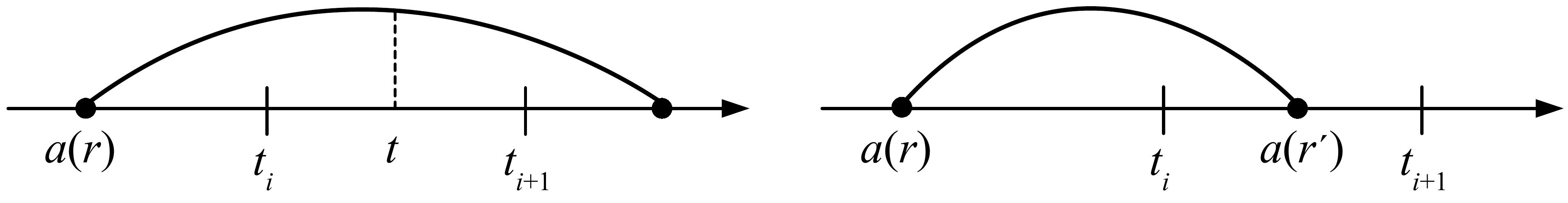}
\caption{In the left hand side, $r$ is counted towards $\mathcal{E}^k_{I}$ due to condition (1) of Definition \ref{definition.bipartite.single_location.E_^k_I}. In the right hand side, $r$ is counted towards $\mathcal{E}^k_{I}$ due to condition (2) of Definition \ref{definition.bipartite.single_location.E_^k_I}.}
\label{figure.bipartite.single_location.E^k_I}
\end{figure}

As in \cite{Min_Cost_Bipartite_Perfect_Matching_with_Delays} we define a potential function to aid in our proof. Specifically, let $\phi_k(t) = 2 y_k |sur_k(t) - sur^*_k(t)|$. Furthermore, given counters $z_k^+$ and $z_k^-$ we partition our time axis into intervals $[t^k_i, t^k_{i+1})$, as in Definition \ref{definition.single_location.counter_intervals}. Note that here, however, an interval ends once either a positive or negative request moves up a counter. We then define $\Delta_i(\phi_k(t)) = \phi_k(t^k_{i+1}) - \phi_k(t^k_i)$. Lemma \ref{lemma.bipartite.single_location.yossis_lemma} follows similarly to the proof of Lemma 19 in \cite{Min_Cost_Bipartite_Perfect_Matching_with_Delays} while integrating the notions as defined in the problem at hand (i.e., Concave Single Location MBPMD).

\begin{lemma}
\label{lemma.bipartite.single_location.yossis_lemma}
Let $I^k_i$ denote some phase interval $[t^k_i, t^k_{i+1})$ defined with respect to counter $z_k$. Then, $\int_{t \in I^k_i}z_k'(t)dt + \Delta_i(\phi_k (t)) \leq 4\alpha_k \cdot \int_{t \in I^k_i} o_k(t) dt + 2\mathcal{E}^k_{I^k_i}y_k$.
\end{lemma}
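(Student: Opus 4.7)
The plan is to decompose the left-hand side into the continuous integral $\int_{I^k_i} z_k'(t)\,dt$ and the sum of discrete jumps in $\phi_k$ that make up $\Delta_i(\phi_k)$, and then bound each against the corresponding term on the right. The key observation is that $sur_k(t)$, $sur_k^*(t)$, and $o_k(t)$ are all piecewise constant: they change only at arrivals (possibly with an accompanying OPT upon-arrival match), at ALG's internal matches or promotions, and at OPT level crossings at level $k$. Hence $\phi_k(t)$ and $z_k'(t)=\alpha_k|sur_k(t)|$ are also piecewise constant between events, so the whole LHS is captured by (a) a pointwise integrand bound and (b) a sum of event jumps.

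For the discrete jumps I would case-analyze each event type within $I^k_i$. A plain arrival, or an arrival where OPT immediately matches the new request to some $r'$ with $\delta^*(r')\le k$, changes $sur_k$ and $sur_k^*$ by the same amount $\sigma(r)$, so $\Delta:=sur_k-sur_k^*$ is preserved and $\phi_k$ does not jump; ALG's internal matches at levels $\le k$ and promotions at levels $<k$ leave both $sur_k$ and $sur_k^*$ fixed. The only events that shift $\Delta$ by $\pm 1$, and hence change $\phi_k=2y_k|\Delta|$ by at most $2y_k$, are (i) OPT level crossings at level $k$ and (ii) OPT upon-arrival matches to an $r'$ with $\delta^*(r')>k$---exactly the events tallied by $\mathcal{E}^k_{I^k_i}$, with condition (2) at $\delta^*(r')=k$ providing slack. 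Summing the jumps yields $\Delta_i(\phi_k)\le 2y_k\,\mathcal{E}^k_{I^k_i}$.

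For the continuous contribution I would establish a pointwise bound $|sur_k(t)|\le 4\,o_k(t)$ between events by case-analyzing the signs and magnitudes of $sur_k$ and $sur_k^*$. WLOG $sur_k(t)>0$: if $sur_k^*(t)\ge sur_k(t)$ then $o_k(t)\ge|sur_k^*(t)|\ge sur_k(t)$, done with factor one. Otherwise, the triangle inequality $sur_k\le|sur_k^*|+|sur_k-sur_k^*|\le o_k+|sur_k-sur_k^*|$ reduces matters to controlling the discrepancy in terms of $o_k$: an ALG-surplus at level $\le k$ that OPT has already matched away must correspond to OPT-unmatched counterparts still at $\delta^*\le k$ (and thus captured in $o_k$), while OPT contributions that drifted past level $k$ would have been absorbed by prior $\mathcal{E}^k$ events; summing over the two polarities produces the constant $4$. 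Integrating this pointwise inequality over $I^k_i$ and combining with the jump bound yields the lemma.

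The main obstacle is precisely the pointwise bound $|sur_k|\le 4\,o_k$: the jump analysis is a mechanical case check, but relating ALG's levels (which are set by the algorithm's promotion dynamics and can advance faster than $\delta^*$ when $|sur_k|$ is large) to OPT's age-based $\delta^*$-levels, while invoking the bipartite matching constraint to rule out pathological discrepancies, is the delicate step and the place where the slack factor of $4$ is consumed.
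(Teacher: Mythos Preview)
Your decomposition into ``jumps of $\phi_k$'' plus ``a pointwise bound on the integrand'' is natural, but the second half does not go through: the inequality $|sur_k(t)|\le 4\,o_k(t)$ is simply false in general, so the continuous part cannot be handled this way.

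Here is a concrete obstruction. Let a single positive request arrive at time $0$; in $\algbpsla$ it fills $z_1^+$ and is promoted to $z_2^+$ at time $\ell_1$. At some much later time $T>\ell_1$ a single negative request arrives. In $\optim$ the two are matched at time $T$, so for every $t>T$ there are no $\optim$-unmatched requests and $o_1(t)=0$. In $\algbpsla$, however, the negative sits on $z_1^-$ with $sur_1(t)=-1$ for all $t\in(T,T+\ell_1)$, so $z_1'(t)=\alpha_1$ there while $o_1(t)=0$. No constant $C$ makes $|sur_1(t)|\le C\,o_1(t)$ hold on this interval. Your heuristic that ``an ALG-surplus that OPT has already matched away must correspond to OPT-unmatched counterparts still at $\delta^*\le k$'' fails precisely here: OPT's counterpart has already aged past level $1$ and been matched, and the bookkeeping for that was paid through $\mathcal{E}$ events in \emph{earlier} phases, not through $o_1$ now.

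There is also a smaller omission in your jump analysis: the event ``$\algbpsla$ promotes a request from $z_k$ to $z_{k+1}$'' changes $sur_k$ (not $sur_k^*$) by $\pm 1$, so it moves $\Delta=sur_k-sur_k^*$ and hence $\phi_k$; you listed only promotions at levels $<k$ as harmless. This event occurs once per (non-last) phase and is exactly what the paper tracks as its ``type (1)'' change.

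The paper avoids a pointwise bound entirely. It first observes the trivial phase bound $\int_{I_i^k} z_k'(t)\,dt\le 2y_k$ (each of $z_k^+,z_k^-$ can fill at most once per phase). Writing $f_k=sur_k^*-sur_k$, it notes $f_k$ changes at most $\mathcal{E}+1$ times in a non-last phase (the $\mathcal{E}$ OPT-events plus the one ALG promotion), and splits on whether $\Delta_i|f_k|=\mathcal{E}+1$ or not. If equality holds, all $\mathcal{E}+1$ changes of $f_k$ are in the same direction; since the ALG promotion of (say) a positive request increases $f_k$, this forces $f_k\ge 0$ throughout the phase, i.e.\ $sur_k^*(t)\ge sur_k(t)$ whenever $z_k^+$ increases, whence $o_k(t)\ge sur_k^*(t)\ge sur_k(t)>0$ pointwise \emph{in this case only}, giving $y_k\le\int_{I_i^k}\alpha_k o_k(t)\,dt$. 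If instead $\Delta_i|f_k|<\mathcal{E}+1$, parity forces $\Delta_i|f_k|\le\mathcal{E}-1$, so $\Delta_i(\phi_k)\le 2y_k(\mathcal{E}-1)$ and the $2y_k$ from the integral is absorbed into $2y_k\mathcal{E}$. The last phase is handled separately. In short, the potential is not merely tallying jumps; the case split on $\Delta_i|f_k|$ is what substitutes for the pointwise inequality you were hoping for.
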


\begin{proof}
Let $\mathcal{E} = \mathcal{E}^k_{I^k_i}$ and $I^k_i = I_i$. Observe that $f_k(t) = sur^*_k(t) - sur_k(t)$ can change if and only if: (1) $\algbpsla$ moves a request from $z_k$ to $z_{k+1}$, (2) $\delta^*_t(r)$ increased from $k$ to $k+1$ at time $t \in I_i$ or (3) $r$ is matched (upon arrival) at time $t\in I_i$ by $\optim$ to a request $r'$ such that $\delta^*_t(r') \geq k$.

A change of type (1) can only happen once during the phase and changes of types (2) and (3) happen exactly $\mathcal{E}$ times. Therefore, $|\Delta_i(f_k(t))| \leq \mathcal{E}+1$. Note that always, $\Delta_i|f_k(t)| \leq |\Delta_i(f_k(t))|$ and thus overall,

\begin{align}
\label{equation.bipartite.single_location.yossis_lemma_1}
\Delta_i|f_k(t)| \leq |\Delta_i(f_k(t))| \leq \mathcal{E}+1.
\end{align}

We first consider the case that the interval is not last (i.e., $I_i  = [t_i, t_{i+1}) \neq [t_{m-1}, t_m = \infty)$). Assume w.l.o.g. that the phase ends once the algorithm moves a positive request from $z_k$ to $z_{k+1}$ (the second case is proven symmetrically). We consider two cases, either $\Delta_i|f_k(t)| = \mathcal{E}+1$ or not. We first assume that it is the former case. 

\begin{itemize}
    \item $\bm{\Delta_i|f_k(t)| = |\Delta_i(f_k(t))| = \mathcal{E}+1}$: As noted earlier there are only 3 events that may cause $f_k(t)$ to change. By our assumption that the phase ends once the algorithm moves a positive request from $z_k$ to $z_{k+1}$ we are guaranteed that the event of type (1) causes $f_k(t)$ to increase by 1. Since there are at most $\mathcal{E}+1$ events that may cause $f_k(t)$ to change and due to the fact that $|\Delta_i(f_k(t))| = \mathcal{E}+1$ we are guaranteed that $f_k(t)$ may only increase during the phase and that $\Delta_i(f_k(t)) = \mathcal{E}+1$. The following observation is always true.
    \begin{observation}
    $\Delta_i|f_k(t)| = |\Delta_i(f_k(t))| \Rightarrow sign(f_k(t_i)) = sign(\Delta_i(f_k(t)))$.
    \end{observation}
    Therefore, $f_k(t_i) \geq 0$ and throughout the phase we have that $sur_k^*(t) \geq sur_k(t)$. Thus, whenever $z_k^+(t)$ increases, we are guaranteed that $sur_k^*(t) \geq sur_k(t) > 0$. Note that always $o_k(t) \geq sur^*_k(t)$ and that whenever $z_k^+$ increases, it does so at a rate of $\alpha_k \cdot sur_k(t)$. Therefore, whenever $z_k^+$ increases we are guaranteed that,
    \begin{align*}
    o_k(t) \geq sur^*_k(t) \geq sur_k(t) >0.
    \end{align*}
    
    Recall that the phase ended once a positive request left the counter. Therefore, the overall increase in $z_k^+$ is exactly $y_k$. Thus, if we let $\mathcal{J}$ denote all times for which $z_k^+$ increased, then,
    \begin{align}
    \label{equation.bipartite.single_location.yossis_lemma_2}
    y_k =
    \int_{t \in \mathcal{J}} \alpha_k sur_k(t)dt \leq
    \int_{t \in \mathcal{J}} \alpha_k o_k(t)dt \leq 
    \int_{t \in \mathcal{J}} \alpha_k o_k(t)dt,
    \end{align}
    where the last inequality follows from the positivity of $\alpha_k o_k(t)$.

    Since $\Delta_i(\phi_k) = 2y_k\Delta_i(|f_k(t)|)$, we are guaranteed that,
    \begin{align*}
    \int_{t \in I_i}z_k'(t)dt + \Delta_i(\phi_k) \leq
    2y_k + \Delta_i(\phi_k) = 
    2y_k + 2y_k(\mathcal{E}+1) \leq 
    4\int_{t \in I_i}\alpha_k o_k(t)dt + 2y_k \mathcal{E},
    \end{align*}
    where the first inequality follows due to the fact that during a phase $z_k'(t)$ may increase by at most $2y_k$ ($y_k$ for $z_k^+$ and at most $y_k$ for $z_k^-$) and the second inequality follows from equation (\ref{equation.bipartite.single_location.yossis_lemma_2}).
    
    \item $\bm{\Delta_i|f_k(t)| < \mathcal{E}+1}$: Due to the fact that $f_k(t)$ changes value exactly $\mathcal{E}+1$ times and therefore changes parity $\mathcal{E}+1$ times, we are guaranteed that in fact $\Delta_i|f_k(t)| < \mathcal{E}$. Therefore, since $\Delta_i(\phi_k) = 2y_k\Delta_i(|f_k(t)|)$,
    
    \begin{align*}
    \int_{t \in I_i}z_k'(t)dt + \Delta_i(\phi_k) \leq 
    2y_k + \Delta_i(\phi_k) \leq 
    2y_k + 2y_k(\mathcal{E}-1) \leq 
    4\int_{t \in I_i}\alpha_k o_k(t)dt + 2 y_k \mathcal{E},
    \end{align*}
    where the last inequality follows simply from $\alpha_k o_k(t)$'s positivity.
\end{itemize}

\noindent Now that we have proven the lemma for any phase other than the last, we consider the case that $I_{m-1} = [t_{m-1}, t_m = \infty)$. At the end of the phase we have that $sur^*_k(t_m) = sur_k(t_m)$ and therefore, $\Delta_{m-1}(\phi_k) \leq 0$. If $\mathcal{E} \geq 1$ then we are guaranteed that,
\begin{align*}
\int_{t \in I_{m-1}}z_k'(t)dt + \Delta_{m-1}(\phi_k) \leq 2y_k \leq 2y_k\mathcal{E} \leq 4\int_{t \in I_{m-1}}\alpha_k o_k(t)dt + 2 y_k \mathcal{E},
\end{align*}
where the last inequality follows simply from $\alpha_k o_k(t)$'s positivity.

If $\mathcal{E} = 0$ then $sur^*_k(t)$ remains constant throughout the phase and therefore $\phi_k(t) = 0$ throughout the phase. Therefore, $sur^*_k(t) = sur_k(t)$ whenever $z_k^+$ or $z_k^-$ increase. Furthermore, each of the requests counted by $o_k(t)$ incurs a momentary delay of at least $\alpha_k$. Since $|sur^*_k(t)| \leq o_k(t)$ and $z_k'(t) = \alpha_k|sur_k(t)|$ we are guaranteed that in fact $\int_{t \in I_{m-1}}z_k'(t)dt \leq \int_{t \in I_{m-1}}\alpha_k o_k(t)dt$. Therefore,
\begin{align*}
\int_{t \in I_{m-1}}z_k'(t)dt + \Delta_{m-1}(\phi_k) = \int_{t \in I_{m-1}}z_k'(t)dt \leq \int_{t \in I_{m-1}}\alpha_k o_k(t)dt \leq 4\int_{t \in I_{m-1}}\alpha_k o_k(t)dt + 2 y_k \mathcal{E},
\end{align*}
where the last inequality follows from the positivity of $\alpha_k o_k(t)$ and since $\mathcal{E} = 0$. This proves the lemma for the last phase.
\end{proof}

\noindent We are now ready to prove Proposition \ref{proposition.bipartite.single_location.bound_counters_by_opt}.

\begin{proof}[Proof of Proposition \ref{proposition.bipartite.single_location.bound_counters_by_opt}]
By Lemma \ref{lemma.bipartite.single_location.yossis_lemma}, due to the fact that $\phi = 0$ at the beginning and end of the instance, if we sum over all phases of counter $z_k$ we get that,
\begin{align}
\label{equation.bipartite.single_location.yossis_lemma.1}
\int_{t}z_k'(t)dt  \leq 4\int_{t}\alpha_k o_k(t)dt + 2 y_k \mathcal{E}_k,
\end{align}
where $\mathcal{E}_k = \sum_{i}\mathcal{E}^k_{I_i}$. In fact, $\mathcal{E}_k$ is the set of all requests with $\rho(r) \geq \sum_{i=1}^k y_i$ (where $\rho(r)$ is defined as in the proof of Proposition \ref{proposition.single_location.bound_counters_by_OPT}). We first argue that $\sum_{k} y_k \mathcal{E}_k \leq 2 \optim$. Indeed, by changing the order of summation,
\begin{align}
\label{equation.bipartite.single_location.yossis_lemma.2}
\sum_{k} y_k \mathcal{E}_k &= 
\sum_{k} y_k \sum_r \mathbbm{1}\{\rho(r) \geq \sum_{i=1}^k y_i\} \nonumber \\ &=
\sum_r \sum_{k} y_k \mathbbm{1}\{\rho(r) \geq \sum_{i=1}^k y_i\} \leq
\sum_r \rho(r) \leq 
2 \cdot \optim.
\end{align}

Next we argue that $\sum_{k} \int_t \alpha_k o_k(t)dt \leq 2 \optim$. Recall that $\delta^*_t(r)$ such that request $r$ incurs a delay of $\alpha_{\delta^*_t(r)}$ by $\optim$'s matching at time $t$. By changing the order of summation and relying on the fact that $\alpha_k$ decrease exponentially we are guaranteed that,
\begin{align}
\label{equation.bipartite.single_location.yossis_lemma.3}
\sum_{k} \int_t \alpha_k o_k(t)dt &= 
\sum_{k} \int_t \alpha_k \sum_r \mathbbm{1}\{\delta^*_t(r) \leq k\} =
\int_t \sum_r \sum_{k} \alpha_k \mathbbm{1}\{\delta^*_t(r) \leq k\} \nonumber \\&=
\int_t \sum_r \sum_{k \geq \delta^*_t(r)} \alpha_k \leq
\int_t \sum_r 2\alpha_{\delta^*_t(r)} =
2 \cdot \optim.
\end{align}

\noindent Combining the above yields,
\begin{align*}
\sum_k \int_t z_k'(t)dt \leq 
4 \sum_k \int_{t}\alpha_k o_k(t)dt + 2 \sum_k y_k \mathcal{E}_k \leq
12 \cdot \optim.
\end{align*}
\end{proof}

\noindent We are now ready to prove Theorem \ref{theorem.bipartite.single_location.BSLA_is_constant_competitive}.

\begin{proof}[Proof of Theorem \ref{theorem.bipartite.single_location.BSLA_is_constant_competitive}]
Combining Propositions \ref{proposition.bipartite.single_location.bound_BPSLA_by_counters} and \ref{proposition.bipartite.single_location.bound_counters_by_opt} yields the theorem.
\end{proof}

\section{Bipartite Matching on a Metric}
\label{section.bipartite.metric}
In this section we consider the online min-cost perfect bipartite matching with concave delays problem given a general metric (Concave MBPMD). Recall that in this problem the cost of a matching is comprised of both a connection cost and a delay cost. We solve this problem by first reducing it to metrics based on weighted $\sigma$-HSTs and then define an algorithm for such a case.

As in Subsection \ref{subsection.metric.reduction_to_HSTs}, in the bichromatic case proving a algorithm is $(O(1), O(h))$-competitive on metrics defined by weighted 2-HSTs with height $h$, yields a randomized algorithm that is $O(\log n)$ for the Concave MBPMD problem on general metric. The rest of this section is devoted to proving the following theorem.

\begin{theorem}
\label{theorem.bipartite.metric.BPMA_is_O(logn)}
$\algbpma$ is $O(\log n)$-competitive for any metric and any concave delay function.
\end{theorem}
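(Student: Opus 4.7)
The plan is to prove Theorem \ref{theorem.bipartite.metric.BPMA_is_O(logn)} by following the same two-stage roadmap used in Section \ref{section.metric}: first reduce the general metric to a weighted 2-HST, then establish an $(O(1),O(h))$-competitive guarantee on HSTs. The reduction is immediate: Lemma \ref{lemma.embed_metric_to_hst} embeds any $n$-point metric into a distribution over weighted 2-HSTs of height $h=O(\log n)$ with distortion $O(\log n)$, and Lemma \ref{lemma.transfer_competitiveness_from_hst_to_general_metric} (whose proof goes through unchanged for the bichromatic case and for concave delays) converts an $(O(1),O(h))$-competitive HST algorithm into an $O(\log n)$-competitive algorithm on general metrics. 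Hence the entire technical work consists of proving that $\algbpma$ is $(O(1),O(h))$-competitive on any weighted 2-HST.

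To design and analyze $\algbpma$ on a 2-HST, I would reuse the DAG of counters $F(T,D)$ defined in Subsection \ref{subsection.metric.matching_on_weighted_HSTs}, but replace every counter $z$ (be it an edge counter or a delay counter) by a pair $z^{+}, z^{-}$ with common capacity and common nominal slope $\alpha_z$. For each node $z\in V(F)$ define $P_z(t), N_z(t)$ as the number of positive/negative unmatched requests currently residing at $z^{+}, z^{-}$, and define the tree-surplus $\mathrm{sur}_z(t)=\sum_{z'\in F_z}(P_{z'}(t)-N_{z'}(t))$, summing over the subtree rooted at $z$ in $F$. Then $z^{+}$ (resp.\ $z^{-}$) is increased at rate $\alpha_z|\mathrm{sur}_z(t)|$ iff it contains a request and $\mathrm{sur}_z(t)>0$ (resp.\ $<0$); whenever a counter reaches capacity, a single request of the corresponding polarity moves up to its parent in $F$ and both $z^{+}, z^{-}$ are reset. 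Matchings occur whenever opposite-polarity requests share a node.

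The upper bound $\algbpma\le O(1)\cdot \sum_{\hat z}\int_t \hat z'(t)\,dt$ follows the template of Proposition \ref{proposition.metric.bound_algorithm_by_counters}: the connection cost is bounded by the counters that must empty before any request can traverse an edge (analog of equation (\ref{inequality.metric_connection_cost_bound})), the delay cost is bounded via Lemma \ref{lemma.single_location.real_delay_bounded_by_aux} as before, and the auxiliary $\sum_r s_r(t)$ term is bounded by counter increase by combining the subtree-descendant injection argument of Lemma \ref{lemma.metric.non_real_counter_bounded_by_actual_counter} with the charging scheme $f_t$ of Lemma \ref{lemma.bipartite.single_location.bound_srt_by_increase_in_counters} applied polarity-by-polarity within each subtree $F_{\hat z}$. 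For the lower bound I would charge $\sum_{\hat z}\int_t \hat z'(t)\,dt$ against an arbitrary solution $\algsol$ using a hybrid of two techniques. For each node $\hat z\in V(F)$, define the discrepancy potential $\phi_{\hat z}(t)=2y_{\hat z}\bigl|\mathrm{sur}_{\hat z}(t)-\mathrm{sur}_{\hat z}^{*}(t)\bigr|$, where $\mathrm{sur}_{\hat z}^{*}(t)$ is the corresponding surplus of unmatched $\algsol$-requests that were given to $F_{\hat z}$ upon arrival and whose delay level is at least $k(\hat z)$. A local lemma analogous to Lemma \ref{lemma.bipartite.single_location.yossis_lemma} then shows, phase by phase, that $\int_{I}\hat z'(t)\,dt+\Delta(\phi_{\hat z})\le 4\alpha_{\hat z}\int_I o_{\hat z}(t)\,dt+2y_{\hat z}\mathcal E^{\hat z}_{I}$, where $\mathcal E^{\hat z}_{I}$ counts events that change the surplus difference, namely (i) a request in $F_{\hat z}$ whose $\algsol$-delay level crosses $k(\hat z)$ inside $I$, or (ii) a request in $F_{\hat z}$ that is matched by $\algsol$ across an edge $e$ with $z_e$ an ancestor of $\hat z$. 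Summing telescopically over phases and then over all $\hat z\in V(F)$ exactly as in Proposition \ref{proposition.bipartite.single_location.bound_counters_by_opt} charges the $o_{\hat z}$ terms to $O(h)\cdot \algsol_d$ (the factor $h$ arising because every $\algsol$-request belongs to $F_{\hat z}$ for at most $h$ edge-counters on its root path, while delay-counter levels still sum geometrically), and charges the $\mathcal E^{\hat z}_I$ terms against $O(1)\cdot\algsol_c+O(1)\cdot\algsol_d$ via the same ancestor/level counting used to bound the sets $A_{\hat z}, B_{\hat z}$ in the proof of Proposition \ref{proposition.metric.bound_counters_by_OPT}.

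The main obstacle, and where care will be required, is establishing the local phase inequality for $\phi_{\hat z}$ \emph{in the tree setting}, because both the algorithm's surplus and $\algsol$'s surplus at $\hat z$ now change not only through arrivals and $\algsol$-matchings but also through requests moving up from children of $\hat z$ in $F$; one must verify that the potential function is well defined across these subtree events and that the analog of the case analysis in Lemma \ref{lemma.bipartite.single_location.yossis_lemma} still yields the clean $4\alpha_{\hat z}\int o_{\hat z}+2y_{\hat z}\mathcal E$ bound with at most constant loss per level of $F$. Once that local lemma is in place, the global sum, combined with the two propositions above, gives $\algbpma\le O(1)\cdot\algsol_c+O(h)\cdot\algsol_d$ on any 2-HST, and the $O(\log n)$ bound on general metrics follows immediately from Lemmas \ref{lemma.embed_metric_to_hst} and \ref{lemma.transfer_competitiveness_from_hst_to_general_metric}.
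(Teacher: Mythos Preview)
Your proposal is correct and follows essentially the same approach as the paper: reduce to 2-HSTs via Lemmas~\ref{lemma.embed_metric_to_hst} and~\ref{lemma.transfer_competitiveness_from_hst_to_general_metric}, then prove $(O(1),O(h))$-competitiveness on HSTs by combining the subtree-based surplus counters, the charging scheme of Lemma~\ref{lemma.bipartite.metric.bound_srt_by_increase_in_counters}, and the discrepancy-potential phase lemma (Lemma~\ref{lemma.bipartite.metric.yossis_lemma}). Your stated ``main obstacle'' is in fact a non-issue: because $\mathrm{sur}_{\hat z}(t)$ and $\mathrm{sur}^*_{\hat z}(t)$ are defined over the entire subtree $F_{\hat z}$, a request moving from a child of $\hat z$ up to $\hat z$ stays within $F_{\hat z}$ and does not change either surplus, so the case analysis of Lemma~\ref{lemma.bipartite.single_location.yossis_lemma} carries over verbatim with only the additional event type (matching across an ancestor edge) folded into $\mathcal{E}^{\hat z}_I$.
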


\subsection{Bipartite Matching on Weighted $\sigma$-HSTs}

Let $T = ( V(T), E(T), w)$ denote an arbitrary weighted $2$-HST defined with respect to a general metric $G = (V(G), E(G), w)$. We denote its height (i.e., the largest distance between $T$’s root and one of its leaves) by $h$. Recall that $h = O(\log n)$ where $n$ denotes the size of the original metric, $G$.

In order to define our algorithm in this case we make use of the DAG of counters, $F$, as defined in Subsection \ref{subsection.metric.matching_on_weighted_HSTs}. We alter $F$ to include 2 counters instead of each original counter - given a counter $z \in V(F)$ we define $z^+, z^-$ such that their capacities and slopes are defined as $y_{z^+} = y_{z^-} = y_z$ and $\alpha_{z^+} = \alpha_{z^-} = \alpha_z$. 

Even though we split every counter into two separate counters, we may sometimes refer to them as a single counter when clear from context. Thus, for example, when refering to the set of descendant counters of a given counter $z^+$ we in fact refer to all positive and negative counters that were created from a counter $z'$ that is a descendant of $z$ in $F$. 

Our algorithm, denoted by Bipartite-Metric-Algorithm ($\algbpma$) is defined as follows: requests will be associated with counters upon arrival (positive requests with positive counters and negative requests with negative counters). Then, over time, the requests will increase the levels of their counters. Once full, a single request from the counter will move to the next counter as defined by $F$ (i.e., the parent counter) and both the (former) positive and negative counters will be reset to 0. Furthermore, at any point in time if corresponding positive and negative counters both contain requests, match two of them (breaking ties arbitrarily). In order to define the rate at which we increase the counters we need the following definitions. Recall that given a counter $z\in V(F)$ we defined $F_z$ to be the set of all counters that are descendants of $z$ in $F$ (note that $z \in F_z$).

\begin{definition}
Given a counter $z^+$ or $z^-$ we define $F^b_z = \{\tilde{z}^+, \tilde{z}^-: \tilde{z}\in F_z\}$.
\end{definition}

\noindent To ease our notation we will henceforth denote $F^b_z$ simply as $F_z$.

\begin{definition}
Given counters $z^+$ and $z^-$ we denote by $P_z(t)$ (resp. $N_z(t)$) the number of positive (resp. negative) requests associated with counter $z^+$ (resp. $z^-$) at time $t$. Furthermore, we denote by $P_{F_z}(t)$ (resp. $N_{F_z}(t)$) the number of positive (resp. negative) requests associated with any counter in  $F^+_z$ (resp. $F^-_z$) at time $t$. Finally, we define the positive surplus of these counters as $sur_z(t) = P_{F_z}(t) - N_{F_z}(t)$.
\end{definition}

Now, to complete the definition of $\algbpma$, we define it such that a counter $z^+$ (resp. $z^-$) increases at time $t$ if and only if there exists a request associated with the counter and $sur_z(t) > 0$ (resp. $sur_z(t) < 0$). Furthermore, we increase it at a rate of $\alpha_z |sur_z(t)|$. Note that we increase all counters simultaneously. We formally define $\algbpma$ in Algorithm \ref{algorithm.bipartite.metric.BPMA} (deferred to the Appendix).

\begin{theorem}
\label{theorem.bipartite.metric.BPMA_is_O(1)_O(h)}
$\algbpma$ is $(O(1), O(h))$-competitive for any weighted 2-HST, $T$ with height $h$ and any concave delay function, $D$.
\end{theorem}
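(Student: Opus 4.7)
The plan is to follow the two-phase strategy of Sections \ref{section.single_location}--\ref{section.bipartite.single_location}: first upper bound $\algbpma$'s cost by the total increase in counters of $F$, and then upper bound that total counter increase by $O(1)\cdot \algsol_c + O(h)\cdot \algsol_d$ for an arbitrary solution $\algsol$. Throughout, let $\hat{z}'(t) = \alpha_{\hat{z}}|sur_{\hat{z}}(t)|$ when $\algbpma$ is increasing $\hat{z}$ at time $t$, and $0$ otherwise.

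For the upper bound, split $\algbpma = \algbpma_c + \algbpma_d$. The connection cost $\algbpma_c=\sum_e n_e w_e$ is dominated by the total edge-counter increase, exactly as in inequality (\ref{inequality.metric_connection_cost_bound}) of Proposition \ref{proposition.metric.bound_algorithm_by_counters}, since an edge is purchased only when its edge counter fills. The delay cost $\algbpma_d$ is handled via the device of Lemma \ref{lemma.single_location.real_delay_bounded_by_aux} (which holds verbatim for $\algbpma$): $d'_r(t)$ is bounded by $s_r(t)$ plus the capacities of lower delay counters along $r$'s path, and the latter is controlled by the total counter increase as in (\ref{inequality.metric_counter_emptying_bounded_by_actual_counter_increase}). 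The remaining ingredient is a moment-wise bound $\sum_r s_r(t) \le O(1)\cdot \sum_{\hat{z}} \hat{z}'(t)$ that merges the DAG-charging from Lemma \ref{lemma.metric.non_real_counter_bounded_by_actual_counter} with the bipartite charging from Lemma \ref{lemma.bipartite.single_location.bound_srt_by_increase_in_counters}. Concretely, at time $t$ I would traverse $F$ bottom-up and, for each $\hat{z}$ carrying unmatched requests of some polarity, greedily inject them (via $f_t$) into unmatched opposite-polarity requests lying on counters strictly in $F_{\hat{z}}\setminus\{\hat{z}\}$; the paired requests are paid for by their descendants' slopes (which upper bound $s_r(t)$), and the leftover imbalance at $\hat{z}$ is exactly $|sur_{\hat{z}}(t)|$ requests of the dominant polarity, contributing precisely $\alpha_{\hat{z}}|sur_{\hat{z}}(t)|=\hat{z}'(t)$.

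For the lower bound, introduce, for each $\hat{z}\in V(F)$, the weighted discrepancy potential $\phi_{\hat{z}}(t) = 2 y_{\hat{z}}\,|sur_{\hat{z}}(t) - sur^*_{\hat{z}}(t)|$, where $sur^*_{\hat{z}}(t)$ is the signed surplus of $\algsol$-unmatched requests that were placed into $F_{\hat{z}}$ on arrival and whose delay level has already reached $k(\hat{z})$. Partition time with respect to $\hat{z}$ into phases in the style of Definition \ref{definition.single_location.counter_intervals} (adapted via the arrival-based intervals of Section \ref{section.metric}). The per-phase inequality I would prove is the $F$-local analogue of Lemma \ref{lemma.bipartite.single_location.yossis_lemma}:
\begin{equation*}
\int_{I^{\hat{z}}_i} \hat{z}'(t)\,dt + \Delta_i(\phi_{\hat{z}}) \;\le\; 4\alpha_{\hat{z}}\int_{I^{\hat{z}}_i} o_{\hat{z}}(t)\,dt + 2 y_{\hat{z}}\, \mathcal{E}^{\hat{z}}_{I^{\hat{z}}_i},
\end{equation*}
where $o_{\hat{z}}(t)$ counts the $\algsol$-unmatched requests in $F_{\hat{z}}$ of delay level at most $k(\hat{z})$, and $\mathcal{E}^{\hat{z}}_{I}$ counts requests in $F_{\hat{z}}$ whose delay level (or whose $\algsol$-partner's) crosses $k(\hat{z})$ during $I$, plus requests matched by $\algsol$ along an edge $e$ whose edge counter $z_e$ is an ancestor of $\hat{z}$ in $F$ (the latter type is what terminates their contribution to $sur^*_{\hat{z}}$). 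The proof is the same three-case analysis used in Lemma \ref{lemma.bipartite.single_location.yossis_lemma}, applied locally to $\hat{z}$. Telescoping over phases of $\hat{z}$ and summing over $\hat{z}\in V(F)$, the $\alpha_{\hat{z}} o_{\hat{z}}$ terms aggregate to $O(h)\cdot\algsol_d$ by the exponentially decreasing delay-counter slopes and the fact that any root-to-leaf path in $F$ carries at most $h$ edge counters, mirroring the $C_{\hat{z}}$ argument in Proposition \ref{proposition.metric.bound_counters_by_OPT}; the $y_{\hat{z}}\mathcal{E}^{\hat{z}}$ terms of the first two event types charge to $\algsol_d$ as in (\ref{equation.bipartite.single_location.yossis_lemma.2}), while those of the ancestor-edge type charge to $\algsol_c$ via the same $A_{\hat{z}}$/$B_{\hat{z}}$ HST argument used in Proposition \ref{proposition.metric.bound_counters_by_OPT}.

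The main obstacle is the bookkeeping needed to define $sur^*_{\hat{z}}$ and $\mathcal{E}^{\hat{z}}_I$ consistently across all counters of $F$ simultaneously: when $\algsol$ matches a pair across a high edge $e$, both endpoints must cease contributing to $sur^*_{\hat{z}}$ for \emph{every} descendant $\hat{z}$ of $z_e$, and each such removal must be absorbable by a single $\mathcal{E}^{\hat{z}}_I$ event so that the per-phase inequality remains valid. Once the parity change of $sur^*_{\hat{z}}(t) - sur_{\hat{z}}(t)$ at such events is tracked carefully (keeping $|\Delta(f_{\hat{z}})|\le \mathcal{E}+1$ as in Lemma \ref{lemma.bipartite.single_location.yossis_lemma}), the remainder of the argument is the mechanical composition of the aggregation schemes from Sections \ref{section.metric} and \ref{section.bipartite.single_location}, yielding Theorem \ref{theorem.bipartite.metric.BPMA_is_O(1)_O(h)}.
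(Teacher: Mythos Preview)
Your proposal is correct and follows essentially the same approach as the paper: the upper bound via the bottom-up marking/injection scheme is exactly Lemma \ref{lemma.bipartite.metric.bound_srt_by_increase_in_counters}, and the lower bound via the per-counter discrepancy potential $\phi_{\hat{z}}$ with events of type (delay-level crossing) and (ancestor-edge match) is exactly Lemma \ref{lemma.bipartite.metric.yossis_lemma} and Proposition \ref{proposition.bipartite.metric.bound_counters_by_OPT}. The only cosmetic difference is that the paper aggregates $\sum_{\hat{z}} y_{\hat{z}}\mathcal{E}_{\hat{z}}$ by first splitting counters into delay counters $Z_d$ and edge counters $Z_e$ (rather than splitting by event type as you suggest), but the resulting bounds are identical; your worry about ``both endpoints ceasing to contribute'' is also slightly overstated, since when $\algsol$ matches across an ancestor edge of $\hat{z}$ only one endpoint lies in $F_{\hat{z}}$ and hence only one $\pm 1$ change to $sur^*_{\hat{z}}$ needs to be absorbed.
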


In order to prove our theorem we follow the proof presented for the single location case: we first upper bound the algorithm's cost by the overall increase in its counters and then upper bound this by the connection and delay costs of any (arbitrary) solution. 

\begin{definition}
Given a positive counter $z = z^+$ let $z'(t)$ denote the rate of increase of $z$ at time $t$. Hence, $z'(t) = \alpha_{z} |sur_k(t)|$ if there exists a positive request on $z$ and $sur_k(t) > 0$. Otherwise, $z'(t) = 0$. Define $z'(t)$ symmetrically for a negative counter, $z = z^-$.
\end{definition}

\begin{definition}
Given a request $r$ let $z_r'(t) = \alpha$ if $r$ causes counter $\hat{z}$ with slope $\alpha$ to increase at time $t$. Hence, $z_r'(t) = \alpha$ if $r$ is associated with counter $\hat{z}$ by $\algbpma$ at time $t$, $\hat{z}$ has slope $\alpha$ and $\hat{z}$ increases at time $t$. Otherwise, $z_r'(t) = 0$. 
\end{definition}

\subsubsection{Upper Bounding $\algbpma$'s Cost}

In this section we bound $\algbpma$'s cost (connection and delay) by the overall increase in its counters.

\begin{proposition}
\label{proposition.bipartite.metric.bound_algorithm_by_counters}
$\algbpma \leq 4 \sum_z \int_t z'(t)dt$.
\end{proposition}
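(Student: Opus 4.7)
The plan is to mirror the monochromatic metric proof of Proposition~\ref{proposition.metric.bound_algorithm_by_counters}, splitting $\algbpma = \algbpma_c + \algbpma_d$ and bounding each component against the global counter increase $\sum_z \int_t z'(t)\,dt$, while weaving in the bipartite surplus machinery introduced in Section~\ref{section.bipartite.single_location}. For the connection cost: every use of edge $e$ by the algorithm corresponds to some request being promoted through $z_e^+$ or $z_e^-$, which happens precisely when that counter reaches its capacity $w_e$ and is reset alongside its sibling. Summing across edges and polarities gives $\algbpma_c \le \sum_e \int_t [(z_e^+)'(t) + (z_e^-)'(t)]\,dt$, itself a sub-sum of $\sum_z \int_t z'(t)\,dt$.

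For the delay cost, I would observe that Lemma~\ref{lemma.single_location.real_delay_bounded_by_aux} applies verbatim to $\algbpma$: its proof merely compares a request's true delay to the slope of the counter it currently inhabits, which is unaffected by either polarity or tree structure. This yields
\[
\algbpma_d \;\le\; \sum_r \sum_{k=1}^{k_{m(r)}(r)-1} y_k \;+\; \sum_r \int_t s_r(t)\,dt.
\]
Each delay counter $z_k$ that a request traverses is filled to capacity $y_k$ on the appropriate polarity side before the request is promoted, so the first term is bounded by $\sum_{z \in \text{delay}} \int_t z'(t)\,dt$, once again a sub-sum of $\sum_z \int_t z'(t)\,dt$.

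The main obstacle is bounding the remaining term $\sum_r \int_t s_r(t)\,dt$ by a constant multiple of $\sum_z \int_t z'(t)\,dt$. This step fuses both complications seen earlier: a request's own counter may fail to increase because descendants in $F$ carry the imbalance (the metric difficulty underlying Lemma~\ref{lemma.metric.non_real_counter_bounded_by_actual_counter}), while each counter that does increase does so at rate $\alpha_z |sur_z(t)|$, an aggregate of surplus scattered throughout $F_z$ (the bipartite difficulty underlying Lemma~\ref{lemma.bipartite.single_location.bound_srt_by_increase_in_counters}). My plan is to run, at each fixed time $t$, a leaves-to-root sweep of $F$: at each counter $z$ (whose requests are all of a single polarity by the matching rule), pair as many of its requests as possible with still-unmarked opposite-polarity requests lying in strictly descendant counters of $F_z$, marking both sides. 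Two facts close the argument. First, slopes weakly increase along any root-to-leaf path of $F$: a delay counter strictly below $z$ must have strictly smaller index and hence strictly larger $\alpha$ by Observation~\ref{observation.metric.properties_of_F}, while an edge counter below $z$ inherits the slope of its first delay-counter ancestor, which is at least that of $z$. Hence every paired request maps to a counter with slope at least its own, giving $\sum_{r \in \mathcal{D}_t} s_r(t) \le \sum_{r \in f_t(\mathcal{D}_t)} s_r(t)$. Second, the bottom-up sweep maximally cancels opposite polarities within every subtree, and a level-by-level accounting mirroring the single-location proof will deliver $\sum_{r \notin \mathcal{D}_t} s_r(t) \le \sum_z z'(t)$.

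The technically most delicate part is this second fact: one must show that after the sweep the unmarked requests can be distributed across counters of $F$ so that, at each counter $z$, the unmarked majority-polarity residue combined with $\alpha_z$ is absorbed by $\alpha_z |sur_z(t)| = z'(t)$. Once this bipartite-metric analogue of the counting step in Lemma~\ref{lemma.bipartite.single_location.bound_srt_by_increase_in_counters} is in hand, combining the three inequalities yields $\algbpma \le 4 \sum_z \int_t z'(t)\,dt$, completing the proposition.
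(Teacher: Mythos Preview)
Your proposal is correct and follows essentially the same route as the paper. The paper proves precisely the bipartite-metric analogue you anticipate as Lemma~\ref{lemma.bipartite.metric.bound_srt_by_increase_in_counters}, via the same bottom-up sweep over $F$ pairing each counter's requests with unmarked opposite-polarity requests in $F_z$, and then combines it with Observation~\ref{observation.bipartite.metric.bound_bpma_connection_cost_by_increase_in_counters}, Lemma~\ref{lemma.single_location.real_delay_bounded_by_aux}, and the delay-counter depletion bound exactly as you outline.
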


We define $\delta_t(r), k_t(r), d_r'(t)$ and $s_r(t)$ as defined in Subsection \ref{subsection.metric.upper_bounding_MA_cost}. 
Let $\algbpma_d$ and $\algbpma_c$ denote the delay and connection costs of our algorithm. We first observe that due to the definition of our algorithm, its connection cost is upper bounded by the total increase in counters.

\begin{observation}
\label{observation.bipartite.metric.bound_bpma_connection_cost_by_increase_in_counters}
$\algbpma_c \leq \sum_z \int_t z'(t)dt$.
\end{observation}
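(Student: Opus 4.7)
The plan is to chase the definitions and show that every unit of connection cost paid by $\algbpma$ has already been ``paid for'' by an equal amount of counter increase on an edge counter. Concretely, let $n_e$ denote the number of times edge $e \in E(T)$ is traversed by some request during the execution of $\algbpma$; since connection cost is only incurred when a request is routed through an edge of $T$, we have $\algbpma_c = \sum_{e \in E(T)} n_e w_e$. My first step is to translate a traversal of $e$ into a filling event on the corresponding edge counters: by the definition of $\algbpma$, a request moves from a child to the parent in $F$ only when the associated counter reaches its capacity, so every traversal of $e$ is witnessed by a full filling of either $z_e^+$ or $z_e^-$ (after which both are reset to $0$).

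Next I would let $n_e^+$ (resp.\ $n_e^-$) be the number of fillings of $z_e^+$ (resp.\ $z_e^-$), so $n_e = n_e^+ + n_e^-$. Each filling of $z_e^+$ accumulates exactly $y_{z_e^+} = w_e$ units of increase in $z_e^+$, and likewise for $z_e^-$, so
\[
\int_t (z_e^+)'(t)\,dt \;\geq\; n_e^+ w_e, \qquad \int_t (z_e^-)'(t)\,dt \;\geq\; n_e^- w_e.
\]
Note that the inequalities need not be tight: when one of the two counters fills, the partial increase stored on its sibling is discarded by the reset, but that only makes the left-hand sides larger. Adding these two bounds and summing over $e$ gives
\[
\sum_{e \in E(T)} \int_t \bigl((z_e^+)'(t) + (z_e^-)'(t)\bigr)\,dt \;\geq\; \sum_{e \in E(T)} n_e w_e \;=\; \algbpma_c.
\]

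Finally, since the remaining counters in $F$ (the delay counters $z_k^+,z_k^-$) also have non-negative derivatives, their total contribution to $\sum_z \int_t z'(t)\,dt$ is non-negative, so the above already yields $\algbpma_c \leq \sum_z \int_t z'(t)\,dt$. There is no real obstacle here beyond bookkeeping; the only mildly delicate point is the reset behavior (that emptying $z_e^+$ also zeroes $z_e^-$ and vice versa), but as noted this only discards counter increase and hence preserves the direction of the inequality we need.
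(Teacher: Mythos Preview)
Your proof is correct and matches the paper's intended reasoning. The paper itself treats this as an observation with no detailed proof (``due to the definition of our algorithm, its connection cost is upper bounded by the total increase in counters''), but your argument is exactly the bichromatic analogue of equation~(\ref{inequality.metric_connection_cost_bound}) from the monochromatic metric section: each traversal of an edge $e$ forces a full filling of $z_e^+$ or $z_e^-$ worth $w_e$, and the simultaneous reset of the sibling counter only discards additional accumulated increase, preserving the inequality.
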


\noindent In order to prove Proposition \ref{proposition.bipartite.metric.bound_algorithm_by_counters} we introduce the following lemmas.

\begin{lemma}
\label{lemma.bipartite.metric.real_delay_bounded_by_aux}
$\sum_r \int_t d_r'(t)dt \leq \sum_r  \sum_{k = 1}^{k_{m(r)}(r)-1} y_k  + \sum_r  \int_t s_r(t)dt$.
\end{lemma}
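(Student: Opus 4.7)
The statement is formally identical to Lemma \ref{lemma.single_location.real_delay_bounded_by_aux}, so the plan is to reuse that proof essentially verbatim at the level of a single request and then sum over all requests. Specifically, I will establish the per-request inequality
\[
\int_t d_r'(t)\,dt \;\le\; \sum_{k=1}^{k_{m(r)}(r)-1} y_k \;+\; \int_t s_r(t)\,dt,
\]
from which the lemma follows immediately by summation over $r$.

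For each request $r$ I would split on whether $k_{m(r)}(r) > \delta_{m(r)}(r)$ or not, following the template of the proof of Lemma \ref{lemma.single_location.real_delay_bounded_by_aux}. In the first case $r$ is matched before the delay function reaches its $k_{m(r)}(r)$-th piece, so piece-wise linearity of $f$ already gives $\int_t d_r'(t)\,dt \le \sum_{k=1}^{k_{m(r)}(r)-1} y_k$, and the second term is not needed. In the second case I would decompose $\int_t d_r'(t)\,dt$ at the moment $r$'s elapsed time first enters piece $k_{m(r)}(r)$: the early portion contributes exactly $\sum_{k=1}^{k_{m(r)}(r)-1} y_k$, and for the late portion I would prove the pointwise bound $d_r'(t)\le s_r(t)$. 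That pointwise bound uses two ingredients: once $r$'s elapsed time lies in piece $k_{m(r)}(r)$ or later we have $d_r'(t)=\alpha_{\delta_t(r)} \le \alpha_{k_{m(r)}(r)}$ by the exponentially decreasing slopes, and simultaneously $k_t(r)\le k_{m(r)}(r)$ by monotonicity of counter progression, so $s_r(t)=\alpha_{k_t(r)}\ge \alpha_{k_{m(r)}(r)}$.

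The only step that needs extra care relative to the single-location proof is the monotonicity of $k_t(r)$ in the metric setting, since requests now traverse the DAG $F$ of intertwined edge and delay counters rather than a linear chain of delay counters. This is precisely what Observation \ref{observation.metric.properties_of_F} supplies: every leaf-to-root path in $F$ visits the delay counters in strictly increasing order of index, so whenever $r$ moves upward in $F$ the index of the nearest ancestor delay counter --- which by definition is $k_t(r)$, whether $r$ currently occupies an edge counter or a delay counter --- can only stay the same or strictly increase. None of the bichromatic complications --- the surplus-weighted counter rate $\alpha_{\hat z}|sur_{\hat z}(t)|$, a counter simultaneously holding several same-polarity requests, or the rule that only one request advances when a counter fills --- enters this argument, because $d_r'(t)$ and $s_r(t)$ depend only on which counter $r$ occupies at time $t$, not on that counter's rate of increase or its occupancy. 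Hence the per-request inequality carries over unchanged, and summing over $r$ yields the lemma.
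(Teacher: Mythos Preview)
Your proposal is correct and follows the paper's own proof essentially verbatim: the paper proves the per-request inequality by the same case split on $k_{m(r)}(r)$ versus $\delta_{m(r)}(r)$ and the same moment-wise comparison $d_r'(t)\le \alpha_{k_{m(r)}(r)}\le s_r(t)$ on the late portion. Your explicit appeal to Observation~\ref{observation.metric.properties_of_F} to justify monotonicity of $k_t(r)$ in the metric DAG, and your remark that the bichromatic counter dynamics are irrelevant here, are in fact more careful than the paper, which simply repeats the single-location argument without spelling these points out.
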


\begin{proof}
We argue that in fact, for any request $r$, $\int_t d_r'(t)dt \leq \sum_{k = 1}^{k_{m(r)}(r)-1} y_k  + \int_t s_r(t)dt$.

If $k_{m(r)}(r) > \delta_{m(r)}(r)$ then clearly $\int_t d_r'(t)dt \leq \sum_{k = 1}^{k_{m(r)}(r)-1} y_k$. Otherwise, assume $k_{m(r)}(r) \leq \delta_{m(r)}(r)$. The value $\int_t d_r'(t)dt$ constitutes of delay accumulated up to and including the $k_{m(r)}(r)-1$ linear piece (in the delay function) and delay accumulated during the rest of the time. Hence, 
\[
\int_t d_r'(t)dt = 
\sum_{k = 1}^{k_{m(r)}(r)-1} y_k + \int_{x_{k_{m(r)}(r)}}^{m(r) - a(r)} d_r'(t)dt,
\]
where $a(r)$ and $m(r)$ denote $r$'s arrival and matching times and $x_{k_{m(r)}(r)}$ denotes the time that the $k_{m(r)}(r)$'th linear piece begins in the delay function.

The value $\int_{x_{k_{m(r)}(r)}}^{m(r) - a(r)} d_r'(t)dt$ is upper bounded moment-wise by $\int_t s_r(t)dt$, since the delay accumulated by $s_r(t)dt$ is at least $\alpha_{k_{m(r)}(r)}$ and the delay accumulated by $d_r'(t)dt$ during that time is at most $\alpha_{k_{m(r)}(r)}$. Thus, $\int_t d_r'(t)dt \leq \sum_{k = 1}^{k_{m(r)}(r)-1} y_k  + \int_t s_r(t)dt$. Summing over all requests,
\begin{align}
\sum_r \int_t d_r'(t)dt \leq \sum_r  \sum_{k = 1}^{k_{m(r)}(r)-1} y_k  + \sum_r  \int_t s_r(t)dt.    
\end{align}
\end{proof}

\begin{lemma}
\label{lemma.bipartite.metric.bound_srt_by_increase_in_counters}
$\sum_r \int_t s_r(t)dt \leq 2 \sum_z \int_t z'(t)dt.$
\end{lemma}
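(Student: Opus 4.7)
The plan is to mimic the pointwise charging argument from the single-location case (Lemma~\ref{lemma.bipartite.single_location.bound_srt_by_increase_in_counters}). Fix a time $t$. I would construct a partial matching $f_t:\mathcal{D}_t\to f_t(\mathcal{D}_t)$ between requests of opposite polarities such that (i) every pair $(r,f_t(r))$ satisfies $s_r(t)\le s_{f_t(r)}(t)$, and (ii) $\sum_{r\notin\mathcal{D}_t}s_r(t)\le\sum_{\hat z}\hat z'(t)$. Once (i) and (ii) are in place, exactly as in the single-location argument one obtains $\sum_r s_r(t)\le 2\sum_{r\notin\mathcal{D}_t}s_r(t)\le 2\sum_{\hat z}\hat z'(t)$; integrating over $t$ then gives the lemma.

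To build $f_t$ I would process the counters of $F$ in post-order (every descendant strictly before its ancestor). When processing a counter $\hat z$, I maintain a pool $X_{\hat z}$ of requests in $F_{\hat z}$ that remain unpaired. Form $X_{\mathrm{temp}}:=\bigcup_{c}X_c$ over the children $c$ of $\hat z$; by induction each $X_c$ is monochromatic of size $|sur_c(t)|$, so $X_{\mathrm{temp}}$ carries $p_{\mathrm{temp}}=\sum_c\max(sur_c,0)$ positive and $n_{\mathrm{temp}}=\sum_c\max(-sur_c,0)$ negative requests. Pair opposite polarities inside $X_{\mathrm{temp}}$ maximally, orienting each pair so that the lower-slope endpoint is the domain and the higher-slope endpoint is the image. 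Then append $\hat z$'s own requests, which are of a single polarity since $\min(P_{\hat z},N_{\hat z})=0$ (the algorithm matches colocated opposite polarities on arrival), and pair them with the remaining opposite-polarity part of $X_{\mathrm{temp}}$, with $\hat z$'s requests (slope $\alpha_{\hat z}$, the smallest in the current pool) always playing the role of domain. Whatever remains becomes $X_{\hat z}$. A short induction using $sur_{\hat z}=\sum_c sur_c+P_{\hat z}-N_{\hat z}$ confirms that $|X_{\hat z}|=|sur_{\hat z}(t)|$ and that $X_{\hat z}$ is monochromatic, and property~(i) holds by construction because every pair was oriented by slope.

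For property~(ii), let $U_{\hat z}$ denote the number of requests at $\hat z$ left unpaired when $\hat z$'s processing step finishes. Because requests at $\hat z$ that do get paired during $\hat z$'s step always land as domain (they carry the strictly smallest slope in the pool at that moment), the number of requests at $\hat z$ outside $\mathcal{D}_t$ is at most $U_{\hat z}$. Assuming WLOG that $\hat z$ carries positives, a short case split on the sign of $\sum_c sur_c$ gives $U_{\hat z}\le\max(sur_{\hat z}(t),0)$: if $\sum_c sur_c\ge 0$ then no negatives survive the inner pairing and $U_{\hat z}=P_{\hat z}\le P_{\hat z}+\sum_c sur_c=sur_{\hat z}$; if $\sum_c sur_c<0$ then $U_{\hat z}=\max(0,P_{\hat z}+\sum_c sur_c)=\max(0,sur_{\hat z})$. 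The symmetric bound holds when $\hat z$ carries negatives. Summing over all counters, $\alpha_{\hat z}U_{\hat z}$ contributes $\alpha_{\hat z}sur_{\hat z}$ precisely when $P_{\hat z}>0$ and $sur_{\hat z}>0$, which is exactly the rate at which $\hat z$'s positive counter increases, and is zero otherwise; together with the analogous negative-polarity case this yields $\sum_{r\notin\mathcal{D}_t}s_r(t)\le\sum_{\hat z}\alpha_{\hat z}U_{\hat z}\le\sum_{\hat z}\hat z'(t)$.

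The main obstacle, absent in the single-location case, is that different children of $\hat z$ can contribute leftovers of opposite polarities, so the merged pool $X_{\mathrm{temp}}$ need not be automatically monochromatic. This is precisely why $\hat z$'s processing must first perform an internal pairing inside $X_{\mathrm{temp}}$ before incorporating $\hat z$'s own requests; once this invariant is secured the monochromatic leftover propagates up the tree and the single-location-style counting carries through.
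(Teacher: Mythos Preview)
Your proposal is correct and follows essentially the same bottom-up charging strategy as the paper's proof: both build, at each time $t$, an injective partial map $f_t$ from requests to opposite-polarity descendants so that $s_r(t)\le s_{f_t(r)}(t)$, and both bound the unmatched mass at each counter by $\alpha_{\hat z}|sur_{\hat z}(t)|=\hat z'(t)$. The only structural difference is that you make the sibling-merge step explicit (internal pairing inside $X_{\mathrm{temp}}$ to restore a monochromatic pool before incorporating $\hat z$'s own requests), whereas the paper achieves the analogous effect by marking all unmarked opposite-polarity requests in $F_{\hat z}$ at once; your formulation makes the invariant $|X_{\hat z}|=|sur_{\hat z}(t)|$ and the bound $U_{\hat z}\le\max(0,sur_{\hat z})$ cleaner to verify.
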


\begin{proof}
To prove our lemma we will charge the value $\sum_r s_r(t)$ for a given time $t$ using a charging scheme $f_t$. Note that the scheme is defined for a given time $t$ and may change over time.

$f_t$ is defined as follows. We iterate over $F$'s counters containing requests in a buttom-up fashion; starting from counters farthest from the root (in terms of number of edges) while breaking ties arbitrarily. We begin such that all requests are unmarked and we will mark them as we iterate over the counters. 

Consider the iteration for which the counter $z$ was encountered. Let $A_z$ denote the requests belonging to counter $z$ (as we will see, we only mark requests from within $F_z$ and since we iterate over $F$ buttom-up, we are guaranteed that all requests in $A_z$ are unmarked). W.l.o.g. assume that they are positive. Let $B_{z}$ denote the set of unmarked negative requests associated with all counters in $F_z$ (note that $B_{z}$ might be empty). Consider an arbitrary subset $A \subset A_z$ of size $\min \{|A_z|, |B_z|\}$. Define $f_t$ on $A$ as a (arbitrary) 1-1 mapping to $B_z$. Mark all requests in $B_z$ and $A$.

The process partitions all the requests into three sets: the domain of $f_t$, denoted by $\mathcal{D}_t$, the image of $f_t$, denoted by $f_t(\mathcal{D}_t)$ and all the rest. We note that both $\mathcal{D}_t$ and $f_t(\mathcal{D}_t)$ may both contain positive and negative requests simultaneously.

By the definition of $f_t$, $f_t: \mathcal{D}_t \rightarrow f(\mathcal{D}_t)$ is 1-1 and always maps requests to requests with opposite polarity. Recall the definitions of $P_z(t) = P_z$ and $N_z(t) = N_z$. Note that since our algorithm matches requests of opposite polarity that are associated with the same counter, $\min \{P_z, N_z\} = 0$. Further, let $P_{F_z}$ and $N_{F_z}$ denote the number of positive and negative requests associated with any counter in $F_z$ (including $z$) - i.e., any counter in $F^+_z$ and $F^-_z$.  We say that $r \in z$ if $r$ is associated with $z$ at time $t$. 

Consider a counter $z$ with $z'(t) \neq 0$. Assume w.l.o.g. that the requests associated with $z$ are positive. Therefore, by the definition of our algorithm $sur_z(t) > 0$. Observe that by the definition of $f_t$, $|\{r: r\in z \land r  \in \mathcal{D}_t\}| = \min \{P_z, N_{F_z}\}$. By the definition of $P_z$, $|\{r: r\in z \land r  \in \mathcal{D}_t\}| + |\{r: r\in z \land r \not \in \mathcal{D}_t\}| = P_z$. Therefore, if $P_z \geq N_{F_z}$ then, 
\begin{align*}
|\{r: r\in z \land r \not \in \mathcal{D}_t\}| &= 
P_z - \min \{P_z, N_{F_z} \} = 
P_z - N_{F_z} \leq 
P_{F_z} - N_{F_z} = 
sur_z(t).    
\end{align*}
On the other hand, if $P_z \leq N_{F_z}$ then,
\begin{align*}
|\{r: r\in z \land r \not \in \mathcal{D}_t\}| = 
0 \leq 
sur_z(t).
\end{align*}
Therefore, in any case, by summing over all counters we are guaranteed that,

\begin{align}
\label{equation.bipartite.metric.bound_srt_by_increase_in_counters_1}
\sum_{r \not \in \mathcal{D}_t} s_r(t) &=
\sum_{r \not \in \mathcal{D}_t \land \forall z: r \not \in z} s_r(t) + \sum_{z} \sum_{r \not \in \mathcal{D}_t \land r \in z}s_r(t) \nonumber \\ &=
\sum_{z} \sum_{r \not \in \mathcal{D}_t \land r \in z}s_r(t)  =
\sum_{z} \alpha_z |\{r : r \not \in \mathcal{D}_t \land r \in z\}|  \leq
\sum_z \alpha_z |sur_z(t)|,
\end{align}
where the second equality is due to the fact that $s_r(t) = 0$ for requests that are not associated with any counter at time $t$, the second equality follows from the definition of $s_r(t)$ and the second inequality follows from our earlier discussion.

Observe that due to the fact that $f_t$ maps requests to requests that are their strict descendants in $F$, we are guaranteed that $s_r(t) \leq s_{f_t(r)}(t)$ for all $r \in \mathcal{D}_t$. Therefore, 
\begin{align}
\label{equation.bipartite.metric.bound_srt_by_increase_in_counters_2}
\sum_{r \in \mathcal{D}_t} s_r(t) \leq
\sum_{r \in \mathcal{D}_t} s_{f_t(r)}(t)  =
\sum_{r \in f_t(\mathcal{D}_t)} s_r(t) \leq
\sum_{r \not \in \mathcal{D}_t} s_r(t),
\end{align}
where the equality is due to the fact that $f_t$ is 1-1 and the second inequality is due to the fact that $\mathcal{D}_t \cap f_t(\mathcal{D}_t) = \emptyset$.

\noindent Combining the above and summing over all points in time,
\begin{align*}
\int_t \sum_r s_r(t)dt &= 
\int_t \sum_{r \in \mathcal{D}_t} s_r(t)dt + \int_t \sum_{r \not \in \mathcal{D}_t} s_r(t)dt \\&\leq
2 \int_t \sum_{r \not \in \mathcal{D}_t} s_r(t) dt \leq
2 \int_t \sum_z \alpha_z |sur_z(t)| dt =
2 \int_t \sum_z z'(t) dt.
\end{align*}
where the first inequality is due to equation (\ref{equation.bipartite.metric.bound_srt_by_increase_in_counters_2}) and the second is due to equation (\ref{equation.bipartite.metric.bound_srt_by_increase_in_counters_1}).
\end{proof}

\noindent We are now ready to prove Proposition \ref{proposition.bipartite.metric.bound_algorithm_by_counters}.

\begin{proof}[Proof of Proposition \ref{proposition.bipartite.metric.bound_algorithm_by_counters}]
We first observe that By the definition of $F$ and $k_{m(r)}(r)$, we are guaranteed that any request $r$ must have passed through delay counters $z_1, \ldots, z_{k_{m(r)}(r)-1}$. Due to the fact that a counter must be filled in order to be emptied, we are guaranteed that
\begin{align}
\label{equation.bipartite.metric.bound_BPSLA_by_counters_2}
\sum_r  \sum_{j = 1}^{k_{m(r)}(r)-1} y_j \leq \sum_k \int_t z'_k(t)dt,
\end{align}
since the capacity of delay counter $z_k$ is $y_k$. We note that Lemma \ref{lemma.single_location.real_delay_bounded_by_aux} clearly holds for $\algbpma$ as well. Therefore,
\begin{align*}
\algbpma &=
\algbpma_c + \algbpma_d \leq
\sum_z \int_t z'(t)dt + \algbpma_d \\&=
\sum_z \int_t z'(t)dt + \sum_r \int_t d_r'(t)dt \leq
\sum_z \int_t z'(t)dt + \sum_r  \sum_{k = 1}^{k_{m(r)}(r)-1} y_k  + \sum_r  \int_t s_r(t)dt \\&\leq 
2\sum_z \int_t z'(t)dt + \sum_r \int_t s_r(t)dt \leq 
4\sum_z \int_t z'(t)dt,
\end{align*}
where the first inequality follows from Observation \ref{observation.bipartite.metric.bound_bpma_connection_cost_by_increase_in_counters}, the second inequality follows from Lemma \ref{lemma.single_location.real_delay_bounded_by_aux}, the third inequality follows from equation (\ref{equation.bipartite.metric.bound_BPSLA_by_counters_2}) and the last inequality follows from Lemma \ref{lemma.bipartite.metric.bound_srt_by_increase_in_counters}.

\end{proof}

\subsubsection{Lower Bounding An Arbitrary Solution's Cost}

For an arbitrary solution to the given instance, we denote by $\algsol_c$ and $\algsol_d$ its connection and delay costs respectively. In this section we will lower bound $\algsol_c$ and $\algsol_d$ by the overall increase in counters by charging this increase to different matchings performed by $\algsol$. The following proposition states this formally.

\begin{proposition}
\label{proposition.bipartite.metric.bound_counters_by_OPT}
$\sum_z \int_t z'(t)dt \leq O(1)\cdot \algsol_c + O(h)\cdot \algsol_d.$
\end{proposition}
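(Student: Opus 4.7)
The plan is to adapt the potential-function analysis of Proposition \ref{proposition.bipartite.single_location.bound_counters_by_opt} to the DAG-of-counters setting developed for Proposition \ref{proposition.metric.bound_counters_by_OPT}. For every counter $\hat{z} \in V(F)$ I would introduce the local potential
\[
\phi_{\hat{z}}(t) \;=\; 2\, y_{\hat{z}}\, \bigl|\, sur_{\hat{z}}(t) - sur^*_{\hat{z}}(t)\, \bigr|,
\]
where $sur^*_{\hat{z}}(t)$ denotes the signed surplus (positive minus negative) of requests that (i) $\algbpma$ assigned to some counter of $F_{\hat{z}}$ upon arrival, (ii) are unmatched by $\algsol$ at time $t$, and (iii) have accumulated $\algsol$-delay at most $\sum_{j\le k(\hat{z})} y_j$. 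Let $o_{\hat{z}}(t)$ denote the total count of such requests. As in Section \ref{section.metric}, partition the timeline of each $\hat{z}$ into phase intervals $I^{\hat{z}}_i = [t^{\hat{z}}_i, t^{\hat{z}}_{i+1})$ delimited by the moments at which $\algbpma$ moves a request from $\hat{z}$ to its parent in $F$.

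The heart of the proof is the following per-phase inequality generalizing Lemma \ref{lemma.bipartite.single_location.yossis_lemma}:
\[
\int_{t \in I^{\hat{z}}_i} \hat{z}'(t)\,dt + \Delta_i(\phi_{\hat{z}}) \;\le\; 4\alpha_{\hat{z}}\!\int_{t \in I^{\hat{z}}_i}\! o_{\hat{z}}(t)\,dt + 2 y_{\hat{z}}\, \mathcal{E}^{\hat{z}}_{I^{\hat{z}}_i} + 2 y_{\hat{z}}\, \mathcal{C}^{\hat{z}}_{I^{\hat{z}}_i},
\]
where $\mathcal{E}^{\hat{z}}_I$ counts events that remove requests from $o_{\hat{z}}$ from within $F_{\hat{z}}$ (delay-threshold crossings and $\algsol$-matches whose two endpoints both lie in $F_{\hat{z}}$), while $\mathcal{C}^{\hat{z}}_I$ counts the events in which $\algsol$ matches a request assigned by $\algbpma$ to $F_{\hat{z}}$ with one assigned outside $F_{\hat{z}}$; each such $\mathcal{C}$-event forces $\algsol$ to traverse an HST edge whose counter is a strict ancestor of $\hat{z}$ in $F$. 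The proof of the per-phase inequality mirrors the three-case breakdown of Lemma \ref{lemma.bipartite.single_location.yossis_lemma}: the move-up event of $\algbpma$ flips $sur_{\hat{z}} - sur^*_{\hat{z}}$ by $\pm 1$ exactly once per phase, and whenever $\hat{z}'(t) > 0$ one verifies that $\mathrm{sign}(sur_{\hat{z}}(t)) = \mathrm{sign}(sur^*_{\hat{z}}(t))$ and $|sur^*_{\hat{z}}(t)| \ge |sur_{\hat{z}}(t)|$, so that $\hat{z}'(t) = \alpha_{\hat{z}} |sur_{\hat{z}}(t)| \le \alpha_{\hat{z}}\, o_{\hat{z}}(t)$.

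Summing the per-phase inequality over all phases of $\hat{z}$ telescopes $\phi_{\hat{z}}$ to $0$, and summing over all $\hat{z} \in V(F)$ yields
\[
\sum_{\hat{z}} \int_t \hat{z}'(t)\,dt \;\le\; 4 \sum_{\hat{z}} \int_t \alpha_{\hat{z}}\, o_{\hat{z}}(t)\,dt \;+\; 2 \sum_{\hat{z}} y_{\hat{z}}\bigl(\mathcal{E}^{\hat{z}} + \mathcal{C}^{\hat{z}}\bigr).
\]
I would bound the first sum by $O(h)\cdot \algsol_d$: an $\algsol$-unmatched request $r$ with current delay level $\delta^*$ is counted in $o_{\hat{z}}$ only for $\hat{z}$ on $r$'s leaf-to-root path in $F$ with $k(\hat{z}) \ge \delta^*$; summing $\alpha_{\hat{z}}$ over the delay counters on this path gives $O(\alpha_{\delta^*})$ by the exponentially-decreasing slopes, while summing over the at most $h$ edge counters on the path, each having slope $\le \alpha_{\delta^*}$, gives $O(h\, \alpha_{\delta^*})$, which together match $r$'s momentary $\algsol$-delay up to an $O(h)$ factor, exactly paralleling equations (\ref{equation.metric.bound_by_largest_slope_counter})--(\ref{equation.metric.bound_counters_from_C_to_opt}) in the proof of Proposition \ref{proposition.metric.bound_counters_by_OPT}. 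The bound $\sum_{\hat{z}} y_{\hat{z}} \mathcal{E}^{\hat{z}} = O(1)\cdot \algsol_d$ follows by the same double-summation-swap and geometric-series argument as equations (\ref{equation.bipartite.single_location.yossis_lemma.2})--(\ref{equation.bipartite.single_location.yossis_lemma.3}), combined with the 2-HST property to absorb the edge-counter contribution. Finally, $\sum_{\hat{z}} y_{\hat{z}} \mathcal{C}^{\hat{z}} = O(1)\cdot \algsol_c$: each $\algsol$-connection across an HST edge $e$ charges the counters $\hat{z}$ strictly below $z_e$ in $F$ that lie on the paths to the two endpoints, whose capacities form a geometric series bounded by $O(w_e)$ by the 2-separation of $T$.

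The main obstacle will be proving the per-phase inequality itself. One must simultaneously manage the potentials $\phi_{\hat{z}}$ for all $\hat{z}$ in $F$, since a single $\algsol$-event (an upon-arrival match, a delay-threshold crossing, or a cross-boundary match) may change $sur^*_{\hat{z}}$ for many counters $\hat{z}$ along a root-to-leaf path and must be charged to $\algsol$'s cost only a bounded number of times. A secondary subtlety is the precise choice of the delay threshold defining $o_{\hat{z}}$: it must be chosen so that (i) whenever $\hat{z}$ increases the relevant $\algsol$-unmatched requests are still in $o_{\hat{z}}$ to dominate $sur_{\hat{z}}$, and (ii) the resulting sum $\sum_{\hat{z}} \alpha_{\hat{z}} \int_t o_{\hat{z}}(t)\,dt$ telescopes, via the HST structure, to $O(h)\cdot \algsol_d$ rather than blowing up polynomially in the depth.
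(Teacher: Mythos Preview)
Your proposal is essentially the paper's approach: the same potential $\phi_{\hat z}=2y_{\hat z}|sur_{\hat z}-sur^*_{\hat z}|$, the same per-phase inequality (the paper's Lemma~\ref{lemma.bipartite.metric.yossis_lemma}), and the same final accounting split between delay and edge counters yielding $O(h)\cdot\algsol_d$ from the $o_{\hat z}$ term and $O(1)\cdot(\algsol_d+\algsol_c)$ from the event term. The paper bundles your $\mathcal{E}^{\hat z}$ and $\mathcal{C}^{\hat z}$ into a single $\mathcal{E}^{\hat z}$ and only separates them in the final summation, but that is cosmetic.

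There is one genuine slip in your definition of $\mathcal{E}^{\hat z}$. You describe it as counting ``$\algsol$-matches whose two endpoints both lie in $F_{\hat z}$,'' and then claim $\sum_{\hat z} y_{\hat z}\mathcal{E}^{\hat z}=O(1)\cdot\algsol_d$. That fails: take a pair $r,r'$ at the same leaf matched by $\algsol$ with negligible delay; this match is internal to $F_{\hat z}$ for \emph{every} ancestor $\hat z$, so it contributes $\sum_{\hat z} y_{\hat z}$, which is unbounded, while $\rho(r)\approx 0$. The point is that such a match does not change $f_{\hat z}=sur^*_{\hat z}-sur_{\hat z}$ at all (the removal of $r'$ from $o_{\hat z}$ and the arrival of $r$ into $sur_{\hat z}$ cancel), so it need not and must not be counted. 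The events that do change $f_{\hat z}$ via a match are exactly those where the waiting partner $r'$ already has $\delta^*_t(r')>k(\hat z)$ (so $r'\notin o_{\hat z}$) or where the match crosses out of $F_{\hat z}$; in both cases the event carries enough $\algsol$-cost ($\rho\ge\sum_{j\le k(\hat z)}y_j$ or $\kappa\ge y_{\hat z}$, respectively) to pay the charge. This is precisely how the paper defines $\mathcal{E}^z_I$ (conditions (2) and (3) there). Once you adjust your $\mathcal{E}^{\hat z}$ accordingly, the rest of your outline goes through as written.
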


The rest of this section is dedicated towards the proof of Proposition \ref{proposition.bipartite.metric.bound_counters_by_OPT}. Throughout this section we denote by $F_z$ the set of all counters belonging to the subtree of $F^+$ rooted at $z^+$ and all counters belonging to the subtree of $F^-$ rooted at $z^-$ (note that this includes $z^+$ and $z^-$). We charge the increase in our algorithm's counters to $\algsol$ using the same flavor as in the proof of Proposition \ref{proposition.bipartite.single_location.bound_counters_by_opt}. We first introduce several definitions to aid us in our proof. We reuse the definition of $\delta^*_t(r)$ (see Definition \ref{definition.bipartite.bipartite.delta^*_t(r)}. 


\begin{definition}
\label{definition.bipartite.bipartite.delta^*_t(r)}
Given a request $r$ with arrival time $a(r)$ and some time $t$ such that $r$ is unmatched at time $t$ with respect to $\algsol$, define $\delta^*_t(r)$ such that $t - a(r) \in [x_{\delta^*_t(r)-1}, x_{\delta^*_t(r)})$.
\end{definition}

\begin{definition}
Let $o_z(t)$ denote the number of requests that (1) were given to a counter within $F_z$ upon arrival, (2) are unmatched at time $t$ with respect to $\algsol$ and (3) incur a momentary delay of at least $\alpha_z$ (i.e., $\alpha_{\delta^*_t(r)} \geq \alpha_z$). Furthermore, let $sur^*_z(t)$ denote the positive surplus taken over these requests.
\end{definition}

\begin{definition}
Given a counter $z$ define $k(z) \in \mathbbm{N}$ such that $\alpha_{k(z)}$ denotes the slope of $z$.
\end{definition}

\begin{definition}
For a given time interval $I = [a,b)$, let $\mathcal{E}^z_{I}$ denote the number of requests $r$ that were given to a counter in $F_z$ upon arrival, such that either (1) $\delta^*_t(r)$ changed from $k(z)$ to $k(z)+1$ at time $t \in I$, (2) $r$ is matched (upon arrival) at time $t\in I$ by $\algsol$ to a request $r'$ such that $\delta^*_t(r') \geq k(z)$ or (3) $\algsol$ matched the request at time $t \in I$ through an edge that is an ancestor of $z$ in $F$.
\end{definition}

As in \cite{Min_Cost_Bipartite_Perfect_Matching_with_Delays} we define a potential function to aid in our proof. Specifically, let $\phi_z(t) = 2 y_z |sur_z(t) - sur^*_z(t)|$. Furthermore, we partition our time axis into intervals as in Definition \ref{definition.single_location.counter_intervals}.


\begin{lemma}
\label{lemma.bipartite.metric.yossis_lemma}
Let $I^z_i$ denote some phase interval $I^z_i = [t^z_i, t^z_{i+1})$ defined with respect to counter $z$. Then, $\int_{t \in I_i}z'(t)dt + \Delta_i(\phi_z (t)) \leq 4\alpha_z \cdot \int_{t \in I^z_i} o_z(t) dt + 2\mathcal{E}^z_{I^z_i}y_z$.
\end{lemma}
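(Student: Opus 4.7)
The plan is to adapt the proof of Lemma~\ref{lemma.bipartite.single_location.yossis_lemma} to the metric setting. Set $f_z(t) = sur^*_z(t) - sur_z(t)$, so that $\phi_z(t) = 2y_z|f_z(t)|$. My first step is to enumerate every event that can occur during $I^z_i$ and verify that $f_z$ is a piecewise-constant step function whose transitions have magnitude at most $1$. The only events that change $f_z$ are (a)~$\algbpma$ moving a request from $z^+$ or $z^-$ up to the parent of $z$ in $F$ (which happens at most once, at $t^z_{i+1}$), and (b)~the events counted by $\mathcal{E}^z_{I^z_i}$: a delay-level transition of a request in $F_z$ from $k(z)$ to $k(z){+}1$, an $\algsol$ arrival-match to a partner of delay level $\geq k(z)$, or an $\algsol$-match via an edge $e$ whose counter $z_e$ is an ancestor of $z$ in $F$. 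Every remaining event --- arrivals to $F_z$ left unmatched by $\algsol$, $\algsol$ arrival-matches between two requests of $F_z$ whose waiting partner has $\delta^* \leq k(z)$, $\algbpma$ moves strictly interior to $F_z$, $\algbpma$ matches within $z^+$ and $z^-$, and anything that never touches $F_z$ --- produces lockstep changes in $sur_z$ and $sur^*_z$ and so leaves $f_z$ unchanged. This yields $\Delta_i|f_z| \leq |\Delta_i f_z| \leq \mathcal{E} + 1$, writing $\mathcal{E} = \mathcal{E}^z_{I^z_i}$.

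Next I would bound $\int_{I^z_i}z'(t)dt \leq 2y_z$, since both $z^+$ and $z^-$ get reset at $t^z_{i+1}$ and each can rise to at most $y_z$ before that. For a non-terminal interval, I split into two sub-cases. If $\Delta_i|f_z| = \mathcal{E}+1$, the equality $\Delta_i|f_z| = |\Delta_i f_z|$ forces $f_z$ to have constant sign throughout $I^z_i$; WLOG $f_z \geq 0$ with the interval ending because $z^+$ reaches capacity, so whenever $z^+$ is active we have $0 < sur_z(t) \leq sur^*_z(t) \leq o_z(t)$. Integrating over the times $z^+$ is active gives $y_z = \int \alpha_z\, sur_z(t)\,dt \leq \int_{I^z_i}\alpha_z\, o_z(t)\,dt$, and combined with $\Delta_i\phi_z = 2y_z(\mathcal{E}+1)$ we obtain $\int z'(t)dt + \Delta_i\phi_z \leq 4y_z + 2y_z\mathcal{E} \leq 4\alpha_z\int_{I^z_i}o_z(t)dt + 2y_z\mathcal{E}$. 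In the other sub-case, $\Delta_i|f_z|<\mathcal{E}+1$ forces (by the parity of $\mathcal{E}+1$ $\pm 1$-jumps) $\Delta_i|f_z|\leq \mathcal{E}-1$, and the bound follows from $\int z'(t)dt + \Delta_i\phi_z \leq 2y_z + 2y_z(\mathcal{E}-1) = 2y_z\mathcal{E}$ using $o_z \geq 0$.

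The terminal interval $I^z_{m_z-1}$ is handled exactly as in the single-location case: every request has eventually been matched, so $sur_z = sur^*_z = 0$ at $t^z_{m_z}$ and $\Delta_{m_z-1}\phi_z \leq 0$. If $\mathcal{E}\geq 1$, the trivial bound $\int z'(t)dt \leq 2y_z \leq 2y_z\mathcal{E}$ suffices; if $\mathcal{E} = 0$, no event except possibly the $\algbpma$ move can alter $f_z$, and since the terminal interval contains no such move, $f_z$ is constant and equal to its terminal value $0$, so $\alpha_z|sur_z(t)| = \alpha_z|sur^*_z(t)|\leq \alpha_z o_z(t)$ pointwise, closing the argument by integration.

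The main obstacle is the case analysis in the first paragraph, since condition~(3) of $\mathcal{E}^z_{I^z_i}$ is the new ingredient relative to the single-location proof. The most delicate sub-case is when $r$ arrives at a leaf of $F_z$ and $\algsol$ immediately matches $r$ with a waiting $r'\in F_z$ whose delay level satisfies $\delta^*_t(r')\leq k(z)$: here the $\pm 1$ jump in $sur_z$ from the new arrival is perfectly cancelled by the $\pm 1$ change in $sur^*_z$ from $r'$ no longer contributing, so $f_z$ does not move, and only the $\delta^*_t(r')>k(z)$ variant must be captured by condition~(2); this bookkeeping is what prevents double-counting. Conversely, when $\algsol$ pairs an $r\in F_z$ with an $r'\notin F_z$, the connection path in $T$ necessarily crosses an edge whose counter is an ancestor of $z$ in $F$, and condition~(3) accounts for exactly the resulting single $\pm 1$ change in $f_z$. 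Throughout, I invoke the standard reduction that $\algsol$ commits to each match at the arrival time of the later partner, so that every $\algsol$-match is an arrival-time event --- ensuring conditions~(1)--(3) jointly enumerate every nontrivial change of $f_z$ beyond the end-of-interval $\algbpma$ move.
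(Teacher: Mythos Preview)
Your proposal is correct and follows essentially the same approach as the paper: you define $f_z(t)=sur^*_z(t)-sur_z(t)$, enumerate the events that can change it (the single $\algbpma$ move plus the three $\mathcal{E}^z_{I^z_i}$-events, the third of which is the new metric ingredient), and then split into the same two sub-cases ($\Delta_i|f_z|=\mathcal{E}+1$ versus $<\mathcal{E}+1$) for non-terminal intervals and handle the terminal interval identically. Your bookkeeping of which arrival/match events leave $f_z$ unchanged is in fact more explicit than the paper's, which simply asserts the list of changing events and refers back to Lemma~\ref{lemma.bipartite.single_location.yossis_lemma} for the remainder.
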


\begin{proof}
Let $\mathcal{E} = \mathcal{E}^z_{I^z_i}$. Observe that $f_z(t) = sur^*_z(t) - sur_z(t)$ can change if and only if: (1) $\algbpma$ moves a request from $z$ to its parent in $F$, (2) $\delta^*_t(r)$ changed from $k(z)$ to $k(z)+1$ during $t \in I_i^z$, (3) $r$ is matched (upon arrival) at time $t\in I_i^z$ by $\algsol$ to a request $r'$ such that $\delta^*_t(r') \geq k(z)$ or (4) $\algsol$ matched the request at time $t \in I^z_i$ through an edge that is an ancestor of $z$ in $F$. 

From here on out, the proof continues identically to the proof of Lemma \ref{lemma.bipartite.single_location.yossis_lemma}. We therefore defer the full proof to the Appendix.

\end{proof}

\noindent We are now ready to prove Proposition \ref{proposition.metric.bound_counters_by_OPT}.

\begin{proof}[Proof of Proposition \ref{proposition.bipartite.metric.bound_counters_by_OPT}]
By Lemma \ref{lemma.bipartite.metric.yossis_lemma} and due to the fact that $\phi_z = 0$ at the beginning and end of the instance, by summing over all phases of a counter $z$ we get that,
\begin{align}
\int_t z'(t)dt \leq 4\int_t \alpha_z o_z(t)dt + 2y_z \mathcal{E}_z,
\end{align}
where $\mathcal{E}_z = \sum_i \mathcal{E}^z_{I_i}$. We first split the set of all counters to the set of delay counters, $Z_d$ and the set of edge counters, $Z_e$. We upper bound each separately, beginning with $Z_d$. Throughout this section, given a counter $z$ and request $r$ we will denote by $r \in F_z$ the case that $r$ was given to a counter from $F_z$ upon arrival.

We upper bound $\sum_{Z_d}\int_t \alpha_z o_z(t)dt$ and $\sum_{Z_d}y_z \mathcal{E}_z$ separately, beginning with the former. By the definition of $o_z(t)$ we get,
\begin{align}
\label{equation.bipartite.metric.bound_counters_by_OPT_1}
\sum_{Z_d} \int_t \alpha_z o_z(t)dt =&
\sum_{Z_d} \int_t \alpha_z \sum_r \mathbbm{1}\{ r \in F_z \land \delta^*_t(r) \leq k(z)\} \nonumber \\=&
\sum_r \int_t \sum_{Z_d} \alpha_z \mathbbm{1}\{ r \in F_z \land \delta^*_t(r) \leq k(z)\}.
\end{align}

Note that given request $r$, there is at most a single delay counter of level $k$, $z_k$, such that $r \in F_z$, due to Observation \ref{observation.metric.properties_of_F}. Therefore, due to the fact that the delay slopes decrease exponentially,
\begin{align}
\label{equation.bipartite.metric.bound_counters_by_OPT_2}
\sum_{Z_d} \alpha_z \mathbbm{1}\{ r \in F_z \land \delta^*_t(r) \leq k(z)\} \leq 
\sum_{k \geq \delta^*_t(r)} \alpha_k \leq 
2\alpha_{\delta^*_t(r)}.
\end{align}

\noindent Combining equations (\ref{equation.bipartite.metric.bound_counters_by_OPT_1}) and (\ref{equation.bipartite.metric.bound_counters_by_OPT_2}) yields,

\begin{align}
\label{equation.bipartite.metric.bound_counters_by_OPT_3}
\sum_{Z_d} \int_t \alpha_z o_z(t)dt \leq 
\sum_r \int_t 2\alpha_{\delta^*_t(r)} =
2 \cdot \algsol_d.
\end{align}

\noindent Next we upper bound $\sum_{Z_d}y_z \mathcal{E}_z$. Recall the definition of $\rho(r)$ (as defined in the proof of Proposition \ref{proposition.single_location.bound_counters_by_OPT}). By the definition of $\mathcal{E}_z$ for delay counters we are guaranteed that a request is counted towards $\mathcal{E}_z$ if and only if $r \in F_z \land \Big(\rho(r) \geq \sum_{i=1}^{k(z)} y_i \lor \kappa(r) \geq \sum_{i=1}^{k(z)}y_i\Big)$. Therefore,
\begin{align}
\label{equation.bipartite.metric.bound_counters_by_OPT_4}
\sum_{Z_d}y_z \mathcal{E}_z &=
\sum_{Z_d}y_z \sum_r \mathbbm{1}\Big\{ r \in F_z \land \big(\rho(r) \geq \sum_{i=1}^{k(z)} y_i \lor \kappa(r) \geq \sum_{i=1}^{k(z)}y_i\big)\Big\}.
\end{align}

Again, due to the fact that given a request $r$ there is at most a single delay counter of level $k$, $z_k$ such that $r\in F_z$, we get,

\begin{align}
\label{equation.bipartite.metric.bound_counters_by_OPT_5}
\sum_{Z_d} y_z \mathbbm{1}\Big\{ r \in F_z \land \big(\rho(r) \geq \sum_{i=1}^{k(z)} y_i \lor \kappa(r) \geq \sum_{i=1}^{k(z)}y_i\big)\Big\} \leq \rho(r) + \kappa(r).
\end{align}

\noindent Combining equations (\ref{equation.bipartite.metric.bound_counters_by_OPT_4}) and (\ref{equation.bipartite.metric.bound_counters_by_OPT_5}) yields,

\begin{align}
\label{equation.bipartite.metric.bound_counters_by_OPT_6}
\sum_{Z_d}y_z \mathcal{E}_z \leq \sum_r (\rho(r) + \kappa(r)) \leq 2 \cdot \algsol_d + 2 \cdot \algsol_c.
\end{align}

We turn to consider $Z_e$. We first upper bound $\sum_{Z_e}\int_t \alpha_z o_z(t)dt$. Equation (\ref{equation.bipartite.metric.bound_counters_by_OPT_1}) holds for $Z_e$ as well (replacing $Z_d$). Due to Observation \ref{observation.metric.properties_of_F} we are guaranteed that there are at most $h$ (the height of the HST) counters that satisfy $r \in F_z$ for a given request $r$. Therefore,
\begin{align*}
\sum_{Z_e} \alpha_z \mathbbm{1}\{ r \in F_z \land \delta^*_t(r) \leq k(z)\} &\leq 
\alpha_{\delta^*_t(r)} \sum_{Z_e} \mathbbm{1}\{ r \in F_z \land \delta^*_t(r) \leq k(z)\} \\&\leq
\alpha_{\delta^*_t(r)}h.
\end{align*}

\noindent Therefore, combining the above with equation (\ref{equation.bipartite.metric.bound_counters_by_OPT_1}) (with $Z_e$ in lieu of $Z_d$) yields,
\begin{align}
\label{equation.bipartite.metric.bound_counters_by_OPT_10}
\sum_{Z_e} \int_t \alpha_z o_z(t)dt \leq h \sum_r \int_t \alpha_{\delta^*_t(r)} = h \cdot \algsol_d.
\end{align}

Next, we upper bound $\sum_{Z_e}y_z \mathcal{E}_z$. By the definition of $\mathcal{E}_z$ for edge counters we are guaranteed that a request is counted towards $\mathcal{E}_z$ if and only if $r \in F_z \land \Big(\rho(r) \geq \sum_{i=1}^{k(z)} y_i \lor \kappa(r) \geq y_z\Big)$. Also note that due to the fact that $z_e$'s slope is defined as the first delay counter's slope encountered on the way to the root, $\sum_{i=1}^{k(z)} y_i \geq y_{z}$. Therefore,

\begin{align}
\label{equation.bipartite.metric.bound_counters_by_OPT_7}
\sum_{z \in Z_e} y_z \mathcal{E}_z &=
\sum_{z \in Z_e} y_z \sum_r \mathbbm{1}\Big \{r \in F_z \land \Big(\rho(r) \geq \sum_{i=1}^{k(z)} y_i \lor \kappa(r) \geq y_z\Big)\Big \} \nonumber \\&\leq 
\sum_r \sum_{z \in Z_e} y_z \mathbbm{1}\Big \{r \in F_z \land \Big(\rho(r) \geq y_z \lor \kappa(r) \geq y_z\Big)\Big \}.
\end{align}

\noindent Due to the fact that the edge counters' capacities decrease exponentially, we are guaranteed that,

\begin{align}
\label{equation.bipartite.metric.bound_counters_by_OPT_8}
\sum_{z \in Z_e} y_z \mathbbm{1}\Big \{r \in F_z \land \Big(\rho(r) \geq y_z \lor \kappa(r) \geq y_z\Big)\Big \} \leq
2\rho(r) + 2\kappa(r).
\end{align}

\noindent Combining equations (\ref{equation.bipartite.metric.bound_counters_by_OPT_7}) and (\ref{equation.bipartite.metric.bound_counters_by_OPT_8}) yields,
\begin{align}
\label{equation.bipartite.metric.bound_counters_by_OPT_9}
\sum_{z \in Z_e} y_z \mathcal{E}_z \leq 
\sum_r 2\rho(r) + 2\kappa(r) \leq 
4 \cdot \algsol_d + 4 \cdot \algsol_c.
\end{align}

\noindent Combining the above, we get,
\begin{align*}
\sum_z \int_t z'(t)dt &\leq 
\sum_z (4\int_t \alpha_z o_z(t)dt + 2y_z \mathcal{E}_z) \\&=
\sum_{z \in Z_d} (4\int_t \alpha_z o_z(t)dt + 2y_z \mathcal{E}_z) + \sum_{z \in Z_e} (4\int_t \alpha_z o_z(t)dt + 2y_z \mathcal{E}_z) \\&\leq
O(h) \cdot \algsol_d + O(1) \cdot \algsol_c,
\end{align*}
where the first inequality follows from equation (\ref{equation.bipartite.metric.bound_counters_by_OPT_1}) and the second inequality follows from equations (\ref{equation.bipartite.metric.bound_counters_by_OPT_3}), (\ref{equation.bipartite.metric.bound_counters_by_OPT_6}), (\ref{equation.bipartite.metric.bound_counters_by_OPT_10}) and (\ref{equation.bipartite.metric.bound_counters_by_OPT_9}).
\end{proof}

\noindent We are now ready to prove Theorems \ref{theorem.bipartite.metric.BPMA_is_O(1)_O(h)} and \ref{theorem.bipartite.metric.BPMA_is_O(logn)}

\begin{proof}[Proof of Theorem \ref{theorem.bipartite.metric.BPMA_is_O(1)_O(h)}]
Follows from Propositions \ref{proposition.bipartite.metric.bound_algorithm_by_counters} and \ref{proposition.bipartite.metric.bound_counters_by_OPT}.
\end{proof}

\begin{proof}[Proof of Theorem \ref{theorem.bipartite.metric.BPMA_is_O(logn)}]
Follows from Theorem \ref{theorem.bipartite.metric.BPMA_is_O(1)_O(h)} and Lemma \ref{lemma.transfer_competitiveness_from_hst_to_general_metric} (note that the lemma clearly holds for the bichromatic case as well).
\end{proof}

\section{Concluding Remarks and Open Problems}
\label{section.conclusion}
In this paper we study the online problem of {\em minimum-cost perfect matching with concave delays} in two settings: the monochromatic setting and the bichromatic setting.
For each setting, we first present an $O(1)$-competitive deterministic online algorithm for the single location cases and then generalize our ideas in order to design $O(\log n)$-competitive randomized algorithms for the metric cases. 

The problem of online minimum-cost perfect matching with linear delays has been extensively studied in the deterministic setting as well. To the best of our knowledge the results in the linear case do not convey as is to the case that the delays are concave. An interesting avenue to pursue, is therefore to try and generalize the ideas presented in this paper in order to handle the problem in the deterministic setting.

\bibliography{bib}
\bibliographystyle{plain}
\appendix

\section{Deferred Proofs from Section \ref{section-notation}}
\begin{proof}[Proof of Lemma \ref{lemma.approximate_concave_with_piece_wise_linear}]

Given the original concave delay function $D(\cdot)$, $f(\cdot)$ is constructed in an iterative method. We define each piece in our piece-wise linear function iteratively. The first piece begins at $(0,0)$ (i.e., $f(0) = 0$). The first piece's slope, denoted by $\alpha_1$, is defined as $D'(0) / 2$. We consider two cases: either such $f$ does not intersect $D$ at any point $x > 0$ or it does. In the former case, we simply set $f$ as a linear function (i.e., $f(0) = 0$ with slope $\alpha_1 = D'(0) / 2$). In the latter case, let $x_1$ denote the point at which $f(x_1) = D(x_1)$. In this case we continue to define our next linear piece as defined next.

Assume $f$ is defined for the $i-1$ first linear pieces. We define the $i$'th linear piece. By our definition $f(x_{i-1}) = D(x_{i-1})$. Set $\alpha_i$ as $D'(x_{i-1})/2$ and consider two cases. Either $f$ intersect $D$ at some $x > x_{i-1}$ or it does not. If it does, let $x_i$ denote that point, set $f(x_i) = D(x_i)$ and continue to the next linear piece. If it does not, let $f$ continue with that slope to $x = \infty$.

Since $D$ is concave, $D'$ is non-increasing. Furthermore, the linear function between $x_{i-1}$ and $x_i$ must be below $D$ (throughout that interval) and thus, $\alpha_{i-1} \leq D'(x_i)$. Therefore, $\alpha_{i-1} \leq D'(x_i) = 2\alpha_i$ meaning that $f$'s slopes are exponentially decreasing.

We are left to show that $f(x) \leq D(x) \leq 2f(x)$. Clearly, by the definition of $f$, $f(x) \leq D(x)$ for all $x$. On the other hand, again due to the fact that $D'(x)$ is non-increasing, given any $x \in [x_{i-1}, x_i]$, we have $D(x) \leq D(x_{i-1}) + D'(x_{i-1})\cdot (x - x_{i-1})$. Since $f(x) = D(x_{i-1}) + \frac{D'(x_{i-1})}{2} \cdot (x - x_{i-1})$, we have
\begin{align*}
D(x) - f(x) &\leq 
D'(x_{i-1})\cdot (x - x_{i-1}) - \frac{D'(x_{i-1})}{2} \cdot (x - x_{i-1}) \\&= 
\frac{D'(x_{i-1})}{2} \cdot (x - x_{i-1}) \leq 
f(x),    
\end{align*}
which yields $D(x) \leq 2 f(x)$.
\end{proof}

\section{Deferred Algorithms from Section \ref{section.single_location}}

\begin{algorithm}[H]
\SetAlgoNoLine
\KwIn{a set of requests arriving online; counters $\{z_k\}_{k}$ for every linear piece in the delay function, each $z_k$ with a slope $\alpha_k$ and a capacity $y_k$.}
\textbf{Upon} arrival of request $r$ \textbf{do} \\
\Indp			
	Add $r$ to $z_1$\;
\Indm
\For{all time $t$}
{
\While{there exists a pending request $r$}{
    \If{there exist 2 requests associated with $z_k$}
    {
        Match them and remove them from $z_k$\;
    }
    \If{$z_k$ has a request associated with it AND an even number of requests are associated with counters $z_1, \dots, z_{k-1}$}
    {
    Increase $z_k$ continuously over time at a rate of $\alpha_k$\;
    \If{$z_k$ reaches its capacity $y_k$}
    {
        Move $r$ from $z_k$ to $z_{k+1}$\;
        Reset $z_k$ to 0.
    }
    }
}
}
\caption{Single-Location-Algorithm ($\algsla$)}
\label{algorithm.single_location}
\end{algorithm}

\section{Deferred Algorithms and Proofs from Section \ref{section.metric}}

\begin{algorithm}[H]
\SetAlgoNoLine
\KwIn{a set of requests arriving online; an HST $T$; a DAG $F(T, D)$, of counters.}
\textbf{Upon} arrival of request $r$ at $v \in V(T)$ \textbf{do} \\
\Indp
    Let $e$ denote $v$'s single edge in $T$\;
    Let $z$ denote the (single) leaf encountered while moving downward from $z_e$ in $F$\;
    Associate $r$ with $z$\;
\Indm
\For{all time $t$}
{
\While{there exists a pending request $r$}{
    \If{there exist 2 requests associated with $z$}
    {
        Match them and remove them from $z$\;
    }
    \If{$z$ has a request associated with it AND an odd number of requests are associated with $F_z$}
    {
        Increase $z$ continuously over time at a rate of $\alpha_z$\;
    \If{$z$ reaches its capacity $y_z$}
    {
        Move $r$ from $z$ to its parent in $F$\;
        Reset $z$ to 0.
    }
    }
}    
}
\caption{Metric-Algorithm ($\algma$)}
\label{algorithm.metric}
\end{algorithm}

\begin{proof}[Proof of Lemma \ref{lemma.metric.interval_is_odd}]
Denote $I^{\hat{z}}_i$ simply as $I_i$ and let $t \in I_i$ denote a time before any request from $R(I_i)$ arrived. If $i = 0$ then clearly the number of requests belonging to $F_{\hat{z}}$ is even (in fact it is 0). Otherwise, at time $t_i$ a request belonging to $\hat{z}$ caused $\hat{z}$ to increase and then zero. Furthermore, this request moved to a counter $\not \in F_{\hat{z}}$. By the definition of our algorithm (yielding the fact that requests may leave $F_{\hat{z}}$ only if they are matched within $F_{\hat{z}}$) and due to the fact that $\hat{z}$ increased and then the request left $F_{\hat{z}}$, we are guaranteed that at time $t$ there are an even number of requests belonging to $F_{\hat{z}}$.

Let $t' \in I_i$ denote some time at which $\hat{z}$ increases. Between $t$ and $t'$ only requests from $R(I_i)$ may arrive and be given to counters within $F_{\hat{z}}$. Furthermore, no request may leave $F_{\hat{z}}$ other than if it is matched to another request from $F_{\hat{z}}$. Therefore, due to the fact that $\hat{z}$ increased at time $t'$ we are guaranteed that an odd number of requests from $R(I_i)$ have arrived and thus $t' \in I_i^{odd}$. 
\end{proof}

\begin{proof}[Proof of Lemma \ref{lemma.metric.last_interval_is_even}]
Denote $I^{\hat{z}}_{m_{\hat{z}}-1} = [t^{\hat{z}}_{m_{\hat{z}}-1}, t^{\hat{z}}_{m_{\hat{z}}})$ simply as $I_{m-1} = [t_{m-1}, t_{m})$. At time $t_{m-1}$, the number of requests within $F_{\hat{z}}$ must be even since $\hat{z}$ increased and the request from $\hat{z}$ moved to a counter $\not \in F_{\hat{z}}$. Since $t_m = \infty$ we are guaranteed that no request from $F_{\hat{z}}$ will move to a counter $\not \in F_{\hat{z}}$ during $[t_{m-1}, t_m)$. Finally, by Remark \ref{remark.MA_terminates} we are guaranteed that all requests in $F_{\hat{z}}$ match during $[t_{m-1}, t_m = \infty)$. Since the algorithm is defined such that it only matches requests on the same counter, we are guaranteed that $|R(I_{m-1}, \hat{z})|$ is even.
\end{proof}

\begin{proof}[Proof of Lemma \ref{lemma.metric.parity_of_last_interval_OPT}]
Denote $I^{\hat{z}}_{m_{\hat{z}}-1}$ simply as $I_{m-1}$ and $I_{m_{\hat{z}}-1, {\hat{z}}}^{odd}$ simply as $I_{m-1}^{odd}$. We assume the first condition fails. Let $t \in I_{m-1}^{odd}$. By Lemma \ref{lemma.metric.last_interval_is_even} we are guaranteed that the number of request that will arrive after after $t$ and will be given to $F_{\hat{z}}$ upon arrival is odd - denote this set of requests as $\mathcal{T}$. Due to the fact that the first condition failed, we are guaranteed that there exists a request in $\mathcal{T}$ that is matched to a request that does not belong to $\mathcal{T}$ and that was given to $F_{\hat{z}}$ upon arrival. Since $I_{m-1}$ is the last interval, this request must be unmatched at time $t$ guaranteeing that $\optim$ is live with respect to $I_{m-1}$ and $\hat{z}$ at time $t$.
\end{proof}

\begin{proof}[Proof of Lemma \ref{lemma.metric.middle_interval_is_odd}]
Denote $I^{\hat{z}}_i$ simply as $I_i$. Let $a \in I_i = [t_i, t_{i+1})$ denote some time before any request from $R(I_i, \hat{z})$ arrived and let $b \in [t_i, t_{i+1})$ denote some time after all requests from $R(I_i, \hat{z})$ have arrived. At time $t_i$ a request moved up from $\hat{z}$, therefore immediately after the request moved, $F_{\hat{z}}$ contained an even number of requests. Thus, this holds for $a$ as well since no requests arrived and no request left $F_{\hat{z}}$ (other than if it was matched to another request from $F_{\hat{z}}$). On the other hand, at time $b$ there are an odd number of requests in $F_{\hat{z}}$ since $\hat{z}$ increases. Due to the fact that no request can leave $F_{\hat{z}}$ during $[a,b]$ unless it is matched to another such request, we are guaranteed that $|R(I_i, \hat{z})|$ is odd.
\end{proof}

\begin{proof}[Proof of Lemma \ref{lemma.metric.middle_interval_parity_OPT}]
We assume that the first condition does not hold. Thus, we may assume that (1) all requests from $R(I^{\hat{z}}_i \cup I^{\hat{z}}_{i+1})$ were matched to requests that were given to $F_{\hat{z}}$ upon arrival. Further, we assume that the second condition does not hold for $I^{\hat{z}} = I^{\hat{z}}_i$. Therefore, there exists some time $t' \in I^{\hat{z}}_i$ such that (2) an odd number of requests from $R(I^{\hat{z}}_i)$ have arrived, (3) the requests from $R(I^{\hat{z}}_i)$ that arrived until time $t'$ have all been matched (by $\optim$'s matching) and (4) no request that arrived prior to $t'$ and that was given to $F_{\hat{z}}$ will be matched to $R(I^{\hat{z}}_i \cup I^{\hat{z}}_{i+1})$ in the future.

Let $t\in I^{\hat{z}}_{i+1}$ denote some time such that an odd number of requests from $R(I^{\hat{z}}_{i+1})$ have arrived. If at time $t$ there is an unmatched request (with respect to $\optim$'s matching) from $R(I^{\hat{z}}_i \cup I^{\hat{z}}_{i+1})$ then the lemma holds. Therefore, we assume the contrary. By (1), (2) and Lemma \ref{lemma.metric.middle_interval_is_odd}, we are guaranteed that the number of requests that arrived between $t'$ and $t$ and were given to $F_{\hat{z}}$ upon arrival, is odd. By our assumption on $t$, we are guaranteed that these requests are all matched at time $t$. Therefore, one of these requests must be matched to a request that arrived prior to $t'$ and that was given to $F_{\hat{z}}$ upon arrival - which contradicts (4), proving our lemma.
\end{proof}

\section{Deferred Algorithms from Section \ref{section.bipartite.single_location}}

\begin{algorithm}[H]
\SetAlgoNoLine
\KwIn{a set of requests arriving online; counters $\{z_k^+, z_k^-\}_{k}$ for every linear piece in the delay function, each $z_k^+$ and $z_k^-$ with a slope $\alpha_k$ and a capacity $y_k$.}
\textbf{Upon} arrival of request $r$ \textbf{do} \\
\Indp
    \textbf{if} $r$ is positive \textbf{then} add $r$ to $z_1^+$ \textbf{else} add $r$ to $z_1^-$\;
\Indm
\For{all time $t$}
{
\While{there exists a pending request $r$}{
    \If{there exist requests $r$ and $r'$ associated with $z_k^+$ and $z_k^-$}
    {
        Match $r$ and $r'$\;
        Remove $r$ and $r'$ from the counters\;
    }
    \If{If there exists a request associated with $z_k^+$ and $sur_k(t) > 0$}
    {
        Increase $z_k^+$ continuously at a rate of $\alpha_k|sur_k(t)|$\;
        \If{$z_k^+$ reaches its capacity $y_k$}
        {
            Move a request associated with $z_k^+$ to $z_{k+1}^+$\;
            Reset both $z_k^+$ and $z_k^-$ to 0\;
        }
    }
    \If{If there exists a request associated with $z_k^-$ and $sur_k(t) < 0$}
    {
        Increase $z_k^-$ at a rate of $\alpha_k|sur_k(t)|$\;
        \If{$z_k^-$ reaches its capacity $y_k$}
        {
            Move a request associated with $z_k^-$ to $z_{k+1}^-$\;
            Reset both $z_k^+$ and $z_k^-$ to 0.
        }
    }
}
}
\caption{Bipartite-Single-Location-Algorithm ($\algbpsla$)}
\label{algorithm.bipartite.single_location.BPSLA}
\end{algorithm}

\begin{remark}
Note that all counters $z_k$, are considered simultaneously and that all tie breaking in Algorithm \ref{algorithm.bipartite.single_location.BPSLA} is done arbitrarily (specifically, lines 5, 10 and 15).
\end{remark}

\begin{proof}[Proof of Lemma \ref{lemma.single_location.interval_parity_of_SLA}]
Denote $I_{i}^k$ simply as $I_i$ and let $t \in [t_i, t_{i+1})$ denote a time before any request from $R(I_i)$ arrived. If $i=0$ then clearly the number of requests associated with $z_1, \ldots, z_k$ is even (in fact it is 0). Otherwise, at time $t_i$ a request associated with $z_k$ caused $z_k$ to increase and then zero. Furthermore, this request moved to counter $z_{k+1}$. By the definition of our algorithm and due to the fact that $z_k$ increased and then the request left $z_k$, we are guaranteed that at time $t$ there are an even number of requests associated with $z_1, \ldots, z_k$. 

Let $t' \in I_i$ denote some time for which $z_k$ increases. Between $t$ and $t'$ only requests from $R(I_i)$ may arrive. Furthermore, no request may move from $z_k$ to $z_{k+1}$ (since otherwise the interval would have ended) and any requests to leave $z_1, \ldots, z_k$ must do so in pairs. Therefore, due to the fact that $z_k$ increased at time $t'$ we are guaranteed that an odd number of requests from $R(I_i)$ have arrived and thus $t' \in I_i^{odd}$.
\end{proof}

\begin{proof}[Proof of Lemma \ref{lemma.single_location.last_interval_is_even}]
Denote $I^k_{m_k-1} = [t^k_{m_k-1}, t^k_{m_k})$ simply by $I_{m-1} = [t_{m-1}, t_{m})$. At time $t_{m-1}$ when an unmatched request is moved from $z_k$ to $z_{k+1}$, the total number of requests associated with $z_{k+1}, \ldots, z_d$ must be even. 
This is because no unmatched request is moved from $z_k$ to $z_{k+1}$ after time $t_{m-1}$ and all the requests associated with $z_{k+1}, \ldots, z_d$ are formed into pairs (according to $\algsla$).
Besides, at time $t_{m-1}$, the number of requests associated with $z_1, \ldots, z_{k-1}$ must also be even.
This is because $z_k$ is increasing at this moment. 
According to $\algsla$, no request in $z_j$ ($j < k$) is unmatched.
Therefore, the number of requests already arrived at time $t_{m-1}$ (i.e., the requests besides $R(I_{m-1})$) must be even. 
Since $t_m = \infty$, this guarantees that $|R(I_{m-1})|$ is also even.
\end{proof}

\begin{proof}[Proof of Lemma \ref{lemma.single_location.last_interval_parity_of_OPT}]
Denote $I_{m_k-1}^k$ simply as $I_{m-1}$ and $I_{m_k-1,k}^{odd}$ simply as $I_{m-1}^{odd}$. Let $t \in I^{odd}_{m-1}$ denote a time such that an odd number of requests from $R(I_{m-1})$ arrived. Due to the fact that the defined time intervals partition our entire instance and that $I_{m-1}$ is the last interval, we are guaranteed that the number of requests to arrive prior to $t$ is odd and thus there must be an unmatched request with respect to $\optim$'s matching, at time $t$, that will be matched to $R(I_{m-1})$. Therefore, $\optim$ is live with respect to $I_{m-1}$ at time $t$.
\end{proof}

\begin{proof}[Proof of Lemma \ref{lemma.single_location.interval_is_odd}]
Denote $I_i^k$ simply as $I_i$. Let $a \in [t_i, t_{i+1})$ denote some time before any request from $R(I_i)$ has arrived and let $b \in [t_i, t_{i+1})$ denote some time after all requests from $R(I_i)$ have arrived. At time $a$ and at time $b$ the parity of requests in $z_{k+1}, \ldots, z_d$ must be the same (since we are guaranteed by the definition of $\{t_i\}_i$ that no request moved during this time to $z_{k+1}$). Furthermore, we are guaranteed that the number of requests associated with $z_1, \ldots, z_{k-1}$ at both times are even (since both at time $t_i$ and at time $b$, counter $z_k$ increased and no request arrived during $(t_i, a)$). Finally, at time $a$, $z_k$ does not contain a request while at time $b$ it does and all matched requests must be even. Thus, overall, the parity of the number of overall requests to arrive at each point is different resulting in the fact that $|R(I_i)|$ is odd.
\end{proof}

\begin{proof}[Proof of Lemma \ref{lemma.single_location.middle_interval_parity_OPT}]
We denote $I_i^k$ and $I_{i+1}^k$ simply as $I_i$ and $I_{i+1}$. We assume that the condition does not hold for $I_i$ (since otherwise the lemma would be proven). Therefore, there exists some time $t' \in I_i$ such that (1) an odd number of requests from $R(I_i)$ have arrived, (2) the requests from $R(I_i)$ that arrived until time $t'$ have all been matched by $\optim$'s matching and (3) no request that arrived prior to $t'$ will be matched to $R(I_i \cup I_{i+1})$ in the future.

Let $t \in I_{i+1}$ denote some time such that an odd number of requests from $R(I_{i+1})$ arrived. If at time $t$ there is an unmatched request (with respect to $\optim$'s matching) from $R(I_i \cup I_{i+1})$ then the lemma holds. Therefore, we assume the contrary. By (1) and Lemma \ref{lemma.single_location.interval_is_odd} the number of requests that arrived between $t'$ and $t$ is even. Therefore, one of the requests that arrived between $t'$ and $t$ would have had to have been matched to a request that arrived before $t'$. However, this leads to a contradiction due to (2) and (3).
\end{proof}

\section{Deferred Algorithms from Section \ref{section.bipartite.metric}}

\begin{algorithm}[H]
\SetAlgoNoLine
\KwIn{a set of requests arriving online; an HST $T$; a DAG $F(T, D)$, of counters.}
\textbf{Upon} arrival of request $r$ at $v \in V(T)$ \textbf{do} \\
\Indp
    Let $e$ denote $v$'s single edge in $T$\;
    Let $z$ denote the (single) leaf encountered while moving downward from $z_e$ in $F$\;
    \textbf{if} $r$ is positive \textbf{then} add $r$ to $z^+$ \textbf{else} add $r$ to $z^-$\;
\Indm
\For{all time $t$}
{
\While{there exists a pending request $r$}{
    \If{there exist requests $r$ and $r'$ associated with $z^+$ and $z^-$}
    {
        Match $r$ and $r'$\;
        Remove $r$ and $r'$ from the counters\;
    }
    \If{If there exists a request associated with $z^+$ and $sur_z(t) > 0$}
    {
        Increase $z^+$ continuously over time at a rate of $\alpha_z|sur_z(t)|$\;
        \If{$z^+$ reaches its capacity $y_z$}
        {
            Move a request associated with $z^+$ to its parent in $F$\;
            Reset both $z^+$ and $z^-$ to 0\;
        }
    }
    \If{If there exists a request associated with $z^-$ and $sur_z(t) < 0$}
    {
        Increase $z^-$ continuously over time at a rate of $\alpha_z|sur_z(t)|$\;
        \If{$z^-$ reaches its capacity $y_z$}
        {
            Move a request associated with $z^-$ to its parent in $F$\;
            Reset both $z^+$ and $z^-$ to 0.
        }
    }
}
}
\caption{Bipartite-Metric-Algorithm ($\algbpma$)}
\label{algorithm.bipartite.metric.BPMA}
\end{algorithm}

\begin{remark}
Note that all counters $z$, are considered simultaneously and that all tie breaking in Algorithm \ref{algorithm.bipartite.metric.BPMA} is done arbitrarily (specifically, lines 5, 10 and 15).
\end{remark}

\begin{proof}[Proof of Lemma \ref{lemma.bipartite.metric.yossis_lemma}]
Let $\mathcal{E} = \mathcal{E}^z_{I_i}$. Observe that $f_z(t) = sur^*_z(t) - sur_z(t)$ can change if and only if: (1) $\algbpma$ moves a request from $z$ to its parent in $F$, (2) $\delta^*_t(r)$ changed from $k(z)$ to $k(z)+1$ $t \in I$, (3) $r$ is matched (upon arrival) at time $t\in I$ by $\algsol$ to a request $r'$ such that $\delta^*_t(r') \geq k(z)$ or (4) $\algsol$ matched the request at time $t \in I$ through an edge that is an ancestor of $z$ in $F$.

A change of type (1) can only happen once during the phase and changes of types (2), (3) and (4) happen exactly $\mathcal{E}$ times. Therefore, $|\Delta_i(f_z(t))| \leq \mathcal{E}+1$. Note that always, $\Delta_i|f_z(t)| \leq |\Delta_i(f_z(t))|$ and thus overall,

\begin{align}
\label{equation.bipartite.metric.yossis_lemma_1}
\Delta_i|f_z(t)| \leq |\Delta_i(f_z(t))| \leq \mathcal{E}+1.
\end{align}

We first consider the case that the interval is not last (i.e., $I_i  = [t_i, t_{i+1}) \neq [t_{m-1}, t_m = \infty)$). Assume w.l.o.g. that the phase ends once the algorithm moves a positive request from $z$ to its ancestor (the second case is proven symmetrically). We consider two cases, either $\Delta_i|f_z(t)| = \mathcal{E}+1$ or not. We first assume that it is the former case. 

\begin{itemize}
    \item $\bm{\Delta_i|f_z(t)| = |\Delta_i(f_z(t))| = \mathcal{E}+1}$: As noted earlier there are only 4 types of events that may cause $f_z(t)$ to change. By our assumption that the phase ends once the algorithm moves a positive request from $z$ to its ancestor we are guaranteed that the event of type (1) causes $f_z(t)$ to increase by 1. Since there are at most $\mathcal{E}+1$ events that may cause $f_z(t)$ to change and due to the fact that $|\Delta_i(f_z(t))| = \mathcal{E}+1$ we are guaranteed that $f_z(t)$ may only increase during the phase and that $\Delta_i(f_z(t)) = \mathcal{E}+1$. The following observation is always true.
    \begin{observation}
    $\Delta_i|f_z(t)| = |\Delta_i(f_z(t))| \Rightarrow sign(f_z(t_i)) = sign(\Delta_i(f_z(t)))$.
    \end{observation}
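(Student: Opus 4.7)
The plan is to recognize this observation as the equality case of the triangle inequality on $\mathbb{R}$. Writing $a := f_z(t_i)$ and $c := \Delta_i(f_z(t)) = f_z(t_{i+1}) - f_z(t_i)$, so that $f_z(t_{i+1}) = a+c$, the hypothesis $\Delta_i|f_z(t)| = |\Delta_i(f_z(t))|$ unpacks to $|a+c| - |a| = |c|$, i.e., $|a+c| = |a| + |c|$. This is exactly the equality case of the triangle inequality applied to the pair $a,c$.

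From here I would invoke the standard equality characterization of the triangle inequality on $\mathbb{R}$: $|x+y| = |x|+|y|$ holds if and only if $xy \geq 0$, i.e., $x$ and $y$ share a sign (with the convention that $0$ is compatible with either sign). Applying this with $x=a$ and $y=c$ yields $\text{sign}(a) = \text{sign}(c)$, which is precisely $\text{sign}(f_z(t_i)) = \text{sign}(\Delta_i(f_z(t)))$, as claimed.

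If one prefers a self-contained argument that does not cite the equality case as a black box, a short three-case analysis suffices: when $a$ and $c$ are both nonnegative or both nonpositive, one checks directly that $|a+c| = |a|+|c|$ and that the signs agree; when they have strictly opposite nonzero signs, $|a+c| < |a|+|c|$, contradicting the hypothesis. Since $f_z(t) = sur^*_z(t) - sur_z(t)$ is integer-valued (hence real) throughout the execution, no further subtleties arise. I do not anticipate any real obstacle — the observation is essentially a one-line fact, serving the surrounding lemma only to let the authors conclude $\text{sign}(f_z(t_i)) = +1$ in the branch where the change $\Delta_i(f_z(t))$ is known to equal the positive quantity $\mathcal{E}+1$.
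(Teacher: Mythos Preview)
Your proposal is correct. The paper does not actually supply a proof of this observation---it simply asserts that ``the following observation is always true'' and moves on---so your reduction to the equality case of the real triangle inequality (with $a=f_z(t_i)$ and $c=\Delta_i(f_z(t))$, yielding $|a+c|=|a|+|c|$ and hence $ac\geq 0$) is exactly the intended one-line justification.
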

    Therefore, $f_z(t_i) \geq 0$ and throughout the phase we have that $sur_z^*(t) \geq sur_z(t)$. Thus, whenever $z^+(t)$ increases, we are guaranteed that $sur_z^*(t) \geq sur_z(t) > 0$. Note that always $o_z(t) \geq sur^*_z(t)$ and that whenever $z_z^+$ increases, it does so at a rate of $\alpha_z \cdot sur_z(t)$. Therefore, whenever $z^+$ increases we are guaranteed that,
    \begin{align*}
    o_z(t) \geq sur^*_z(t) \geq sur_z(t) >0.
    \end{align*}
    
    Recall that the phase ended once a positive request left the counter. Therefore, the overall increase in $z^+$ is exactly $y_z$. Thus, if we let $\mathcal{J}$ denote all times for which $z^+$ increased, then,
    \begin{align}
    \label{equation.bipartite.metric.yossis_lemma_2}
    y_z =
    \int_{t \in \mathcal{J}} \alpha_z sur_z(t)dt \leq
    \int_{t \in \mathcal{J}} \alpha_z o_z(t)dt \leq 
    \int_{t \in \mathcal{J}} \alpha_z o_z(t)dt,
    \end{align}
    where the last inequality follows from the positivity of $\alpha_z o_z(t)$.
    
    Since $\Delta_i(\phi_z) = 2y_z\Delta_i(|f_z(t)|)$, we are guaranteed that,
    \begin{align*}
    \int_{t \in I_i}z'(t)dt + \Delta_i(\phi_z) \leq 
    2y_z + \Delta_i(\phi_z) = 
    2y_z + 2y_z(\mathcal{E}+1) \leq 
    4\int_{t \in I_i}\alpha_z o_z(t)dt + 2y_z \mathcal{E},
    \end{align*}
    where the first inequality follows due to the fact that during a phase $z'(t)$ may increase by at most $2y_z$ ($y_z$ for $z^+$ and at most $y_z$ for $z^-$) and the second inequality follows from equation (\ref{equation.bipartite.metric.yossis_lemma_2}).

    \item $\bm{\Delta_i|f_z(t)| < \mathcal{E}+1}$: Due to the fact that $f_z(t)$ changes value exactly $\mathcal{E}+1$ times and therefore changes parity $\mathcal{E}+1$ times, we are guaranteed that in fact $\Delta_i|f_z(t)| < \mathcal{E}$. Therefore, since $\Delta_i(\phi_z) = 2y_z\Delta_i(|f_z(t)|)$,
    
    \begin{align*}
    \int_{t \in I_i}z'(t)dt + \Delta_i(\phi_z) \leq 
    2y_z + \Delta_i(\phi_z) \leq 
    2y_z + 2y_z(\mathcal{E}-1) \leq 
    4\int_{t \in I_i}\alpha_z o_z(t)dt + 2 y_z \mathcal{E},
    \end{align*}
    where the last inequality follows simply from $\alpha_z o_z(t)$'s positivity.
\end{itemize}

\noindent Now that we have proven the lemma for any phase other than the last, we consider the phase $I_{m-1} = [t_{m-1}, t_m = \infty)$. At the end of the phase we have that $sur^*_z(t_m) = sur_z(t_m)$ and therefore, $\Delta_{m-1}(\phi_z) \leq 0$. If $\mathcal{E} > 0$ then we are guaranteed that
\begin{align*}
\int_{t \in I_{m-1}}z'(t)dt + \Delta_{m-1}(\phi_z) \leq 2y_z \leq 2y_z \mathcal{E} \leq 4\int_{t \in I_{m-1}}\alpha_z o_z(t)dt + 2 y_z \mathcal{E}.
\end{align*}

If $\mathcal{E} = 0$ then $sur^*_z(t)$ remains constant throughout the phase and therefore $\phi_z(t) = 0$ throughout the phase. Therefore, $sur^*_z(t) = sur_z(t)$ whenever $z^+$ or $z^-$ increase. Furthermore, each of the requests counted by $o_z(t)$ incurs a momentary delay of at least $\alpha_z$. Since $|sur^*_z(t)| \leq o_z(t)$ and $z'(t) \leq \alpha_z|sur_z(t)|$ we are guaranteed that in fact $\int_{t \in I_{m-1}}z'(t)dt \leq \int_{t \in I_{m-1}}\alpha_z o_z(t)dt$. Therefore,
\begin{align*}
\int_{t \in I_{m-1}}z'(t)dt + \Delta_{m-1}(\phi_z) = \int_{t \in I_{m-1}}z'(t)dt \leq \int_{t \in I_{m-1}}\alpha_z o_z(t)dt \leq 4\int_{t \in I_{m-1}}\alpha_z o_z(t)dt + 2 y_z \mathcal{E},
\end{align*}
which completes the proof for the last phase as well.
\end{proof}

\end{document}